\title{The Complexity Landscape of\\Fixed-Parameter Directed Steiner Network 
Problems\footnote{This work was supported by ERC Starting Grant PARAMTIGHT 
(No.~280152), ERC Consolidator Grant SYSTEMATICGRAPH (No.~725978), the Czech 
Science Foundation GA{\v C}R (grant \#19-27871X). A preliminary 
version of this paper~\cite{DBLP:conf/icalp/FeldmannM16} appeared in the 
proceedings of the 43nd International Colloquium on Automata, Languages, and 
Programming (ICALP), 
2016.}}
\date{}
\author[1]{Andreas Emil Feldmann}
\author[2]{D\'aniel Marx}
\affil[1]{Department of Applied Mathematics, Charles University, Prague,
Czechia. \texttt{feldmann.a.e@gmail.com}}
\affil[2]{CISPA Helmholtz Center for Information Security, Saarbrücken, Germany,
\texttt{marx@cispa.saarland}}
\begin{document}
\renewcommand*{\sectionautorefname}{Section}
\renewcommand*{\subsectionautorefname}{Section}
\renewcommand*{\algorithmautorefname}{Algorithm}

\maketitle

\begin{abstract}
Given a directed graph $G$ and a list $(s_1,t_1)$, $\dots$, $(s_d,t_d)$ of 
terminal pairs, the \pname{Directed Steiner Network} problem asks for a 
minimum-cost subgraph of $G$ that contains a directed $s_i\to t_i$ path for 
every $1\le i \le d$. The special case \pname{Directed Steiner Tree} (when we 
ask for paths from a root $r$ to terminals $t_1$, $\dots$, $t_d$) is known to be 
fixed-parameter tractable parameterized by the number of terminals, while the 
special case \pname{Strongly Connected Steiner Subgraph} (when we ask for a path 
from every $t_i$ to every other $t_j$) is known to be W[1]-hard parameterized by 
the number of terminals. We systematically explore the complexity landscape of 
directed Steiner problems to fully understand which other special cases are FPT 
or W[1]-hard. Formally, if $\mc{H}$ is a class of directed graphs, then we look 
at the special case of \pname{Directed Steiner Network} where the list 
$(s_1,t_1)$, $\dots$, $(s_d,t_d)$ of demands form a directed graph that is a 
member of $\mc{H}$. Our main result is a complete characterization of the 
classes $\mc{H}$ resulting in fixed-parameter tractable special cases: we show 
that if every pattern in $\mc{H}$ has the combinatorial property of being 
``transitively equivalent to a bounded-length caterpillar with a bounded number 
of extra edges,'' then the problem is FPT, and it is W[1]-hard for {\em every} 
recursively enumerable $\mc{H}$ not having this property. This complete 
dichotomy unifies and generalizes the known results showing that \pname{Directed 
Steiner Tree} is FPT [Dreyfus and Wagner, \textit{Networks} 1971], 
\pname{$q$-Root Steiner Tree} is FPT for constant $q$ [Such\'y, \textit{WG} 
2016], \pname{Strongly Connected Steiner Subgraph} is W[1]-hard [Guo et al., 
\textit{SIAM J. Discrete Math.} 2011], and \pname{Directed Steiner Network} is 
solvable in polynomial-time for constant number of terminals [Feldman and Ruhl, 
\textit{SIAM J. Comput.} 2006], and moreover reveals a large continent of 
tractable cases that were not known before.

\end{abstract}

\section{Introduction}

\pname{Steiner Tree} is a basic and well-studied problem of
combinatorial optimization: given an edge-weighted undirected graph
$G$ and a set $R\subseteq V(G)$ of terminals, it asks for a
minimum-cost tree connecting the terminals.  The problem is well known
to be NP-hard, in fact, it was one of the 21 NP-hard problems
identified by Karp's seminal paper~\cite{MR51:14644}.  There is a
large literature on approximation algorithms for \pname{Steiner Tree}
and its variants, resulting for example in constant-factor
approximation algorithms for general graphs and approximation schemes
for planar graphs
\citep[see][]{DBLP:journals/talg/BorradaileKM15,
DBLP:journals/talg/DemaineHK14,
DBLP:journals/jacm/ByrkaGRS13,DBLP:conf/icalp/BateniHL13,
DBLP:journals/algorithmica/BateniH12,DBLP:journals/jacm/BateniHM11,
DBLP:journals/siamcomp/ArcherBHK11,
DBLP:journals/talg/BorradaileKM09,
DBLP:journals/siamdm/RobinsZ05,DBLP:conf/soda/RajagopalanV99,
DBLP:journals/jal/KleinR95,DBLP:journals/siamcomp/AgrawalKR95,
DBLP:journals/networks/DreyfusW71}.
From the viewpoint of parameterized algorithms, the first result is the
classic dynamic-programming algorithm of Dreyfus and 
Wagner~\cite{DBLP:journals/networks/DreyfusW71} from 1971, which solves the
problem with $k=|R|$ terminals in time $3^k\cdot n^{O(1)}$. This shows that the 
problem is fixed-parameter tractable~\cite{pc-book,downey2013fpt} (FPT) 
parameterized by the number of terminals, i.e., there is an algorithm to solve 
the problem in time $f(k)\cdot n^{O(1)}$ for some computable function~$f$. In 
this paper we will only be concerned with this well-studied parameter $k=|R|$. A 
more recent algorithm by Fuchs et al.~\cite{fuchs2007dynamic} obtains runtime 
$(2+\delta)^k\cdot n^{O(1)}$ for any constant $\delta>0$. For graphs with 
polynomial edge weights the running time was improved to $2^k\cdot n^{O(1)}$ by 
Nederlof~\cite{DBLP:journals/algorithmica/Nederlof13} using the technique of 
fast subset convolution. \pname{Steiner Forest} is the generalization where the 
input contains an edge-weighted graph $G$ and a list $(s_1,t_1), \dots, 
(s_d,t_d)$ of pairs of terminals and the task is to find a minimum-cost subgraph 
containing an $s_i$--$t_i$ path for every $1\le i \le d$. The observation that 
the connected components of the solution to \pname{Steiner Forest} induces a 
partition on the set $R=\{s_1,\dots,s_d,t_1,\dots,t_d\}$ of terminals such that 
each class of the partition forms a tree, implies the fixed-parameter 
tractability of \pname{Steiner Forest} parameterized by $k=|R|$: we can solve 
the problem by for example trying every partition of $R$ and invoking a 
\pname{Steiner Tree} algorithm for each class of the partition.

On directed graphs, Steiner problems can become significantly harder, and while 
there is a richer landscape of variants, only few results are known 
\cite{DBLP:journals/siamdm/GuoNS11,DBLP:journals/talg/ChekuriEGS11,
DBLP:journals/siamcomp/FeldmanR06,DBLP:journals/jal/CharikarCCDGGL99,
DBLP:journals/algorithmica/Zelikovsky97,DBLP:journals/siamcomp/ChitnisFHM20,
DBLP:conf/iwpec/ChitnisEHKKS14,DBLP:conf/esa/ChitnisFM18, 
DBLP:conf/stacs/EibenKPS19}.  A natural and well-studied generalization of 
\pname{Steiner Tree} to directed graphs is \pname{Directed Steiner Tree (DST)}, 
where an arc-weighted directed graph $G$ and terminals $r,t_1,\dots, t_d$ are 
given and the task is to find a minimum-cost subgraph containing an $r\to t_i$ 
path for every $1\le i \le d$. Using essentially the same techniques as in the 
undirected 
case~\cite{DBLP:journals/algorithmica/Nederlof13,fuchs2007dynamic,
DBLP:journals/networks/DreyfusW71}, one can show that this problem is also FPT 
parameterized by the number of terminals $k=d+1$. An equally natural 
generalization of \pname{Steiner Tree} to directed graphs is the \pname{Strongly 
Connected Steiner Subgraph (SCSS)} problem, where an arc-weighted directed graph 
$G$ with terminals $t_1, \dots, t_k$ is given, and the task is to find a 
minimum-cost subgraph containing a $t_i\to t_j$ path for any $1\le i,j\le k$ 
with $i\neq j$. Guo et al.~\cite{DBLP:journals/siamdm/GuoNS11} showed that, 
unlike \pname{DST}, the \pname{SCSS} problem is W[1]-hard parameterized by $k$ 
(see also~\cite{DBLP:journals/siamcomp/ChitnisFHM20}), and is thus unlikely to 
be FPT for this parameter (for more background on parameterized complexity 
theory see~\cite{flum2006parameterized}). A common generalization of \pname{DST} 
and \pname{SCSS} is the \pname{Directed Steiner Network (DSN)} problem (also 
called \pname{Directed Steiner Forest}\footnote{Note however that unlike 
\pname{Steiner Forest}, the solution to \pname{DSN} is not necessarily a forest, 
which justifies the use of the alternative name used here.} or 
\pname{Point-to-Point Connection}), where an arc-weighted directed graph $G$ and 
a list $(s_1,t_1), \dots, (s_d,t_d)$ of terminal pairs are given and the task is 
to find a minimum-cost subgraph containing an $s_i\to t_i$ path for every $1\le 
i \le d$. Being a generalization of \pname{SCSS}, the \pname{Directed Steiner 
Network} problem is also W[1]-hard for the number of terminals $k$ in the set 
$R=\{s_1,\dots,s_d,t_1,\dots,t_d\}$, but Feldman\footnote{We note that Jon 
Feldman (co-author of~\cite{DBLP:journals/siamcomp/FeldmanR06}) is not the same 
person as Andreas Emil Feldmann (co-author of this paper).} and 
Ruhl~\cite{DBLP:journals/siamcomp/FeldmanR06} showed that the problem is 
solvable in time $n^{O(d)}$, that is, in polynomial time for every constant 
$d=O(k^2)$.

\newcommand{\HDSN}{\pname{$\mc{H}$-DSN}}
Besides \pname{Directed Steiner Tree}, what other special cases of 
\pname{Directed Steiner Network} are fixed-parameter tractable? Our main result 
gives a complete map of the complexity landscape of directed Steiner problems 
on general input graphs, 
precisely describing all the \mbox{FPT/W[1]-hard} variants and revealing highly 
non-trivial generalizations of \pname{Directed Steiner Tree} that are still 
tractable. Our results are expressed in the following formal framework.  The 
pairs $(s_1,t_1), \dots, (s_d,t_d)$ in the input of \pname{DSN} can be 
interpreted as a directed (unweighted) \emph{pattern graph} on a set $R$ of 
terminals. If this pattern graph is an out-star, then the problem is precisely 
\pname{DST}; if it is a bidirected clique, then the problem is precisely 
\pname{SCSS}. More generally, if $\mc{H}$ is any class of graphs, then we define 
the \pname{Directed Steiner $\mc{H}$-Network ($\mc{H}$-DSN)} problem as the 
restriction of \pname{DSN} where the pattern graph is a member of~$\mc{H}$. 
That is, the input of \HDSN\ is an arc-weighted directed graph $G$, a set 
$R\subseteq V(G)$ of terminals, and an unweighted directed graph $H\in \mc{H}$ 
on $R$; the task is to find a minimum-cost subgraph $N\subseteq G$ 
(``network'') such that $N$ contains an $s\to t$ path for every $st\in E(H)$.

We give a complete characterization of the classes $\mc{H}$ for which \HDSN\ is 
FPT or \mbox{W[1]-hard}. We need the following definition of 
``almost-caterpillar graphs'' to describe the borderline between the easy and 
hard cases (see \autoref{fig:caterpillar}).

\begin{figure}[t]
\centering
\includegraphics[scale=1.5]{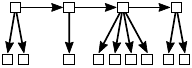}
\hspace{2cm}
\includegraphics[scale=1.5]{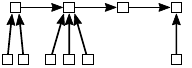}
\caption{Two \mbox{$4$-cater}\-pillars: an out-cater\-pillar (on the left) and 
an in-cater\-pillar (on the right).}
\label{fig:caterpillar}
\end{figure}

\begin{dfn}\label{dfn:caterpillar}
A \emph{$\lambda_0$-caterpillar} graph is constructed as follows. Take a 
directed path $(v_1,\ldots,v_{\lambda_0})$ from $v_1$ to $v_{\lambda_0}$, and 
let $W_1,\ldots,W_{\lambda_0}$ be pairwise disjoint vertex sets such that 
$v_i\in W_i$ for each $i\in\{1,\ldots,\lambda_0\}$. Now add edges such that 
either every $W_i$ forms an out-star with root $v_i$, or every $W_i$ forms an 
in-star with root~$v_i$. In the former case we also refer to the resulting 
$\lambda_0$-caterpillar as an \emph{out-caterpillar}, and in the latter as an 
\emph{in-caterpillar}. A $0$-caterpillar is the empty graph. The class 
$\mc{C}_{\lambda,\delta}$ contains all directed graphs $H$ such that there is a 
set of edges $F\subseteq E(H)$ of size at most $\delta$ for which the remaining 
edges $E(H)\setminus F$ span a $\lambda_0$-caterpillar for some 
$\lambda_0\leq\lambda$.
\end{dfn}

If there is an $s\to t$ path in the pattern graph $H$ for two terminals
$s,t\in R$, then adding the edge $st$ to $H$ does not change the
problem: connectivity from $s$ to $t$ is already implied by $H$, hence
adding this edge does not change the feasible solutions. That is,
adding a transitive edge does not change the solution space and hence
it is really only the transitive closure of the pattern $H$ that
matters. We say that two pattern graphs are \emph{transitively equivalent} if 
their transitive closures are isomorphic. We denote the class of patterns that 
are transitively equivalent to some pattern of $\mc{C}_{\lambda,\delta}$ by 
$\mc{C}^*_{\lambda,\delta}$. Our main result is a sharp dichotomy saying that 
\HDSN\ is FPT if every pattern of $\mc{H}$ is transitively equivalent to an 
almost-caterpillar graph and it is W[1]-hard otherwise. In order to provide 
reductions for the hardness results we need the technical condition that the 
class of patterns is \emph{recursively enumerable}, i.e., there is some 
algorithm, which enumerates all members of the class. In the FPT cases, we make 
the algorithmic result more precise by stating a running time that is expressed 
as a function of $\lambda$, $\delta$, and the \emph{vertex cover number} $\tau$ 
of the input pattern $H$, i.e., $\tau$ is the size of the smallest vertex subset 
$W$ of $H$ such that every edge of $H$ is incident to a vertex of $W$.

\begin{thm}\label{thm:main}
Let $\mc{H}$ be a recursively enumerable class of patterns.
\begin{enumerate}
\item If there are constants $\lambda$ and $\delta$ such that 
$\mc{H}\subseteq\mc{C}^*_{\lambda,\delta}$, then \HDSN\ with parameter $k=|R|$ 
is FPT and can be solved in $2^{O(k+\tau\omega\log\omega)}n^{O(\omega)}$ time, 
where $\omega=(1+\lambda)(\lambda+\delta)$ and $\tau$ is the vertex cover number 
of the given input pattern~$H\in\mc{H}$.
\item Otherwise, if there are no such constants $\lambda$ and $\delta$, 
then the problem is \mbox{\textup{W[1]}-hard} for parameter~$k$.
\end{enumerate}
\end{thm}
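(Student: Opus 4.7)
Since transitively equivalent patterns induce identical constraints on the solution, we may assume that the input pattern $H$ belongs to $\mc{C}_{\lambda,\delta}$ itself, and decompose $H$ into a $\lambda_0$-caterpillar with spine $v_1,\ldots,v_{\lambda_0}$ together with a set $F$ of at most $\delta$ extra edges. The spine vertices and the endpoints of $F$ form a vertex cover $U$ of $H$ of size $O(\lambda+\delta)$, and every terminal outside $U$ is a leaf of a single star $W_i$. My plan is a two-level algorithm. In the outer level we guess, in $n^{O(\omega)}$ time, the images in $G$ of the structural skeleton: the spine images, the endpoints of $F$, and the directed Steiner paths connecting them in the eventual solution. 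In the inner level, once the skeleton is fixed, each star $W_i$ decouples into an independent \pname{Directed Steiner Tree} (or its in-star variant) rooted at the chosen image of $v_i$, and is solved by the classical Dreyfus--Wagner dynamic program in $2^{O(|W_i|)}n^{O(1)}$ time; summing over stars yields the $2^{O(k)}$ factor. The additional $2^{O(\tau\omega\log\omega)}$ overhead accounts for enumerating how the remaining terminals are assigned to the at most $\tau$ vertex-cover vertices of $H$, and for matching the $F$-arcs to pieces of the skeleton.

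\textbf{W[1]-hardness direction.} If $\mc{H}\not\subseteq\mc{C}^*_{\lambda,\delta}$ for any constants, then $\mc{H}$ contains arbitrarily complex patterns whose transitive closures cannot be written as a bounded-length caterpillar plus boundedly many extra arcs. The plan is first to prove a combinatorial dichotomy lemma extracting from any such $\mc{H}$ one of a short list of canonical \emph{unavoidable obstruction families} — for example long disjoint directed paths of demands, large antichains whose sources and sinks cannot share a common short spine, or \pname{SCSS}-like bidirected substructures. Then, for each obstruction type, we give a parameterized reduction from \pname{Multicolored Clique} (equivalently, \pname{Partitioned Subgraph Isomorphism}) to \HDSN\ that generalises the existing \pname{SCSS} reduction of Guo et al.: we build an arc-weighted host graph in which satisfying all demand pairs forces the solution to encode the choice of a clique in the source instance, and the enumerability of $\mc{H}$ allows us to locate suitable patterns within $\mc{H}$ in time depending only on the reduction parameter.

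\textbf{Main obstacle.} The hardest step is the combinatorial part of the W[1]-hardness direction: the definition of $\mc{C}^*_{\lambda,\delta}$ intertwines the freedom to take transitive closures with the freedom to delete $\delta$ edges, so showing that its complement contains one of finitely many clean obstruction families requires a careful structural analysis, likely via Ramsey-type arguments on directed patterns. On the algorithmic side, the main subtlety is correctly handling the $\delta$ extra arcs inside the dynamic program: these arcs can force interactions between otherwise independent stars, which is what blows the enumeration cost from $n^{O(\lambda)}$ to $n^{O(\omega)}$ and is the source of the $(1+\lambda)(\lambda+\delta)$ exponent.
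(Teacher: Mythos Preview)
Your FPT plan has a genuine gap. The two-level scheme --- guess a skeleton in $n^{O(\omega)}$ time, then solve each star as an independent \pname{DST} --- breaks at both levels. First, ``guessing the directed Steiner paths connecting them in the eventual solution'' is not an $n^{O(\omega)}$ operation: those paths can have unbounded length, so you cannot enumerate them by guessing $O(\omega)$ vertices. Second, and more fundamentally, once the skeleton is fixed the stars do \emph{not} decouple. The optimal solution will in general share edges among the arborescences serving different stars $S_i$, $S_j$ and with the skeleton itself; solving them independently either double-counts shared edges (losing optimality) or forces disjointness (missing the optimum). Nothing in your outline explains why independence should hold, and in fact it does not.

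The paper's route is quite different. It first proves a purely combinatorial fact about the \emph{solution}, not the pattern: any minimal solution $M$ to a pattern in $\mc{C}^*_{\lambda,\delta}$ has treewidth at most $7(1+\lambda)(\lambda+\delta)$. This goes via a structure theorem (\autoref{thm:structure}) showing that $M$ splits into a ``core'' $M^c$ --- itself a minimal solution to a sub-pattern with at most $(1+\lambda)(\lambda+\delta)$ edges --- plus a forest of arborescences each touching $M^c$ only at its root; combined with a cutwidth bound for minimal solutions to $d$-edge patterns (\autoref{lem:cutwidth}), this yields the treewidth bound. The algorithm (\autoref{crl:alg}) then performs dynamic programming over the unknown bounded-width tree decomposition of the solution, guessing bags of size $\omega+1$ (the source of $n^{O(\omega)}$) and recording, for each partial solution, which terminals it contains and how the bag connects to the at most $2\tau$ star-roots of $H$ (the source of $2^{O(k+\tau\omega\log\omega)}$). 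The crucial missing idea in your proposal is this bounded-treewidth property of minimal solutions; without it there is no basis for the $n^{O(\omega)}$ enumeration.

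Your hardness outline is strategically on target --- close the class, extract a short list of obstruction families, and reduce from \pname{Multicoloured Clique} for each --- and you correctly flag the combinatorial characterization as the hard step. The paper's actual obstructions are not the antichain or disjoint-path families you guess, but rather: all directed cycles (yielding \pname{SCSS}), and four ``diamond'' families (two in-stars or two out-stars with a common leaf set, in pure and flawed variants). The characterization (\autoref{thm:char}) is obtained after first closing $\mc{H}$ under vertex identification and transitive equivalence, which is what makes the obstruction list finite and clean.
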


In \autoref{thm:main}(1), the reason for the slightly complicated runtime is 
that the algorithm was optimized to match the runtime of some previous 
algorithms in special cases. In particular, invoking \autoref{thm:main} with 
specific classes $\mc{H}$, we can obtain algorithmic or hardness results for 
specific problems. For example, we may easily recover the following facts:
\begin{itemize}
\item If $\mc{H}_\textup{DST}$ is the class of all out-stars, then 
$\mc{H}_\textup{DST}$-DSN is precisely the \pname{DST} problem. As 
$\mc{H}_\textup{DST}\subseteq \mc{C}^*_{1,0}$ holds, \autoref{thm:main}(1) 
recovers the fact that \pname{DST} can be solved in time 
$2^{O(k)}n^{O(1)}$ and is hence FPT parameterized by the number $k$ of 
terminals~\cite{DBLP:journals/algorithmica/Nederlof13,fuchs2007dynamic,
DBLP:journals/networks/DreyfusW71}.
\item If $\mc{H}_\textup{SCSS}$ is the class of all bidirected cliques (or 
equivalently the class of all directed cycles), then 
$\mc{H}_\textup{SCSS}$-DSN is precisely the \pname{SCSS} problem. One can 
observe that $\mc{H}_\textup{SCSS}$ is not contained in 
$\mc{C}^*_{\lambda,\delta}$ for any constants $\lambda,\delta$ (for example, 
because every graph in $\mc{C}_{\lambda,\delta}$ has at most $\lambda+2\delta$ 
vertices with both positive in-degree and positive out-degree, and this remains 
true also for the graphs in $\mc{C}^*_{\lambda,\delta}$). Hence 
\autoref{thm:main}(2) recovers the fact that \pname{SCSS} is 
W[1]-hard~\cite{DBLP:journals/siamdm/GuoNS11}. Note that any pattern of 
$\mc{H}_\textup{SCSS}$ is transitively equivalent to a bidirected star with less 
than $2k$ edges, so that $\mc{H}_\textup{SCSS}\subseteq \mc{C}^*_{0,2k}$. Since 
a star has vertex cover number $\tau=1$, for \pname{SCSS} our algorithm in 
\autoref{thm:main}(1) recovers the running time of $2^{O(k\log 
k)}n^{O(k)}=n^{O(k)}$ given by Feldman and 
Ruhl~\cite{DBLP:journals/siamcomp/FeldmanR06}. We note however, that the 
constants in the degree of the polynomial are larger in our case compared 
to~\cite{DBLP:journals/siamcomp/FeldmanR06}.
\item Let $\mc{H}_d$ be the class of directed graphs with at most $d$
  edges. As $\mc{H}_d\subseteq \mc{C}^*_{0,d}$ holds,
  \autoref{thm:main}(1) recovers the fact that \pname{Directed Steiner 
Network} with at most $d$ demands is polynomial-time solvable for every 
constant $d$~\cite{DBLP:journals/siamcomp/FeldmanR06}.
\item Recently, Such\'y \cite{suchy-steiner} studied the following 
generalization of \pname{DST} and \pname{SCSS}: in the \pname{$q$-Root Steiner 
Tree ($q$-RST)} problem, a set of $q$ roots and a set of leaves are given, 
and the task is to find a minimum-cost network where the roots are in the same 
strongly connected component and every leaf can be reached from every root. 
Building on the work of Feldman and Ruhl~\cite{DBLP:journals/siamcomp/FeldmanR06},
Such\'y~\cite{suchy-steiner} presented an algorithm with running time 
$2^{O(k)}\cdot n^{O(q)}$ for this problem, which shows that it is FPT for 
every constant~$q$. Let $\mc{H}_{\textup{$q$-RST}}$ be the class of directed 
graphs that are obtained from an out-star by making~$q-1$ of the edges 
bidirected. Observe that $\mc{H}_{\textup{$q$-RST}}$ is a subset 
of~$\mc{C}_{1,q-1}$, that \pname{$q$-RST} can be expressed by an instance of 
$\mc{H}_{\textup{$q$-RST}}$-\pname{DSN}, and that any pattern of 
$\mc{H}_{\textup{$q$-RST}}$ has vertex cover number $\tau=1$. Thus 
\autoref{thm:main}(1) implies that \pname{$q$-RST} can be solved in time 
$2^{O(k+q\log q)}\cdot n^{O(q)}=2^{O(k)}\cdot n^{O(q)}$, recovering the fact 
that it is FPT for every constant~$q$.
\end{itemize}

Thus the algorithmic side of \autoref{thm:main} unifies and
 generalizes three algorithmic results: the fixed-parameter tractability
 of \pname{DST} (which is based on dynamic programming on the tree structure of 
the solution), \pname{$q$-RST} (which is based on simulating a ``pebble game''), 
but also the polynomial-time solvability of \pname{DSN} with constant number of 
demands (which also is based on simulating a ``pebble game''). 
 Let us point out that our algorithmic results are significantly more general
 than just the unification of these three results: the generalization
 from stars to bounded-length caterpillars is already a significant
 extension and very different from earlier results. We consider it a
 major success of the systematic investigation that, besides finding
 the unifying algorithmic ideas generalizing all previous
 results, we were able to find tractable special cases in an
 unexpected new direction.

 There is a surprising non-monotonicity in the classification result of
 \autoref{thm:main}. As \pname{DST} is FPT and \pname{SCSS} is
 W[1]-hard, one could perhaps expect that \HDSN\ becomes harder
 as the pattern become denser. However, it is possible that the
 addition of further demands makes the problem easier. For example,
 if $\mc{H}$ contains every graph that is the vertex-disjoint union of
 two out-stars, then \HDSN\ is classified to be W[1]-hard by
 \autoref{thm:main}(2). However, if we consider those graphs where
 there is also a directed edge from the center of one star to the
 other star, then these graphs are \mbox{2-cater}\-pillars (i.e., contained in
 $\mc{C}_{2,0}$) and hence \HDSN\ becomes FPT by
 \autoref{thm:main}(1). This unexpected non-monotonicity further
 underlines the importance of completely mapping the complexity
 landscape of the problem area: without complete classification, it
 would be very hard to predict what other tractable/intractable
 special cases exist.

We mention that one can also study the vertex-weighted version of the problem, 
where the input graph has weights on the vertices and the goal is to minimize 
the total vertex-weight of the solution. In general, vertex-weighted problems 
can be more challenging than edge-weighted variants 
\cite{DBLP:journals/talg/DemaineHK14,DBLP:conf/icalp/BateniHL13,
DBLP:journals/jal/KleinR95,DBLP:conf/soda/ChekuriHKS07}. However, for general 
directed graphs, there are easy transformations between the two variants. Thus 
the results of this paper can be interpreted for the vertex-weighted version as 
well. 

\subsection{Our techniques}

We prove \autoref{thm:main} the following way. In \autoref{sec:cw}, we first 
establish the combinatorial bound that there is a solution whose cutwidth, and 
hence also (undirected) treewidth,\footnote{Throughout this paper we use only 
the undirected treewidth, as formally defined in \autoref{sec:prelim}.} is 
bounded by the number of demands.

\begin{thm}\label{lem:cutwidth}
A minimal solution $M$ to a pattern $H$ has cutwidth at most $7d$ if $d=|E(H)|$.
\end{thm}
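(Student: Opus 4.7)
The plan is to first extract a useful decomposition of $M$ forced by minimality, then construct a linear ordering of $V(M)$ whose cut at every position is small. For each demand edge $s_it_i\in E(H)$, minimality lets us fix a simple $s_i\to t_i$ path $P_i\subseteq M$ such that $M$ equals the edge-union $P_1\cup\cdots\cup P_d$: any arc not lying on any chosen $P_i$ could be removed without destroying feasibility, contradicting minimality of $M$. Regarding $M$ as undirected, the task reduces to showing that a graph which is the edge-union of $d$ simple paths admits a linear arrangement of cutwidth at most $7d$.

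\textbf{Inductive construction.} I would proceed by induction on $d$, extending a good ordering of $M_{i-1}:=P_1\cup\cdots\cup P_{i-1}$ to one of $M_i$ while increasing the cutwidth by at most a constant. Walking $P_i$ from $s_i$ to $t_i$ splits it into alternating pieces: maximal \emph{new chains} of internal vertices lying outside $M_{i-1}$, and transitions through \emph{old} vertices already in $\sigma_{i-1}$. Because $P_i$ is simple, these old vertices appear in a fixed sequence $u_1,\ldots,u_r$ without repetition, and consecutive pairs $u_ju_{j+1}$ are joined either by a new chain or by a single direct new edge (a \emph{bridge}). Each new chain is inserted as a contiguous block adjacent to one of its two old endpoints in $\sigma_{i-1}$, with the side chosen to keep the residual bridge short.

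\textbf{Accounting and main obstacle.} The analysis then bounds, at every cut position of the extended ordering $\sigma_i$, the new edges contributed by $P_i$: at most one path edge inside an inserted chain, plus a constant number of bridges and chain-ends corresponding to consecutive pairs $(u_j,u_{j+1})$ straddling the cut. Simplicity of $P_i$ is crucial here, as it ensures the bridges form a set of non-repeating arcs across $\sigma_{i-1}$, so that only a bounded number cross any given cut; a careful choice of insertion side for each new chain then keeps the per-path increment to at most $7$. The main obstacle is exactly this bookkeeping: without a careful side-choice, the direct bridges of $P_i$ could cascade and blow up the cutwidth. Iterating the insertion $d$ times yields a linear arrangement of $V(M)$ with cutwidth at most $7d$, as required.
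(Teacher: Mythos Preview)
Your reduction discards the essential hypothesis. The statement ``a graph which is the edge-union of $d$ simple paths has cutwidth at most $7d$'' is false in general: $K_n$ is the edge-union of roughly $n/2$ Hamiltonian paths, yet its cutwidth is $\lfloor n^2/4\rfloor$, quadratic in the number of paths. Minimality of $M$ is doing real work in the theorem, and it cannot be recovered from the bare path decomposition.

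The inductive step breaks exactly at the point you flag as the ``main obstacle.'' If every vertex of $P_i$ already lies in $M_{i-1}$, then $P_i$ consists entirely of bridges between old vertices, and a simple path can zigzag across a fixed cut arbitrarily often: with $\sigma_{i-1}=(1,2,\dots,n)$ and $P_i=(n,1,n-1,2,n-2,3,\dots)$, every edge of $P_i$ crosses the middle cut. Simplicity of $P_i$ only forbids repeated \emph{vertices}, not repeated cut crossings, so the assertion that ``only a bounded number cross any given cut'' has no justification, and no clever side-choice for inserted chains can help when there are no new chains at all.

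The paper's argument exploits minimality structurally rather than inductively. It separates $M$ into its condensation DAG $D$ and its strongly connected components. For $D$, any topological ordering has cutwidth at most $d$, because a \emph{directed} path in a topological order crosses each cut at most once --- acyclicity is precisely what rules out your zigzag. For each SCC $U$, minimality forces $U$ to be exactly the union of an in-arborescence $A_{in}$ and an out-arborescence $A_{out}$ rooted at a common terminal; a further minimality argument shows that reversing the edges of $A_{in}\setminus A_{out}$ produces a DAG, whose topological order then gives cutwidth at most $6d$ for $U$. Summing the two contributions yields $7d$. Any correct proof must use minimality beyond ``$M$ is a union of $d$ paths.''
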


This serves as the first step, which we exploit in \autoref{sec:tw} to prove 
that if the pattern is an almost-caterpillar in~$\mc{C}^*_{\lambda,\delta}$, 
then the (undirected) treewidth of the optimum solution can be bounded by a 
function of $\lambda$ and~$\delta$.

\begin{thm}\label{crl:tw}
The treewidth of a minimal solution to any pattern graph in 
$\mc{C}^*_{\lambda,\delta}$ is at most $7(1+\lambda)(\lambda+\delta)$.
\end{thm}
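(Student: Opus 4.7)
The plan is to derive \autoref{crl:tw} from \autoref{lem:cutwidth} by constructing an auxiliary pattern $\tilde H$ with at most $(1+\lambda)(\lambda+\delta)$ edges for which $M$ is still a minimal solution, and then invoking the fact that treewidth is at most cutwidth.

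First I would reduce to $H \in \mc{C}_{\lambda,\delta}$: if $H \in \mc{C}^*_{\lambda,\delta}$, then $H$ is transitively equivalent to some $H' \in \mc{C}_{\lambda,\delta}$, and since adding or removing transitively-redundant pattern edges does not change the set of feasible solutions, minimal solutions to $H$ and $H'$ coincide. So assume $H \in \mc{C}_{\lambda,\delta}$ and write $H = C \cup F$, where $C$ is a $\lambda_0$-caterpillar with $\lambda_0 \leq \lambda$, spine $v_1,\dots,v_{\lambda_0}$, star sets $W_1,\dots,W_{\lambda_0}$, and $|F| \leq \delta$. The out- and in-caterpillar cases are symmetric by edge reversal, so I focus on the out-case.

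The pattern $\tilde H$ I build consists of three parts: the $\leq \lambda-1$ spine edges of $C$; the $\leq \delta$ extra edges of $F$; and, for each $i \in \{1,\dots,\lambda_0\}$, at most $\lambda+\delta$ carefully chosen star demands of the form $v_i \to w$ with $w \in W_i$. The total edge count is at most $(\lambda-1) + \delta + \lambda(\lambda+\delta) \leq (1+\lambda)(\lambda+\delta)$. The representatives for the $i$-th star must be chosen so that every edge of the arborescence $T_i \subseteq M$ rooted at $v_i$ and reaching $W_i$ lies on a $v_i \to w$ path in $M$ for at least one selected $w$. Given such $\tilde H$, the minimal solution $M$ to $H$ is also minimal for $\tilde H$: it satisfies all demands of $\tilde H$ (since $\tilde H \subseteq H$), and every edge of $M$ is on a demand path in $\tilde H$ (by construction for star edges, and by minimality of $M$ in $H$ for spine and $F$ edges). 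Applying \autoref{lem:cutwidth} to $\tilde H$ then yields $\mathrm{cutwidth}(M) \leq 7(1+\lambda)(\lambda+\delta)$, and hence $\mathrm{treewidth}(M) \leq 7(1+\lambda)(\lambda+\delta)$.

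The hard part is showing that at most $\lambda+\delta$ representative demands per star suffice to cover all edges of $T_i$ by demand paths. The intuition is that the branching of $T_i$ inside $M$ is driven by the ``interaction points'' where $T_i$ meets spine paths, $F$-paths, or other arborescences $T_j$; by minimality of $M$, the number of such points can be bounded by the number of spine vertices plus $F$-endpoints, which is $O(\lambda+\delta)$. Each maximal branch of $T_i$ between successive interaction points can then be covered by a single demand to its tip in $W_i$. Making this structural claim precise, and verifying that the resulting demand set covers every edge of $M$, is the delicate core of the argument; I expect it to rely on a careful case analysis of how the $T_i$'s share vertices with the spine paths, $F$-paths, and each other, using minimality of $M$ in $H$ to rule out superfluous branching.
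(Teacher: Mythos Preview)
Your reduction to $\mc{C}_{\lambda,\delta}$ and the overall strategy of invoking \autoref{lem:cutwidth} on a small auxiliary pattern are both sound. The gap is in the central claim that $M$ itself can be made a minimal solution to some $\tilde H$ with at most $(1+\lambda)(\lambda+\delta)$ edges. This is in general impossible: take $\lambda=1$, $\delta=0$, so $H$ is a single out-star with, say, $k$ leaves. Any minimal solution $M$ is an out-arborescence whose leaves are exactly the terminal leaves (\autoref{lem:arb}), so $M$ can have $k$ leaves. But a minimal solution to any pattern with at most $(1+\lambda)(\lambda+\delta)=2$ edges is a union of two paths and has at most a constant number of degree-one vertices. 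Concretely, to cover the edge entering a leaf $\ell$ of $M$ by a $v_1\to w$ path for some selected $w$, you must pick $w=\ell$; thus covering all leaf-edges of $T_i$ already forces as many representative demands as $T_i$ has leaves, which is not bounded by $\lambda+\delta$. The intuition about ``interaction points'' controlling branching does not help here, since the arborescence can fan out to arbitrarily many terminals without meeting any spine-, $F$-, or $T_j$-path at all. (Equivalently: $M$ can have unbounded cutwidth even for a single out-star pattern, e.g.\ when $G$ is itself a star, so no application of \autoref{lem:cutwidth} to $M$ will give the bound.)

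The paper's proof sidesteps exactly this obstacle. It does not bound the cutwidth of $M$; it shows (\autoref{thm:structure}) that $M$ decomposes into a \emph{core} $M^c\subseteq M$, which is a minimal solution to a sub-pattern $H^c$ with at most $(1+\lambda)(\lambda+\delta)$ edges, together with a forest of out- (or in-) arborescences each meeting $M^c$ only at its root. Then \autoref{lem:cutwidth} bounds the cutwidth, and hence the treewidth, of $M^c$ by $7(1+\lambda)(\lambda+\delta)$, and attaching trees at single vertices does not raise treewidth. The selection of the small pattern $H^c$ is where the ``interaction'' idea is made precise: for each pair (edge $st$ of $I$, star index $i$) one adds at most one demand $v_i\ell$, chosen via a witness lemma guaranteeing that all $i$-necessary edges with head on $P_{st}$ are $\ell$-necessary for a common leaf $\ell$. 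The delicate part is then \autoref{lem:same-head} and \autoref{lem:M-M^c}, which show that after removing $M^c$ the remainder really is a forest of arborescences attached only at roots.
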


To prove the above two theorems we thoroughly analyze the combinatorial 
structure of minimal solutions, by untangling the intricate interplay between 
the $s\to t$ paths in a given solution for demands $st$ of a pattern graph $H$. 
The resulting bounds can then be exploited in an algorithm that restricts the 
search for a bounded-treewidth solution (\autoref{sec:alg}). To obtain this 
algorithm we generalize dynamic programming techniques for other settings to the 
\pname{DSN} case, by introducing novel tools to tackle the intricacies of this 
problem.

\begin{thm}\label{crl:alg}
Let an instance of $\mc{H}$-DSN be given by a graph $G$ with $n$ vertices, and 
a pattern $H$ on $k$ terminals with vertex cover number $\tau$. If the 
optimum solution to $H$ in $G$ has treewidth~$\omega$, then the optimum can be 
computed in $2^{O(k+\tau\omega\log\omega)}n^{O(\omega)}$ time.
\end{thm}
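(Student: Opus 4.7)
The plan is to simulate dynamic programming over a nice tree decomposition of an optimum solution $N$, which by assumption has treewidth at most $\omega$. Since $N$ is unknown, the DP must explore possible bag contents inside $G$ on the fly. As a preprocessing step, I would fix a minimum vertex cover $W\subseteq R$ of the pattern $H$ with $|W|=\tau$; because every edge of $H$ is incident to $W$, this lets us encode all reachability constraints through per-$W$-terminal information rather than per-demand information.

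I would take the DP state at an ``active bag'' to be a tuple $(B,\pi,\lambda,F)$, where $B$ is an ordered tuple of at most $\omega+1$ vertices of $G$ representing the bag contents, $\pi$ is a partition of $B$ into at most $\omega+1$ classes describing the weakly connected components of the partial solution clipped to $B$, $\lambda$ assigns each class of $\pi$ a function from $W$ to $O(\omega)$ outcomes encoding how that class is destined to link with each $W$-terminal in the rest of the solution, and $F\subseteq R$ records which terminals outside of $W$ have already been consumed by the partial solution. A direct count gives $n^{O(\omega)}$ choices of $B$, $2^{O(\omega\log\omega)}$ partitions $\pi$, $(\omega+1)^{\tau(\omega+1)}=2^{O(\tau\omega\log\omega)}$ labelings $\lambda$, and $2^{k}$ subsets $F$. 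The standard nice tree decomposition operations (introduce vertex, introduce edge, forget vertex, join) then induce canonical transitions on $(\pi,\lambda,F)$, with each introduce-edge transition adding the weight of the chosen arc of $G$; the final answer is the minimum DP value among end-states in which every demand of $H$ is certified satisfied by $\lambda$ and $F$. Multiplying the state count by $O(\omega n)$ decomposition nodes and a polynomial per-transition cost yields the claimed $2^{O(k+\tau\omega\log\omega)}n^{O(\omega)}$ runtime.

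The main obstacle will be the consistency of reachability information at join nodes: two partial solutions being merged must agree on how their bag classes will ultimately connect to each $W$-terminal through the remainder of the solution, and the merged partition must respect directed transitive closure along paths passing above the current bag. Proving that the $\lambda$-labeling contains exactly the information needed to enforce this consistency, and that it admits an update rule polynomial in the state size, is the key technical step. It is precisely the vertex cover bound $\tau$ that keeps the labeling space within $2^{O(\tau\omega\log\omega)}$ rather than blowing up to $2^{|E(H)|}$, which is why the stated runtime depends on $\tau$ rather than on the number of demands directly.
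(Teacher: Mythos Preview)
Your proposal has a genuine gap at precisely the point you flag as the ``main obstacle.'' Recording only the \emph{weakly} connected components of the partial solution (your partition $\pi$) is too coarse for a directed connectivity problem: two bag vertices $u,v$ can lie in the same weak component while neither is reachable from the other, and a leaf reachable only from $u$ cannot be satisfied by a root that later attaches through $v$. Your $\lambda$-labeling is supposed to repair this, but you never specify what the ``$O(\omega)$ outcomes'' are, and in fact no per-class label of that size can encode the directed reachability needed: the relevant information is which \emph{individual} bag vertices reach (or are reached by) which leaves, not which weak component they sit in. The honest state would record, for each of up to $k$ leaves already consumed, a subset of $U$ from which it is reachable---exactly the $2^{O(k\omega)}$ blowup you are trying to avoid.

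The paper avoids this in two ways, both of which are absent from your sketch. First, instead of a partition it stores a \emph{$U$-projection}: a set of directed edges on $U$ (plus edges between $U$ and the star roots) that record, for each fixed path $P_{st}$ in the optimum, its trace on $U$. The key counting fact is that because $H$ decomposes into $c\le 2\tau$ in-/out-stars, these paths form $c$ arborescences, so the projection has at most $c\omega$ edges and hence only $2^{O(\tau\omega\log\omega)}$ possibilities. Second, for leaves the paper does \emph{not} track reachability subsets at all; instead it records, for each root $r_j$, a single set $A_j\subseteq U$ of ``attachment points'' where the maximal suffix (inside the partial solution) of some path $P_{r_j\ell}$ begins. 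This per-root (rather than per-leaf) bookkeeping is what collapses the leaf information from $2^{O(k\omega)}$ down to $2^{O(\tau\omega)}$, and proving its correctness at merges is the actual technical content of the lemma.
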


Combining \autoref{crl:tw} and \autoref{crl:alg} proves the algorithmic side of 
\autoref{thm:main}. We remark that the proof is completely self-contained (with 
the exception of some basic facts on treewidth) and in particular we do not 
build on the algorithms of Feldman and 
Ruhl~\cite{DBLP:journals/siamcomp/FeldmanR06}. As combining 
\autoref{lem:cutwidth} and \autoref{crl:alg} already proves that \pname{DSN} 
with a constant number of demands can be solved in polynomial time, as a 
by-product this gives an independent proof for the result of Feldman and 
Ruhl~\cite{DBLP:journals/siamcomp/FeldmanR06}. One can argue which algorithm is 
simpler, but perhaps our proof (with a clean split of a combinatorial and an 
algorithmic statement) is more methodological and better reveals the underlying 
reason why the problem is tractable.

Finally, in \autoref{sec:hard} we show that whenever the patterns in $\mc{H}$ 
are not transitively equivalent to almost-caterpillars, the problem is 
W[1]-hard. Our proof follows a novel, non-standard route. We first show that 
there is only a small number of obstacles for not being transitively equivalent 
to almost-caterpillars: the graph class contains (possibly after identification 
of vertices) arbitrarily large strongly connected graphs, pure diamonds, or 
flawed diamonds (see \autoref{lem:only-if} for the precise statement). Showing 
the existence of these obstacles needs a non-trivial combinatorial argument. We 
then provide a separate W[1]-hardness proof for each of these obstacles, 
completing the proof of the hardness side of \autoref{thm:main}.

\subsection{Subsequent related work}

Since the publication of the conference 
version~\cite{DBLP:conf/icalp/FeldmannM16} of this paper several results have 
appeared that build on our work. Especially the algorithm of \autoref{crl:alg} 
has been used as a subroutine to solve several special cases of \pname{DSN}. We 
survey some of these results here.

\paragraph{Parameterizing by the number $k$ of terminals.}
As mentioned above, the algorithm for \pname{DSN} based on simulating a ``pebble 
game'' by Feldman and Ruhl~\cite{DBLP:journals/siamcomp/FeldmanR06} has a faster 
runtime of~$n^{O(d)}$ than implied by \autoref{crl:alg}, where $d$ is the 
number of demands. Measured in the stronger parameter $k$ (which can be smaller 
than $d$ up to a quadratic factor) the Feldman and 
Ruhl~\cite{DBLP:journals/siamcomp/FeldmanR06} algorithm runs in~$n^{O(k^2)}$ 
time. Interestingly, Eiben et al.~\cite{DBLP:conf/stacs/EibenKPS19} show that 
this is essentially best possible, as no $f(k)n^{o(k^2/\log k)}$ time algorithm 
exists for \pname{DSN} for any computable function~$f$, under the Exponential 
Time Hypothesis (ETH). However, as summarized below, in special cases it is 
possible to beat this lower bound. 

\paragraph{Planar and bounded genus graphs.}
A directed graph is considered planar if its underlying undirected graph is. For 
such inputs Chitnis et al.~\cite{DBLP:journals/siamcomp/ChitnisFHM20} show that 
under ETH no $f(k)n^{o(k)}$ time algorithm can solve~\pname{DSN}. Eiben et 
al.~\cite{DBLP:conf/stacs/EibenKPS19} show that an optimum solution of genus~$g$ 
has treewidth $2^{O(g)}k$ and thus \autoref{crl:alg} implies an algorithm with 
runtime $2^{O(k^2\log k)}n^{O(k)}$ for graphs of constant genus, matching the 
previous runtime lower bound for planar graphs. However, for the special case of 
the \pname{SCSS} problem, Chitnis et 
al.~\cite{DBLP:journals/siamcomp/ChitnisFHM20} prove that in planar graphs there 
exists a faster algorithm with runtime~$2^{O(k)}n^{O(\sqrt{k})}$. To obtain such 
an algorithm, in the conference version 
of~\cite{DBLP:journals/siamcomp/ChitnisFHM20} the authors devise a 
generalization of the ``pebble games'' of Feldman and 
Ruhl~\cite{DBLP:journals/siamcomp/FeldmanR06} for \pname{SCSS} in planar graphs. 
However,  in the journal version~\cite{DBLP:journals/siamcomp/ChitnisFHM20} the 
authors use \autoref{crl:alg} to get a much cleaner and simpler proof, by 
showing that any optimum solution has treewidth~$O(\sqrt{k})$. They also obtain 
a matching runtime lower bound of $f(k)n^{o(\sqrt{k})}$ for \pname{SCSS} on 
planar graphs.

\paragraph{Bidirected graphs.} An interesting application of \autoref{crl:alg} 
is the \pname{SCSS} problem on bidirected graphs, i.e., directed graphs for 
which an edge~$uv$ exists if and only if its reverse edge~$vu$ also exists and 
has the same weight. While this problem remains NP-hard, Chitnis et 
al.~\cite{DBLP:conf/esa/ChitnisFM18} show that it is FPT parameterized by $k$, 
which is in contrast to general input graphs where the problem is W[1]-hard (as 
also implied by \autoref{thm:main}). To show this result, it is not enough to 
bound the treewidth of a solution and then apply \autoref{crl:alg} directly, as 
is done for the above mentioned problems on planar graphs. In fact, there are 
examples~\cite{DBLP:conf/esa/ChitnisFM18} in which the optimum solution to 
\pname{SCSS} on bidirected graphs has treewidth $\Theta(k)$. Nevertheless, as 
shown in~\cite{DBLP:conf/esa/ChitnisFM18} it is possible to decompose the 
optimum solution to this problem into \emph{poly-trees}, i.e., directed graphs 
of (undirected) treewidth~$1$. As a consequence, an FPT algorithm can guess the 
decomposition of the optimum, and apply \autoref{crl:alg} repeatedly 
(with~$\omega=1$) to compute all poly-tree solutions. This algorithm can be made 
to run in $2^{k^2+O(k)}n^{O(1)}$ time. In contrast, for the more general 
\pname{DSN} problem on bidirected graphs, Chitnis et 
al.~\cite{DBLP:conf/esa/ChitnisFM18} show that no $f(k)n^{o(k/\log k)}$ time 
algorithm exists, under ETH.

\paragraph{Planar bidirected graphs.}
If the input graph is both planar and bidirected, then Chitnis et 
al.~\cite{DBLP:conf/esa/ChitnisFM18} show that the treewidth of any optimum 
solution to \pname{DSN} is $O(\sqrt{k})$. \autoref{crl:alg} then implies an 
algorithm with runtime $2^{O(k^{3/2}\log k)}n^{O(\sqrt{k})}$, which is faster 
than possible in planar graphs but also in bidirected graphs, as mentioned 
above. Furthermore, they show that \autoref{crl:alg} can be used to obtain a 
parameterized approximation scheme for \pname{DSN} on planar bidirected graphs, 
i.e., a $(1+\eps)$-approximation can be computed in 
$2^{O(k^2)}n^{2^{O(1/\eps)}}$ time. For this they prove that the optimum 
solution to \pname{DSN} in planar bidirected graphs can be covered by a set of 
\pname{DSN} solutions, each of which only contains $2^{O(1/\eps)}$ terminals, 
and such that the sum of the costs of all these solutions is only a 
$(1+\eps)$-fraction more than the optimum. Similar to \pname{SCSS} on bidirected 
graphs, the idea now is to guess how these solutions cover the terminal set, and 
then compute all of them using the above mentioned $2^{O(k^{3/2}\log 
k)}n^{O(\sqrt{k})}$ time algorithm for \pname{DSN} on planar bidirected graphs 
(which follows from \autoref{crl:alg}). Since each of the solutions only 
contains $2^{O(1/\eps)}$ terminals, the degree of the polynomial depends only on 
$\eps$ every time the algorithm of \autoref{crl:alg} is executed.

\subsection{Preliminaries}\label{sec:prelim}

In this paper, we are mainly concerned with directed graphs, i.e., graphs for 
which every edge is an ordered pair of vertices. For convenience, we will also 
give definitions, such as the treewidth, for directed graphs, even if they are 
usually defined for undirected graphs. For any graph $G$ we denote its vertex 
set by $V(G)$ and its edge set by $E(G)$. We denote a directed edge from $u$ to 
$v$ by~$uv$, so that $u$ is its \emph{tail} and $v$ is its \emph{head}. We say 
that both $u$ and $v$ are \emph{incident} to the edge~$uv$, and $u$ and $v$ are 
\emph{adjacent} if the edge $uv$ or the edge $vu$ exists. We refer to $u$ and $v$ as the 
\emph{endpoints} of $uv$. For a vertex $v$ the \emph{in-degree (out-degree)} is 
the number of edges that have $v$ as their head (tail). A \emph{source (target)} 
is a vertex of in-degree $0$ (out-degree~$0$). An \emph{in-arborescence 
(out-arborescence)} is a connected graph with a unique target (source), also 
called its \emph{root}, such that every vertex except the root has 
out-degree~$1$ (in-degree~$1$). The \emph{leaves} of an in-arborescence 
(out-arborescence) are its sources (targets). A \emph{$u\to v$ path} is an 
out-arborescence with root $u$ and a single leaf~$v$, and its \emph{length} is 
its number of edges. A \emph{star} $S$ with \emph{root} $u$ is a graph in which 
every edge is incident to $u$. All vertices in a star different from the root 
are called its \emph{leaves}. An \emph{in-star (out-star)} is a star which is an 
in-arborescence (out-arborescence).
A \emph{strongly connected component} (SCC) $H$ of a directed graph $G$ is an 
inclusion-wise maximal sub-graph of $G$ for which there is both a $u\to v$ path and 
a $v\to u$ path for every pair of vertices $u,v\in V(H)$. A \emph{directed 
acyclic graph} (DAG) is a graph in which every SCC is a singleton, i.e., it 
contains no cycles.

The following observation is implicit in previous work 
(cf.~\cite{DBLP:journals/siamcomp/FeldmanR06}) and will be used throughout this 
paper. Here we consider a \emph{minimal} solution $M$ to an instance of DSN, in 
which no edge can be removed without making the solution infeasible.

\begin{lem}\label{lem:arb}
Consider an instance of DSN where the pattern $H$ is an out-star (resp., 
in-star) with root $t\in R$. Then any minimal solution $M$ to $H$ is an 
out-arborescence (resp., in-arborescence) rooted at $t$ for which every leaf is 
a terminal.
\end{lem}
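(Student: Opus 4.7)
The plan is to handle only the out-star case explicitly; the in-star case follows by reversing every arc of $G$ and $H$, which turns a minimal solution for the in-star pattern into a minimal solution for an out-star pattern. So assume $H$ is an out-star rooted at $t$ with leaves $t_1,\ldots,t_d$, and let $M$ be an edge-minimal feasible solution, meaning that for every $i$ there is a simple $t\to t_i$ path in $M$, but removing any edge destroys at least one such reachability.

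The first step I would carry out is to show that every vertex of $M$ is reachable from $t$ inside $M$. The key observation is that, by edge-minimality, every edge $e\in E(M)$ lies on some simple $t\to t_i$ path in $M$: otherwise, for each $i$ one could find a simple $t\to t_i$ path in $M$ avoiding $e$, so $M\setminus\{e\}$ would still be feasible. Since any non-terminal vertex of $M$ is an endpoint of some edge (and terminals are trivially reachable by feasibility), this gives reachability of every vertex of $M$ from $t$.

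The second step is to prune $M$ to an arborescence. Because every vertex of $M$ is reachable from $t$, a standard BFS/DFS from $t$ inside $M$ yields a spanning out-arborescence $M'\subseteq M$ rooted at $t$. This $M'$ is already feasible: each $t_i$ lies in $V(M')=V(M)$, and in an out-arborescence rooted at $t$ there is a (unique) directed path from $t$ to every vertex. Edge-minimality of $M$ then forces $M'=M$, so $M$ itself is an out-arborescence rooted at $t$.

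Finally, to see that every leaf is a terminal, suppose some leaf $\ell$ of $M$ is not a terminal; note $\ell\ne t$, since $t$ has outgoing edges whenever any demand is present (and if there are no demands the statement is trivial). Let $u\ell$ be the unique incoming edge of $\ell$ in the arborescence. Removing $u\ell$ does not destroy any of the unique $t\to t_i$ paths of the arborescence, because none of them can terminate at the non-terminal leaf $\ell$ nor pass through it, contradicting minimality. The only real subtlety in the whole argument is the convention that a $t\to t_i$ path is simple; this is what allows edges that do not appear on any simple $t\to t_i$ path to be discarded safely. Beyond that, the proof is essentially a clean application of the existence of spanning arborescences in graphs whose vertices are all reachable from a fixed root.
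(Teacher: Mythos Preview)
Your proof is correct and takes a different route from the paper's. The paper argues locally: it asserts (without detailed justification) that $M$ is weakly connected with $t$ as its unique source, and then derives a contradiction from any vertex $v$ of in-degree at least $2$ by showing that one of the two incoming edges $e,f$ at $v$ can be deleted, since any $t\to \ell'$ path through $f$ can be rerouted along a $t\to v$ path that goes through $e$. Your approach is instead global and constructive: you first show every vertex is reachable from $t$ (which is what the paper is implicitly relying on anyway, but you make it explicit via the ``every edge lies on some simple $t\to t_i$ path'' observation), then extract a spanning out-arborescence $M'\subseteq M$, and let edge-minimality force $M'=M$. A small advantage of your write-up is that you also spell out the ``every leaf is a terminal'' clause, which the paper does not address explicitly. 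The paper's rerouting argument, on the other hand, is a slightly more hands-on illustration of how minimality kills multiple incoming edges, which is a theme recurring later in the paper (e.g., in the core/arborescence decomposition). Both proofs are short and equally valid.
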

\begin{proof}
We only prove the case when $H$ is an out-star, as the other case follows by 
symmetry. Suppose for contradiction that $M$ is not an out-arborescence. As it 
is clear that $M$ is connected and $t$ is the unique source in a minimal solution, $M$ not being an 
out-arborescence implies that there is a vertex $v\in V(M)$ with in-degree at 
least $2$, i.e., there are two distinct edges $e$ and $f$ of $M$ that have $v$ 
as their head. Since $M$ is a minimal solution, removing $e$ disconnects some 
terminal $\ell$ from $t$, which in particular means that there is a $t\to \ell$ 
path $P$ going through~$e$. Clearly, this path cannot go through $f$, as both 
$e$ and $f$ have the same head $v$. Thus if we remove $f$, then any terminal 
$\ell'$ will remain being reachable from $t$: we may reroute any $t\to \ell$ 
path $Q$ that passed through $f$ via a path through $e$ instead by following $P$ 
from $t$ to~$v$, the head of~$f$, and then following $Q$ from $v$ to $\ell'$. 
This however contradicts the minimality of~$M$.
\end{proof}

A \emph{tree decomposition} $D$ of a graph $G$ is an undirected tree for which 
every node $w\in V(D)$ is associated with a set $b_w\subseteq V(G)$ called a 
\emph{bag}. Additionally it satisfies the following properties:
\begin{enumerate}[(a)]
\item\label{item:tw-edges} for every edge $uv\in E(G)$ there is a bag $b_w$  
for some $w\in V(D)$ containing it, i.e., $u,v\in b_w$, and 
\item\label{item:tw-vertices} for every vertex $v\in V(G)$ the nodes of $D$ 
associated with the bags containing $v$ induce a non-empty and connected 
subgraph of~$D$.
\end{enumerate}
The \emph{width} of the tree decomposition is $\max\{|b_w|-1\mid w\in V(D)\}$. 
The \emph{treewidth} of a graph $G$ is the minimum width of any tree 
decomposition for~$G$. It is known (by an easy folklore proof) that for any 
graph $G$ of treewidth $\omega$ there is a \emph{smooth} tree decomposition $D$ 
of $G$, which means that $|b_w|=\omega+1$ for all nodes $w$ of $D$ and $|b_w\cap 
b_{w'}|=\omega$ for all adjacent nodes $w,w'$ of $D$.

\section{The cutwidth of minimal solutions for bounded-size patterns}
\label{sec:cw}

\newcommand{\Vicut}{(V_i,\overline{V}_i)}

The goal of this section is to prove \autoref{lem:cutwidth}: we bound the 
\emph{cutwidth} of a minimal solution~$M$ to a pattern $H$ in terms of 
$d=|E(H)|$. A \emph{layout} of a graph $G$ is an injective function 
$\psi:V(G)\rightarrow\mathbb{N}$ inducing a total order on the vertices of $G$. 
Given a layout, we define the set $V_i=\{v\in V(G)\mid \psi(v)\leq i\}$ and say 
that an edge \emph{crosses the cut $\Vicut$} if it has one endpoint in $V_i$ and 
one endpoint in $\overline{V}_i:=V(G)\setminus V_i$. The \emph{cutwidth of the 
layout} is the maximum number of edges crossing any cut $\Vicut$ for any 
$i\in\mathbb{N}$. The cutwidth of a graph is the minimum cutwidth over all its 
layouts.

  Like Feldman and Ruhl~\cite{DBLP:journals/siamcomp/FeldmanR06}, we consider 
the two extreme cases of \emph{directed acyclic graphs (DAGs)} and 
\emph{strongly connected components (SCCs)} in our proof. Contracting all SCCs 
of a minimal solution $M$ without removing parallel edges sharing the same head 
and tail, but removing the resulting self-loops, produces a directed acyclic 
multi-graph~$D$, the so-called \emph{condensation graph} of $M$. We bound the 
cutwidth of $D$ and the SCCs of $M$ separately, and then put together these two 
bounds to obtain a bound for the cutwidth of~$M$. As we will see, bounding the 
cutwidth of the acyclic multi-graph $D$ and putting together the bounds are 
fairly simple. The main technical part is bounding the cutwidth of the SCCs.

We will need two simple facts about cutwidth. First, the cutwidth of an acyclic multi-graph can be bounded using the 
existence of a \emph{topological ordering} of the vertices. That is, 
for any acyclic graph $G$ there is an injective function 
$\varphi:V(G)\rightarrow\mathbb{N}$ such that $\varphi(u)<\varphi(v)$ if $uv\in 
E(G)$. Note that such a function in particular is a layout. 

\begin{lem}\label{lem:cw-of-dag}
If $D$ is an acyclic directed 
multi-graph~$D$ that is the union of $d$ paths and $\varphi_D$ is an arbitrary 
topological ordering of $D$, then the layout given by $\varphi_D$ has cutwidth 
at most $d$.
\end{lem}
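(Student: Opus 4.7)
The plan is to exploit the fact that a topological ordering forces every directed edge to go ``left to right'' in the layout, and then to argue that each of the $d$ paths can contribute at most one edge to any given cut.

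First I would fix an arbitrary cut $(V_i,\overline{V}_i)$ induced by $\varphi_D$. Since $\varphi_D$ is a topological ordering, every edge $uv\in E(D)$ satisfies $\varphi_D(u)<\varphi_D(v)$, so no edge has its tail in $\overline{V}_i$ and its head in $V_i$. Hence every edge crossing the cut goes from $V_i$ to $\overline{V}_i$.

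Next I would consider one of the $d$ paths $P=(u_0,u_1,\ldots,u_\ell)$ whose (multiset) union makes up $E(D)$. Because each edge $u_ju_{j+1}$ respects the topological order, we get $\varphi_D(u_0)<\varphi_D(u_1)<\cdots<\varphi_D(u_\ell)$. Therefore the vertices of $P$ lying in $V_i$ form a (possibly empty) prefix $u_0,\ldots,u_r$, and the remaining vertices $u_{r+1},\ldots,u_\ell$ all lie in $\overline{V}_i$. Consequently $P$ contributes exactly one edge (namely $u_ru_{r+1}$) to the cut when it has vertices on both sides, and no edges otherwise.

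Summing this contribution over all $d$ paths whose disjoint union of edge sets is $E(D)$, I obtain at most $d$ edges crossing the cut $(V_i,\overline{V}_i)$. Since $i$ was arbitrary, the layout $\varphi_D$ has cutwidth at most $d$. There is no real obstacle here; the only subtlety is to notice that the topological ordering is compatible with any single path, which forces the ``prefix/suffix'' splitting and prevents a path from ever re-crossing the cut.
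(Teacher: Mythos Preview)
Your proposal is correct and takes essentially the same approach as the paper: both argue that in a topological layout every edge goes left to right, so each of the $d$ paths can cross a fixed cut at most once, giving the bound. Your ``prefix/suffix'' phrasing is a clean way to state what the paper derives via transitivity from the first crossing edge. One minor quibble: the edge sets of the $d$ paths need not be disjoint (you correctly wrote ``(multiset) union'' earlier), but this only helps the bound, so the argument stands.
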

\begin{proof}%
To bound the cutwidth, we argue that a path $P$ crosses any cut $\Vicut$ at 
most once. Note that no edge can have a head $v$ 
and tail $u$ with $\varphi_D(v)\leq\varphi_D(u)$, since $\varphi_D$ is a 
topological ordering. In particular, for the first edge $uv$ of $P$ crossing 
$\Vicut$ we get $\varphi_D(u)\leq i<\varphi_D(v)$. For any vertex $w$ reachable 
from $v$ on the path, the transitivity of the topological order implies 
$i<\varphi_D(w)$ so that $w$ cannot be the tail of an edge crossing the cut. 
Thus no second edge of the path $P$ crosses $\Vicut$. As $D$ is the union of 
$d$ paths, each cut is crossed by at most $d$ edges of~$D$. 
\end{proof}

The next lemma shows that bounding the cutwidth of each SCC and the condensation 
graph of $G$, bounds the cutwidth of $G$.

\begin{lem}\label{lem:cw-sum}
Let $G$ be a directed graph and $D$ be its condensation multi-graph. If the 
cutwidth 
of $D$ is $x$ and the cutwidth of every SCC of $G$ is at most $y$, then the 
cutwidth of $G$ is at most~$x+y$.
\end{lem}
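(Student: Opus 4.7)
The plan is to build an explicit layout $\psi$ of $G$ by concatenating per-SCC layouts in the topological order given by $\varphi_D$, and then bound the cutwidth of $\psi$ directly. Take a layout $\varphi_D$ of $D$ of cutwidth~$x$; since $D$ is acyclic, we may assume it is a topological order, giving an ordering $S_1,\ldots,S_m$ of the SCCs. For each~$j$, we will pick a layout $\sigma_j$ of the induced subgraph on~$S_j$ with cutwidth at most~$y$ in a specific way (see below). The combined layout $\psi$ is formed by listing the vertices of $S_1$ in order $\sigma_1$, then those of $S_2$ in order $\sigma_2$, and so on.

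For any cut $(V_i,\overline V_i)$ of $\psi$, a \emph{boundary} cut (one falling between two consecutive blocks) is crossed only by inter-SCC edges, and these correspond one-to-one to the edges of $D$ crossing the matching cut of $\varphi_D$, so at most $x\leq x+y$ cross. The substantive case is an \emph{internal} cut strictly inside some SCC~$S_j$. There, intra-SCC edges of $S_j$ crossing number at most~$y$ by choice of $\sigma_j$, and the inter-SCC edges crossing split as $\beta + \alpha^{*} + \gamma^{*}$, where $\beta$ counts the bypass edges from an SCC before $S_j$ to an SCC after $S_j$, $\alpha^{*}$ counts the enter-edges into $S_j$ whose $S_j$-endpoint lies in $\overline V_i$, and $\gamma^{*}$ counts the exit-edges from $S_j$ whose $S_j$-endpoint lies in $V_i$. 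Writing $\alpha,\gamma$ for the total numbers of enter- and exit-edges at $S_j$, the cuts of $\varphi_D$ immediately before and after $S_j$ give $\alpha+\beta\leq x$ and $\beta+\gamma\leq x$.

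The decisive trick is to choose $\sigma_j$ so that every vertex of $S_j$ that is the head of some enter-edge precedes every vertex that is the tail of some exit-edge. Under this ordering, as the internal cut sweeps rightward through $S_j$, the count $\alpha^{*}$ decreases monotonically from $\alpha$ down to $0$ before $\gamma^{*}$ begins rising from $0$ up to $\gamma$, so $\alpha^{*}+\gamma^{*}\leq\max(\alpha,\gamma)$ at every internal cut. Combined with the bypass count, the inter-SCC crossings are bounded by $\beta+\max(\alpha,\gamma)=\max(\alpha+\beta,\beta+\gamma)\leq x$, yielding cutwidth at most $x+y$ for~$\psi$, which is the claim.

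The main obstacle, and the step I expect to be the most delicate, is showing that such a $\sigma_j$ of cutwidth at most~$y$ actually exists. In small examples one can often exhibit such a layout explicitly, but in general I would attack this by an exchange argument on any optimal cutwidth layout of $S_j$: whenever an exit-tail appears before an enter-head, perform a local swap or rotation that exploits the strong connectivity of $S_j$ to reposition the offending pair without increasing the number of intra-SCC edges crossing any cut. Iterating such moves should yield a $\sigma_j$ satisfying both the cutwidth bound and the enter-heads-before-exit-tails property, and the cutwidth analysis above then delivers the claimed bound $x+y$ for~$G$.
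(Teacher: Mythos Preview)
Your approach has two genuine gaps, and the second one cannot be repaired along the lines you suggest.

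First, you assert that a cutwidth-$x$ layout of the acyclic graph $D$ may be taken to be a topological order. This is false in general. Let $D$ have vertices $a,b,c,d$ and arcs $a\to c$, $a\to d$, $b\to c$, $b\to d$. Every topological order puts $\{a,b\}$ before $\{c,d\}$, so all four edges cross the middle cut and the cutwidth of any such order is~$4$; yet the non-topological order $a,c,b,d$ has cutwidth~$2$, which is optimal (the underlying undirected graph is a $4$-cycle). Once $\varphi_D$ is not topological, edges incident to $S_j$ can enter from the right or exit to the left, your bypass/enter/exit trichotomy no longer matches the picture, and the bounds $\alpha+\beta\le x$ and $\beta+\gamma\le x$ are no longer the two boundary cuts of~$\varphi_D$.

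Second---and this is where the argument actually breaks---a layout $\sigma_j$ of $S_j$ with cutwidth at most~$y$ in which every enter-head precedes every exit-tail need not exist, and no exchange argument can manufacture one. If two distinct vertices of $S_j$ are each simultaneously the head of some enter-edge and the tail of some exit-edge, the ordering constraint is already self-contradictory. Even when the enter-heads and exit-tails form disjoint sets, the constraint can force the intra-SCC cutwidth above~$y$: take $S_j$ to be the directed $4$-cycle $w_1\to w_2\to w_3\to w_4\to w_1$ (cutwidth~$2$), with $w_1,w_3$ the enter-heads and $w_2,w_4$ the exit-tails. Every layout placing $\{w_1,w_3\}$ before $\{w_2,w_4\}$ has all four cycle edges across the middle cut, so every $\sigma_j$ with your property has cutwidth $4>y=2$. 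This is a global obstruction, not a local out-of-order pair, so the swap/rotation moves you sketch cannot resolve it. The paper, by contrast, does not impose any such constraint on the per-SCC layouts: it simply concatenates arbitrary cutwidth-$y$ layouts of the SCCs along an arbitrary cutwidth-$x$ layout of~$D$ and bounds the intra-SCC and inter-SCC contributions separately.
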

\begin{proof}%
Let SCC$(u)\subseteq G$ be the SCC of $G$ that was contracted into the vertex 
$u$ in $D$. If a vertex $u$ of $G$ was not contracted, then SCC$(u)$ is the 
singleton~$u$. For each $u\in V(D)$, there exists a layout $\varphi_u$ of 
SCC$(u)$ with cutwidth at most $y$, while for $D$ there exists a layout 
$\varphi_D$ with cutwidth at most $x$. Let $\mu=\max\{\varphi_u(v)\mid u\in 
V(D)\land v\in$~SCC$(u)\}$ be the maximum value taken by any layout of an SCC. 
We define a layout $\psi$ of $G$ as $\psi(v)=\mu\cdot\varphi_D(u)+\varphi_u(v)$, 
where $v\in$~SCC$(u)$. Since the topological orderings are injective, $\psi$ is 
injective, and the intervals $[\mu\cdot\varphi_D(u)+1,\mu\cdot\varphi_D(u)+\mu]$ 
of values that $\psi$ can take for vertices of different SCCs are disjoint. 
Hence for any $i\in\mathbb{N}$, there is at most one SCC of $G$ whose edges 
cross the cut $\Vicut$, and so the cutwidth of $\psi$ is at most the cutwidth of 
any $\varphi_u$ plus the cutwidth of~$\varphi_D$. 
\end{proof}

\noindent Let us now bound the cutwidth of the SCCs.

\begin{lem}\label{lem:SCC-bound}
Any SCC $U$ of a minimal solution $M$ to a pattern $H$ with at most $d$ edges 
has cutwidth at most $6d$.
\end{lem}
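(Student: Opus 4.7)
The plan is to exploit a structural decomposition of the minimal solution $M$ into few simple directed paths, restrict this decomposition to $U$, and then build a layout in which each such path contributes only a constant number of crossings at every cut.

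As a first step I would show that $M=\bigcup_{i=1}^{d}P_i$ for some choice of simple $s_i\to t_i$ paths $P_i$ in $M$. The argument is minimality: for each edge $e\in E(M)$ there is a demand $i$ such that $M\setminus e$ disconnects $s_i$ from $t_i$, so every $s_i\to t_i$ path in $M$ uses $e$, and hence so does the chosen $P_i$. Because the condensation of $M$ is a DAG, a simple $P_i$ enters and leaves the SCC $U$ at most once, so $Q_i:=P_i\cap U$ is either empty or a single simple subpath of $P_i$. Thus $U$ is the edge-union of at most $d$ simple directed paths whose endpoints lie in a common set $T\subseteq V(U)$ with $|T|\leq 2d$.

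With this in hand, it suffices to construct a layout $\psi\colon V(U)\to\mathbb{N}$ such that each $Q_i$ crosses every cut $\Vicut$ at most $6$ times; summing over the $d$ paths and using $E(U)\subseteq\bigcup_i E(Q_i)$ then gives the cutwidth bound $6d$. A natural attempt is to augment $U$ with one ``return arc'' $b_ia_i$ per $Q_i$, yielding a directed multigraph on $V(U)$ in which the $Q_i$ together with their return arcs form closed walks that can be spliced into a single closed walk $W$ traversing every edge of $U$ at least once. Taking $\psi(v)$ to be the time of the first visit of $v$ along $W$ makes each traversal of $Q_i$ nearly $\psi$-monotone, with direction reversals occurring only when $W$ temporarily departs along a return arc or switches between paths at a vertex of~$T$.

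The main obstacle will be the accounting that yields the constant $6$ per path, uniformly in $d$. One must argue that along the traversal of a fixed $Q_i$ the number of $\psi$-monotone segments is bounded by an absolute constant, irrespective of how tangled $Q_i$ is with the other $Q_j$. The natural strategy is an amortized charging argument: each reversal of $Q_i$ at a cut $\Vicut$ is charged either to one of the at most two endpoints of $Q_i$ itself or to the endpoint of some other path at which $W$ enters or leaves $Q_i$; the global bound $|T|\leq 2d$ is then distributed across the $d$ paths to yield a per-path constant. Pinning this constant down to $6$, possibly after rerouting some $P_i$ inside $U$ without enlarging $M$, is the delicate combinatorial heart of the lemma and is where the proof will require the most care.
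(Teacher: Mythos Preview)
Your reduction of $U$ to a union of at most $d$ simple paths $Q_i$ with endpoints in a set $T$ of size at most $2d$ is correct, but from that point on the argument has a genuine gap. The closed-walk construction is not well defined: adding one return arc per $Q_i$ does not in general make the resulting multigraph Eulerian, and you do not explain how the $Q_i$ are spliced into a single closed walk $W$. More seriously, the claim that each $Q_i$ is ``nearly $\psi$-monotone'' under the first-visit layout is unsupported. An internal vertex $v$ of $Q_i$ may be first visited by $W$ while traversing some other $Q_j$, and there is no reason the first-visit times of the internal vertices of $Q_i$ should respect the order along $Q_i$ except at the few switch points in $T$. Your charging sketch would need to bound, for a \emph{fixed} cut, the number of sign changes of $\psi$ along $Q_i$ by an absolute constant, but the natural charges go to endpoints in $T$ and there are $2d$ of those in total, not $O(1)$ per path; redistributing them evenly does not give a per-cut bound. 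You correctly identify this as ``the delicate combinatorial heart,'' but no actual argument is given.

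The paper's proof takes a different route that avoids this difficulty entirely. It first shows that $U$ is itself a minimal solution to a pattern $H_U$ on $|R_U|\le 2d$ terminals, and then---crucially---uses minimality to prove that $U=A_{in}\cup A_{out}$ for an in-arborescence $A_{in}$ and an out-arborescence $A_{out}$, both rooted at a common terminal $t$ and each with at most $2d$ leaves. Letting $Z$ be the reversal of $E(A_{in})\setminus E(A_{out})$, minimality is used again to show that $A_{out}\cup Z$ is acyclic; any topological ordering of this DAG is then the layout. In that layout the edges of $A_{out}$ contribute at most $2d$ to any cut (it is a union of $2d$ root-to-leaf paths), and a separate counting argument bounds the contribution of $Z$ by $4d$, giving $6d$. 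The arborescence decomposition and the acyclicity of $A_{out}\cup Z$ are the missing structural ingredients; they are what replaces your unproven per-path constant.
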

\begin{proof}
First we establish that $U$ is a minimal solution to a certain pattern.
\begin{claim}\label{lem:SCC-pattern}
$U$ is a minimal solution to a pattern $H_U$ with at most $d$ edges.
\end{claim}
\begin{proof}%
  Consider a path $P_{st}$ in $M$ from $s$ to $t$ for some edge
  $st\in E(H)$. Let $v$ be the first vertex of $U$ on the path
  $P_{st}$, and let $w$ be the last.  Note that all vertices on
  $P_{st}$ between $v$ and $w$ must be contained in $U$ since
  otherwise $U$ would not be an SCC. Hence we can construct a pattern
  graph $H_U$ for $U$ with an edge $vw$ for the first and last vertex
  of each such path $P_{st}$ in $M$ that contains vertices of $U$. The
  SCC must be a minimal solution to the resulting pattern since a
  superfluous edge would also be removable from the minimal solution $M$: any 
edge $e$ of
  $U$ needed in $M$ by some edge $st\in E(H)$ also has a corresponding
  edge $vw$ in the pattern $H_U$ that needs it, i.e., all paths from
  $v$ to $w$ in $U$ pass through $e$. Since $H_U$ has at most one edge
  for each path $P_{st}$ in $M$ with $st\in E(H)$, the pattern $H_U$ has at 
most $d=|E(H)|$ edges. \cqed
\end{proof}

Let $R_U$ be the terminals in the pattern $H_U$ given by
\autoref{lem:SCC-pattern} and let us select an arbitrary root
$t\in R_U$. Note that $H_U$ has at most $d$ edges and hence
$|R_U|\le 2d$. Let $S_{in}$ (resp., $S_{out}$) be an in-star (resp.,
out-star) connecting $t$ with every other vertex of~$R_U$. As $U$ is a
strongly connected graph containing every vertex of $R_U$, it is also
a solution to the pattern $S_{in}$ on~$R_U$. Let us select an
$A_{in}\subseteq U$ that is a minimal solution to $S_{in}$; by 
\autoref{lem:arb}, $A_{in}$ is an in-arborescence with at most $2d$
leaves. Similarly, let $A_{out}\subseteq U$ be an out-arborescence
that is a minimal solution to $S_{out}$. Observe that $U$ has to be
exactly $A_{in}\cup A_{out}$: if there is an edge $e\in E(U)$ that is
not in $A_{in}\cup A_{out}$, then $U\setminus e$ still contains a path
from every vertex of $R_U$ to every other vertex of $R_U$ through $t$,
contradicting the fact that $U$ is a minimal solution to pattern
$H_U$.

Let $Z$ be the set of edges obtained by reversing the edges in 
$E(A_{in})\setminus E(A_{out})$.  As reversing edges does not change
the cutwidth, bounding the cutwidth of $A_{out}\cup Z$ will also imply a bound
on the cutwidth of $U=A_{in}\cup A_{out}$.

\begin{claim}\label{lem:dag}
  The union $A_{out}\cup Z$ is a directed acyclic graph.
\end{claim}
\begin{proof}%
Assume that $A_{out}\cup Z$ has a cycle $O$. We will identify a superfluous edge 
in $U$, which contradicts its minimality. Note that $Z$ is a forest of 
out-arborescences, and thus $O$ must contain edges from both $A_{out}$ and $Z$. 
Among the vertices of $O$ that are incident to edges of the in-arborescence 
$A_{in}$, pick one that is closest to the root $t$ in~$A_{in}$. Let $P$ be the 
path from this vertex $v$ to $t$ in~$A_{in}$. From $v$ we follow the edges of 
the cycle $O$ in their reverse direction, to find a path $Q\subseteq O\cap 
A_{out}$ of maximal length leading to $v$ and consisting of edges not in $Z$. 
Let $u$ be the first vertex of~$Q$ (where possibly $u=v$). The edge $wu$ on $O$ 
that has $u$ as its head must be an edge of~$Z$, since $Q$ is of maximal length. 
Note also that this edge exists since $O$ contains edges from both $A_{out}$ and 
$Z$.

Now consider the in-arborescence $A_{in}$, which contains the reverse edge 
$uw\in E(A_{in})\setminus E(A_{out})$ and the path $P$ from $v$ to $t$. Since 
$v$ is a closest vertex from $O$ to $t$ in $A_{in}$, the path $P$ cannot 
contain~$uw$ (otherwise $w$ would be closer to $t$ than $v$). However, this 
means 
that removing $uw$ from $M$ will still leave a solution to~$H$: any path 
connecting through $uw$ to $t$ can be rerouted through $Q$ and then $P$, while 
no connection from $t$ to a terminal needed $uw$ as it is not in $A_{out}$. 
Hence for every edge in the pattern $H$, there is still a path connecting the 
respective terminals through $t$. Thus $U$ was not minimal, which is a 
contradiction.
\cqed
\end{proof}

\autoref{lem:dag} implies a topological ordering on the vertices of 
$A_{out}\cup Z$. This order can be used as a layout for $U$. Using some more 
structural insights, the number of edges crossing a given cut can be bounded by a function of 
the number of edges of the pattern graph, as the following claim shows. 

\begin{claim}\label{lem:cw-of-scc}
Any topological ordering $\varphi$ of the graph 
$A_{out}\cup Z$ has cutwidth at most~$6d$.
\end{claim}
\begin{proof}%
To bound the number of edges crossing a cut given by the layout $\varphi$, we 
will consider edges of $A_{out}$ and $Z$ separately, starting with the former. 
Obviously $\varphi$ also implies a topological ordering of the 
subgraph~$A_{out}$. As the out-arborescence $A_{out}$ has at most $2d$ leaves, 
it is the union of at most $2d$ paths, each starting in $t$ and ending at a 
terminal. By \autoref{lem:cw-of-dag}, the cutwidth of $\varphi$ for edges of 
$A_{out}$ is at most~$2d$.

Recall that $V_i=\{v\in V(G)\mid \psi(v)\leq i\}$.
To bound the number of edges of $Z$ crossing a cut~$\Vicut$, recall that $uv\in 
Z$ if and only if the reverse edge $vu$ is in $E(A_{in})\setminus E(A_{out})$. 
Consider the set $B=E(A_{out})\cap E(A_{in})$ of edges that are shared by both 
arborescences. These are the only edges that are not reversed in $A_{in}$ to 
give $A_{out}\cup Z$. 
Let $B^*$ consist of the edges of $B$ that cross the cut $\Vicut$. As 
$B\subseteq E(A_{out})$ and the cutwidth of $\varphi$ for the edges of $A_{out}$ 
is at most~$2d$, we have that $|B^*|\le 2d$. Consider the graph obtained by 
removing $B^*$ from $A_{in}$, so that $A_{in}$ falls into a forest of 
in-arborescences. Each leaf of this forest is either a leaf of $A_{in}$ or 
incident to the head of an edge of $B^*$. Since $A_{in}$ has at most $2d$ leaves 
and $|B^*|\le 2d$, the number of leaves of the forest is at most $4d$. This 
means that the forest is the union of at most $4d$ paths, each starting in a 
leaf and ending in a root of an in-arborescence. Let $\mc{P}$ denote the set of 
all these paths. 

Consider a path $P$ of $\mc{P}$, which is a directed path of $A_{in}$. We show 
that $P$ can cross the cut $\Vicut$ at most once. Recall that every edge of $P$ 
is either an edge of $A_{out}$ or an edge of~$Z$ reversed. Whenever an edge of 
$P$ crosses the cut $\Vicut$, then it has to be an edge of $Z$ reversed: 
otherwise, it would be an edge of $E(A_{in})\cap E(A_{out})$, and such edges are 
in $B^*$, which cannot be in $P$ by definition. Thus if $uv$ is an edge of $P$ 
crossing $\Vicut$, then $vu\in Z$, and the topological ordering $\varphi$ 
implies that $\varphi(v)\leq i <\varphi(u)$. In other words, every edge of $P$ 
is crossing the cut from the right to the left, so clearly at most one such edge 
can be in $P$. This gives an upper bound of $|\mc{P}|\le 4d$ on the number of 
edges of $Z$ crossing the cut, completing the required $6d$ upper bound. 
\cqed\end{proof}

As the underlying undirected graph of $U$ and $A_{out}\cup Z$ are the same, 
\autoref{lem:cw-of-scc} implies that the cutwidth of $U$ is at most $6d$. This 
completes the proof of \autoref{lem:SCC-bound}.
\end{proof}

The proof of \autoref{lem:cutwidth} follows easily from putting together the 
ingredients.
\begin{proof}[Proof (of \autoref{lem:cutwidth})]
Consider a minimal solution $M$ and let $D$ be its condensation graph. The 
minimum solution $M$ is the union of $d$ directed paths and this is true also 
for the contracted condensation graph~$D$. Hence \autoref{lem:cw-of-dag} shows 
that $D$ has cutwidth at most $d$. By \autoref{lem:SCC-bound}, each SCC of $M$ 
has cutwidth at most $6d$. Thus \autoref{lem:cw-sum} implies that the cutwidth 
of $M$ is at most $7d$.
\end{proof}

We remark that the bound on the cutwidth in \autoref{lem:cutwidth} is 
tight up to a constant factor: 
Take a constant degree expander on $d$ vertices. It has treewidth 
$\Omega(d)$~\cite{DBLP:journals/jct/GroheM09}, and so its cutwidth is at least 
as large. Now bi-direct 
each (undirected) edge $\{u,v\}$ by replacing it with the directed edges $uv$ 
and~$vu$. Next subdivide every edge $uv$ to obtain edges $ut$ and $tv$ for a 
new vertex~$t$, and make $t$ a terminal of $R$. This yields a strongly 
connected instance $G$. The pattern graph $H$ for this instance is a cycle 
on~$R$, which has $\Theta(d)$ edges, since the terminals are subdivision points 
of bi-directed edges of a constant degree graph with $d$ vertices. As $H$ is 
strongly connected, every minimal solution to $H$ contains the edges $ut$ and 
$tv$ incident to each terminal $t$. Thus a minimal solution contains all of $G$ 
and has cutwidth $\Omega(d)$. 

\section{The treewidth of minimal solutions to almost-caterpillar patterns} 
\label{sec:tw}

In this section, we prove that any minimal solution $M$ to a pattern 
$H\in\mc{C}^*_{\lambda,\delta}$ has the following structure.

\begin{thm}\label{thm:structure}
A minimal solution $M$ to a pattern $H\in\mc{C}^*_{\lambda,\delta}$ is the union of
\begin{itemize}
\item a subgraph $M^c$ (``core'') that is a minimal solution to a sub-pattern 
$H^c$ of $H$, where the latter has at most $(1+\lambda)(\lambda+\delta)$ edges, 
and 
\item a forest $M-E(M^c)$ of either out- or in-arborescences, each of which 
intersects~$M^c$ only at its root.
\end{itemize}
\end{thm}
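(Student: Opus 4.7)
The plan is to reduce to the caterpillar case by transitive equivalence, extract a small ``core'' subpattern that is responsible for all non-pure-leaf demands, and establish the forest structure for the leftover edges via a rerouting argument along the totally ordered spine.

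\textbf{Reduction and core terminals.} Since $H \in \mc{C}^*_{\lambda,\delta}$ is transitively equivalent to some $\hat H \in \mc{C}_{\lambda,\delta}$ and the set of feasible solutions depends only on the transitive closure of the pattern, $M$ is also a minimal solution to $\hat H$. I may therefore work with $\hat H$, which by symmetry I take to be an out-caterpillar with spine $(v_1,\ldots,v_{\lambda_0})$ (where $\lambda_0 \le \lambda$), stars $W_1,\ldots,W_{\lambda_0}$, and extra-edge set $F$ with $|F|\le \delta$; define the core terminals $R^c := \{v_1,\ldots,v_{\lambda_0}\} \cup V(F)$, which has size at most $\lambda+2\delta$.

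\textbf{Building $H^c$.} Include in $H^c$ every spine edge, every edge of $F$, and, for each $u_au_b \in F$ with $u_a \in W_{i_a}$ and $u_b \in W_{i_b}$, the star edge $v_{i_a}u_a$; additionally include $v_{i_b}u_b$ exactly when $i_a < i_b$. In the complementary case $i_a \ge i_b$, the demand $v_{i_b} \to u_b$ is already transitively implied by $H^c$ via the walk $v_{i_b} \to v_{i_b+1} \to \cdots \to v_{i_a} \to u_a \to u_b$, so any minimal solution to $H^c$ covers it automatically. A short case analysis --- separating $\lambda \le 1$, where the scenario $i_a < i_b$ cannot occur, from $\lambda \ge 2$ --- yields $|E(H^c)| \le (1+\lambda)(\lambda+\delta)$.

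\textbf{The core solution and the forest.} Take $M^c \subseteq M$ to be a minimal subgraph satisfying $H^c$. By construction $M^c$ in fact satisfies every non-pure-leaf demand of $\hat H$, so by the minimality of $M$ each edge of $N := M - E(M^c)$ is essential solely for some pure-leaf demand $v_i \to u$ with $u \in W_i \setminus R^c$. It remains to show that $N$ is a forest of out-arborescences, each intersecting $V(M^c)$ only at its root. The key step rules out a vertex $y$ with two distinct in-edges $y_1y,\,y_2y \in N$ essential for pure-leaf demands $v_{i_1}\to u_1$ and $v_{i_2}\to u_2$: WLOG $i_1\le i_2$, so $v_{i_1}$ reaches $y_2$ inside $M$ by traversing the spine $v_{i_1}\to\cdots\to v_{i_2}$ (which lies in $M^c\subseteq M$) and then the existing $v_{i_2}\to y_2$ segment, whence $y_1y$ can be removed and $v_{i_1}\to u_1$ re-routed through $y_2y$, contradicting the minimality of~$M$.

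\textbf{Main obstacle.} The hard part is reconciling the edge bound $(1+\lambda)(\lambda+\delta)$ on $|E(H^c)|$ with the requirement that each component of $N$ meet $V(M^c)$ at exactly one vertex. When pure-leaf paths in $M$ ``detour'' through a second core vertex --- a phenomenon that the input graph $G$ may actively force --- one has to absorb such a detour into $M^c$ by including a representative star edge in $H^c$. Carefully charging these representatives to the limited combinatorial structure imposed by the caterpillar (so that the count stays within budget) is the main technical piece that the formal proof has to develop, and it is exactly the totally ordered spine that makes both the charging and the rerouting argument possible.
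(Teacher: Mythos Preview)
Your proposal has a genuine gap, and you have correctly located it yourself in the ``Main obstacle'' paragraph --- but you have not resolved it, and your combinatorial $H^c$ cannot be made to work without a new idea.

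Concretely: the $H^c$ you define depends only on the pattern $\hat H$, not on $M$. With this $H^c$, your rerouting argument does correctly show that no vertex has two in-edges in $N=M-E(M^c)$ (your analysis of the spine-reroute is fine, and mirrors the first half of the paper's ``same-head'' claim). But that statement is strictly weaker than what is needed. You still have to rule out (a) directed cycles in $N$ and (b) components of $N$ meeting $V(M^c)$ at a non-root vertex. Both of these reduce to the stronger assertion that whenever a vertex has in-degree $\ge 2$ in $M$, \emph{every} incoming edge lies in $M^c$. Your argument does not give this: the second incoming edge could lie in $M^c$, and then nothing forces the first one into $M^c$. Indeed, nothing about your purely combinatorial $H^c$ can force it, because whether a given pure-leaf path detours through core vertices is a property of $M$, not of $\hat H$.

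The paper's fix is precisely to build $H^c$ \emph{adaptively from $M$}. It first fixes a minimal solution $M_I\subseteq M$ to the sub-pattern $I$ consisting of the spine and the $\delta$ extra edges (so $|E(I)|\le\lambda+\delta$), written as a union of paths $P_{st}$. The key combinatorial insight --- which your proposal is missing --- is a \emph{witness lemma}: for any fixed path $P$ in $M$ and any star index $i$, all $i$-necessary edges whose head lies on $P$ but which are not in $E(P)$ are in fact $\ell$-necessary for one single leaf $\ell$ of $S_i$. This is proved by taking the first such edge along $P$ and rerouting any later one through it. Consequently, for each pair $(P_{st},i)$ one needs to add at most one star demand $v_i\ell$ to $H^c$, giving $|E(H^c)|\le |E(I)|+\lambda_0\cdot|E(I)|\le(1+\lambda)(\lambda+\delta)$. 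With this adaptive $H^c$ one can then prove the stronger same-head claim (if a vertex $u$ has two in-edges in $M$, then some $P_{st}$ is forced through $u$, and the witness step for that $P_{st}$ put the offending in-edge into $M^c$), from which both acyclicity of $N$ and the root-only intersection follow.

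So the missing idea is the witness lemma and the resulting adaptive construction of $H^c$; your ``charging'' paragraph gestures at exactly this, but the charging only works because a single leaf witnesses all conflicts along a given $P_{st}$ for a given star, and that fact has to be proved.
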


According to \autoref{lem:cutwidth}, the cutwidth of the \emph{core} $M^c$ is 
therefore at most $7(1+\lambda)(\lambda+\delta)$. It is 
well known~\cite{DBLP:journals/eatcs/Bodlaender88} that the cutwidth is 
an upper bound on the treewidth of a graph, and so also the treewidth of $M^c$ 
is at most $7(1+\lambda)(\lambda+\delta)$. It is easy to see that attaching any 
number of arborescences to $M^c$ does not increase the treewidth. Thus 
we obtain \autoref{crl:tw}, which is the basis for our algorithm to solve 
$\mc{H}$-DSN in case every pattern of $\mc{H}$ is transitively equivalent to an 
almost-caterpillar.

In particular, when adding $\delta$ edges to the pattern of the DST problem, 
which is a single out-star, i.e., a $1$-caterpillar, then the pattern becomes a 
member of $\mc{C}_{1,\delta}$ and hence our result implies a linear treewidth 
bound of $O(\delta)$. The example given at the end of \autoref{sec:cw} also 
shows that there are patterns $H\in\mc{C}_{\lambda,\delta}$ for which every 
minimal solution has treewidth $\Omega(\lambda+\delta)$: just consider the case 
when $H$ is a cycle of length~$\lambda+\delta$ (i.e., it contains a trivial 
caterpillar graph). One interesting question is whether the treewidth bound of 
$7(1+\lambda)(\lambda+\delta)$ in \autoref{crl:tw} is tight. We conjecture that 
the treewidth of any minimal solution to a pattern graph 
$H\in\mc{C}^*_{\lambda,\delta}$ is actually $O(\lambda+\delta)$.

\begin{proof}[Proof (of \autoref{thm:structure})]
  Let $M$ be a minimal solution to a pattern
  $H\in\mc{C}^*_{\lambda,\delta}$. Since every pattern in
  $\mc{C}^*_{\lambda,\delta}$ has a transitively equivalent pattern in
  $\mc{C}_{\lambda,\delta}$ and replacing a pattern with a
  transitively equivalent pattern does not change the space of
  feasible solutions, we may assume that $H$ is actually in
  $\mc{C}_{\lambda,\delta}$, i.e., $H$ consists of a caterpillar of
  length at most $\lambda$ and $\delta$ additional edges.

The statement is trivial if $|E(H)|\leq \delta \le (1+\lambda)(\lambda+\delta)$. Otherwise, according to 
\autoref{dfn:caterpillar}, $H$ contains a $\lambda_0$-caterpillar for some $1\le 
\lambda_0\leq\lambda$ and at most $\delta$ additional edges. Hence let us fix a 
set $F$ of at most $\delta$ edges of $H$ such that the remaining edges of $H$ 
form a $\lambda_0$-caterpillar $C$ for some $1\le \lambda_0\leq\lambda$ with a 
path $(v_1,\ldots,v_{\lambda_0})$ on the roots of the stars~$S_i$. We only 
consider the case when $C$ is an out-caterpillar as the other case is symmetric, 
i.e., every $S_i$ is an out-star. Define $I$ to be the subgraph of $H$ spanned 
by all edges of $H$ except the edges of the stars, i.e., 
$E(I)=E(H)\setminus\bigcup_{i=1}^{\lambda_0} E(S_i)$. Note that $|E(I)|\leq 
\lambda_0+\delta$. We fix a subgraph $M_I$ of $M$ that is a minimal solution to 
the sub-pattern~$I$, and for every $st\in E(I)$ we fix a path $P_{st}$ in~$M_I$. 
Note that $M_I$ is the union of these at most $\lambda+\delta$ paths, since 
$M_I$ is a minimal solution. For each star $S_i$, let us consider a minimal 
solution $M_{S_i}\subseteq M$ to $S_i$; note that $M_{S_i}$ has to be an 
out-arborescence by \autoref{lem:arb}.

For some $i\in\{1,\ldots,\lambda_0\}$, let $\ell$ be a leaf of $S_i$, and let 
$e$ be an edge of $M$. If $M\setminus e$ has no path from $v_i$ to $\ell$, then 
we say that $e$ is \emph{$\ell$-necessary}. More generally, we say that $e$ is 
\emph{$i$-necessary} if $e$ is $\ell$-necessary for some leaf $\ell$ of $S_i$.

\begin{claim}\label{lem:witness}
Let $P$ be a path in $M$, and for some $i\in\{1,\ldots,\lambda_0\}$ let 
$W_i\subseteq E(M)$ contain all $i$-necessary edges $f$ for which $f\notin 
E(P)$, but the head of $f$ is a vertex of $P$. Then there exists one leaf $\ell$ 
of $S_i$ such that every $f\in W_i$ is $\ell$-necessary.
\end{claim}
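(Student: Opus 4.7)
The plan is to exhibit one witness leaf $\ell^*$ by leveraging the arborescence structure of $M_{S_i}$ together with the order in which the heads of $W_i$ appear along $P$. The first observation is that $W_i \subseteq E(M_{S_i})$: if $f \in W_i$ is $\ell$-necessary for some leaf $\ell$ of $S_i$, then every $v_i \to \ell$ path in $M$ uses $f$, and since $M_{S_i} \subseteq M$ contains such a path, $f$ itself lies in $M_{S_i}$. By \autoref{lem:arb}, $M_{S_i}$ is an out-arborescence rooted at $v_i$ whose leaves are exactly the leaves of $S_i$, so each vertex of $M_{S_i}$ has at most one incoming edge and the heads of edges of $W_i$ are pairwise distinct. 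Moreover, any leaf $\ell$ for which $f \in W_i$ with head $w$ is $\ell$-necessary must lie in the subtree of $M_{S_i}$ rooted at $w$, since otherwise the tree path from $v_i$ to $\ell$ would already avoid $f$.

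Let $w_1$ be the head among the edges of $W_i$ that appears earliest on $P$, let $f_1 \in W_i$ be the unique edge of $M_{S_i}$ with head $w_1$, and fix any leaf $\ell^*$ of $S_i$ for which $f_1$ is $\ell^*$-necessary (it exists since $f_1$ is $i$-necessary, and $\ell^*$ is then a descendant of $w_1$ in $M_{S_i}$). I claim this $\ell^*$ works. Indeed, take any $f_c \in W_i$; if $f_c = f_1$ we are done, so assume otherwise, in which case its head $w_c$ differs from $w_1$ and, by the choice of $w_1$, appears strictly after $w_1$ on $P$.

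Suppose for contradiction that there were a $v_i \to \ell^*$ path $Q$ in $M$ avoiding $f_c$; because $f_1$ is $\ell^*$-necessary, $Q$ uses $f_1$ and thus passes through $w_1$, and we let $Q_1$ be its prefix from $v_i$ up to $w_1$. Choose any leaf $\ell_c$ of $S_i$ that witnesses $f_c$'s $i$-necessity, so that $\ell_c$ lies in the subtree of $M_{S_i}$ rooted at $w_c$. Then the concatenation of $Q_1$, the sub-path $P[w_1,w_c]$ of $P$ from $w_1$ to $w_c$, and the tree path of $M_{S_i}$ from $w_c$ down to $\ell_c$ is a walk from $v_i$ to $\ell_c$ inside $M$ in which each segment avoids $f_c$: $Q_1$ does because $Q$ does, $P[w_1,w_c]$ does because $f_c \notin E(P)$ by the definition of $W_i$, and the tree descent does because it uses only edges leaving successive vertices starting at $w_c$, while $f_c$ is the unique edge of $M_{S_i}$ entering $w_c$. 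This contradicts the $\ell_c$-necessity of $f_c$. The delicate point is the choice of $w_1$ as the \emph{earliest} head on $P$: this is precisely what ensures the bridging sub-path $P[w_1,w_c]$ exists for every other head $w_c$, making it possible to splice the prefix of any hypothetical $Q$ with a tree descent reaching $\ell_c$ while completely bypassing $f_c$.
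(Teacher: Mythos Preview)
Your proof is correct and follows essentially the same approach as the paper: both select the edge of $W_i$ whose head appears earliest on $P$, take any leaf $\ell$ witnessing its $i$-necessity, and then derive a contradiction by rerouting through $P$ to the head of any other $f\in W_i$. Your argument is somewhat more explicit than the paper's in that you spell out the three segments $Q_1$, $P[w_1,w_c]$, and the $M_{S_i}$-descent to $\ell_c$ (and justify $W_i\subseteq E(M_{S_i})$ directly), whereas the paper handles the final rerouting step more tersely, but the underlying idea is identical.
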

\begin{proof}
Since all edges of $W_i$ are contained in the out-arborescence $M_{S_i}$, no two 
of them have the same head. Hence we can identify the first edge $e\in W_i$ for 
the path~$P$, i.e., the edge for which the head of every other edge in $W_i$ 
can 
be reached from $e$'s head on $P$. Since $e$ is $i$-necessary, it is $\ell$-necessary for some leaf $\ell$ of $S_i$.  We claim that every other edge of $W_i$ is also $\ell$-necessary. Assume 
the opposite, which means that there is a path $Q$ in $M$ from $v_i$ to $\ell$ 
that does not contain some~$f\in W_i$. On the other hand, every path 
(including~$Q$) from $v_i$ to $\ell$ in $M$ contains the $\ell$-necessary 
edge~$e$. This means that there is a path from $v_i$ through $e$ and $P$ that 
reaches 
the head of~$f$, and this path does not pass through $f$. Hence for any path 
that goes from $v_i$ to some leaf of $S_i$ via $f$, there is an 
alternative route that avoids $f$. This however contradicts the fact that $f$ 
is $i$-necessary.\cqed
\end{proof}

Using this observation, we identify the core $M^c$ of $M$ using the at most 
$\lambda+\delta$ paths $P_{st}$ that make up~$M_I$, and then selecting an 
additional at most $\lambda_0$ paths for each~$P_{st}$, one for each star of 
the 
caterpillar. To construct $M^c$ together with its pattern graph~$H^c$, we 
initially let $M^c=M_I$ and $H^c=I$ and repeat the following step for every 
$st\in E(I)$ and $1\le i \le \lambda_0$. For a given $st$ and $i$, let us check 
if there are $i$-necessary edges $f\notin E(P_{st})$ that have their heads on 
the path $P_{st}\subseteq M_I$. If so, then by \autoref{lem:witness} all these 
edges are $\ell$-necessary for some leaf $\ell$ of $S_i$. We add an arbitrary 
path of $M$ from $v_i$ to $\ell$ (which contains all these edges) to $M^c$ and 
add the edge $v_i\ell$ to $H^c$. After repeating this step for every $st\in 
E(H)$ and $i$, we remove superfluous edges from~$M^c$: as long as there is an 
edge $e\in E(M^c)$, which can be removed while maintaining feasibility for the 
pattern $H^c$, i.e., for every $vw\in E(H^c)$ there is a $v\to w$ path in $M^c$ 
not containing $e$, we remove $e$. Finally, we remove any isolated vertices from 
$M^c$.

Note that the resulting network $M^c$ is a minimal solution to $H^c$ by construction. Also note 
that $H^c$ contains at most $\lambda+\delta$ edges from $I$ and at most 
$\lambda_0\leq\lambda$ additional edges for each edge of $I$, so that 
$|E(H^c)|\leq (1+\lambda)(\lambda+\delta)$. We prove that the remaining graph 
$M^c- E(M)$ consists of out-arborescences, each of which intersects $M^c$ only 
at the root. For this, we rely on the following key observation.

\begin{claim}\label{lem:same-head}
If a vertex $u$ has at least two incoming edges in $M$, then every such edge is 
in the core $M^c$.
\end{claim}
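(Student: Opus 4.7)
The plan is to show that each incoming edge $e_j$ of $u$ is placed into $M^c$ during construction and then survives the pruning step. By the minimality of $M$, every edge of $M$ is necessary for at least one demand of $H=I\cup\bigcup_i S_i$. Consequently, each incoming edge $e_j$ of $u$ falls into (at least) one of two categories: either $e_j$ is $(s,t)$-necessary in $M$ for some $(s,t)\in E(I)$, in which case $e_j$ must lie on the chosen path $P_{st}\subseteq M_I$ (since $M_I\subseteq M$ has an $s\to t$ path, which is then forced to use the $(s,t)$-necessary edge $e_j$), or $e_j$ is $\ell$-necessary in $M$ for some leaf $\ell$ of a star $S_i$, placing $e_j$ in the out-arborescence $M_{S_i}$.

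The next step is to show that $u\in V(M_I)$ whenever $u$ has at least two incoming edges in $M$. Since each $M_{S_i}$ is an out-arborescence, it has in-degree at most one at every vertex, so no two distinct incoming edges of $u$ can lie in the same $M_{S_i}$. I would argue that if all incoming edges of $u$ lay outside $M_I$, they would be distributed among out-arborescences for distinct stars $S_{i_1}, S_{i_2}$; the out-caterpillar spine then connects $v_{i_1}$ and $v_{i_2}$ directionally through $M_I$, allowing one to reroute the $v_{i_j}\to\ell_j$ witness of each such edge via $M_I$ together with the other star's arborescence, thereby freeing one of the incoming edges of $u$ and contradicting the minimality of $M$. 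Hence at least one incoming edge of $u$ must lie in $M_I$, so $u\in V(M_I)$, and therefore $u$ lies on some path $P_{st}\subseteq M_I$.

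Given $u\in V(P_{st})$, at most one incoming edge of $u$ is on the simple path $P_{st}$. Any other incoming edge $e_j$ satisfies $e_j\notin E(P_{st})$ while its head $u$ still lies on $P_{st}$. If $e_j$ is $i$-necessary for some $i$, then the construction step for the pair $(st,i)$ fires: by \autoref{lem:witness}, $e_j$ is $\ell$-necessary for a common witness leaf $\ell$ of $S_i$, so the added $v_i\to\ell$ path of $M$ contains $e_j$, and the demand $v_i\ell$ is added to $H^c$. Otherwise $e_j$ is already in $M_I$ by the first step and hence in the initial $M^c$. To handle pruning, note that by construction each $e_j$ is necessary in $M$ for some demand $d\in E(H^c)$ (either an edge of $I$ or a newly added $v_i\ell$ demand); since $M^c\subseteq M$, removing $e_j$ from $M^c$ also destroys the $d$-connection in $M^c$, so $e_j$ cannot be removed by the pruning loop.

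The main obstacle is the second step, establishing $u\in V(M_I)$. The rerouting argument via the spine is intuitive but needs care to treat all configurations of overlapping star arborescences at $u$, and in particular must handle the degenerate case $\lambda_0=1$ separately (where the spine is trivial, so $H=S_1\cup F$ and the $F$-side demands must be invoked directly to force $u\in V(M_I)$). Once that step is secured, the rest of the proof is a direct consequence of the construction of $M^c$ together with \autoref{lem:witness}.
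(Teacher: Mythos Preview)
Your proposal is correct and follows essentially the same route as the paper: establish via a spine-based rerouting argument that $u$ lies on some $P_{st}$ with $st\in E(I)$, then invoke the construction step for the pair $(st,i)$ together with Claim~\ref{lem:witness} to place each incoming edge into $M^c$, with pruning survival following from necessity in $M$. Two small remarks: the paper phrases the intermediate step slightly more strongly (there exists $st\in E(I)$ such that \emph{every} $s\to t$ path in $M$ passes through $u$, obtained by taking ``every $I$-demand has a $u$-avoiding path'' as the contradiction hypothesis rather than ``all incoming edges lie outside $M_I$''), and your worry about $\lambda_0=1$ is unnecessary---with a single star, two incoming edges of $u$ cannot both lie in the out-arborescence $M_{S_1}$, so one of them is $I$-necessary immediately and no rerouting is needed.
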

\begin{proof}
  First we show that there is an $st\in E(I)$ such that every $s\to t$ path in 
$M$ goes through~$u$. Suppose for contradiction that for every $st\in E(I)$
  there is a path from $s$ to $t$ in $M$ avoiding~$u$. Since $M$ is a minimal 
solution, the edges entering $u$ must then be needed
  for some stars $S_i$ of the pattern $H$ instead. Let $e$ and $f$ be two edges entering $u$. As $e$ and $f$ have
  the same head, they cannot be part of the same out-arborescence
  $M_{S_i}$. Therefore, there are indices $i<j$ such that (w.l.o.g.)
  $e$ is $i$-necessary and $f$ is $j$-necessary.

There is a path in $M$ from the root $v_i$ of $S_i$ to the root $v_j$ of $S_j$, 
due to the path $(v_1,\ldots,v_{\lambda_0})$ in the caterpillar $C\subseteq H$. 
Since path $(v_1,\ldots,v_{\lambda_0})$ is part of $I$, our assumption on
$e$ and $f$ implies that there is a path $P$ in $M$ from $v_i$ to $v_j$ that avoids both $e$ and $f$. As $f\in E(M_{S_j})$, there is a path $Q$ in $M$ starting in $v_j$ 
and passing through $f$. This path cannot contain $e$, as $e$ and $f$ 
have the same head~$u$. The existence of $P$ and $Q$ implies that $u$ can be 
reached from $v_i$ by a path through $v_j$ and~$f$, avoiding the edge $e$. 
Thus for any edge $v_i \ell\in E(S_i)$, if there is a $v_i\to \ell$ path going 
through $e$ (and hence vertex $u$), then it can be rerouted to avoid $e$ and use 
edge $f$ instead. This however contradicts the fact that $e$ is $i$-necessary.

We now know that there is an $st\in E(I)$ such that every $s\to t$
path in $M$ goes through~$u$. Suppose that there is an edge
$e\not\in E(M^c)$ entering $u$.  If $e$ is needed for some
$s't'\in E(I)$ in~$M$, then $e$ is also present in $M^c$, and we are
done. Otherwise, as $M$ is a minimal solution, edge $e$ is
$i$-necessary for some $i\in\{1,\ldots,\lambda_0\}$. Consider now the
step in the construction of $M^c$ when we considered $st\in E(I)$ and
integer $i$. As we have shown, the $s\to t$ path $P_{st}$ goes through
$u$. Thus $e$ is an $i$-necessary edge not in $E(P_{st})$ such that
its head is on $P_{st}$. This means that we identified a leaf $\ell$
of $S_i$ such that $e$ is $\ell$-necessary, introduced $v_i\ell$ into
$H^c$, and added a $v_i\to \ell$ path to $H^c$, which had to contain
$e$. Moreover, since all paths from $v_i$ to $\ell$ in $M$ pass
through $e$, edge $e$ then remains in $M^c$ when removing superfluous
edges.  \cqed
\end{proof}

We are now ready to show that every component of the remaining part is an 
out-arborescence and intersects the core only at the root. 
\begin{claim}\label{lem:M-M^c}
The remaining graph $M^+:=M- E(M^c)$ is a forest of out-arbores\-cences, each 
of which intersects~$M^c$ only at the root.
\end{claim}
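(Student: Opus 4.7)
The claim has three parts: (a) every vertex of $M^+$ has in-degree at most $1$ in $M^+$; (b) $M^+$ is acyclic; and (c) for each weakly connected component $T$ of $M^+$, the intersection $V(T)\cap V(M^c)$ is either empty or the single root of $T$. Together, (a) and (b) yield that each component of $M^+$ is an out-arborescence, and (c) gives the desired separation from $V(M^c)$ at non-root vertices.

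Part (a) follows directly from \autoref{lem:same-head}: two incoming $M^+$-edges at a vertex $u$ would give $u$ in-degree $\ge 2$ in $M$, forcing both into $M^c$ and contradicting $M^+=M- E(M^c)$. For (b), I would argue by contradiction: assume $M^+$ contains a directed cycle $C$. By (a) and \autoref{lem:same-head}, each vertex of $C$ has in-degree exactly $1$ in $M$, so no $M$-edge enters $C$ from outside. Each cycle edge lies in some $M_{S_i}$ (since $M^+\cap M_I=\emptyset$) and is used on a path $v_i\to\ell$ of $S_i$; because no $M$-path can reach $C$ from outside, $v_i$ must sit on $C$. But if $v_i\in V(C)$ with $i\ge 2$, then $P_{v_{i-1}v_i}\subseteq M_I\subseteq M^c$ contributes an $M$-edge into $v_i$ distinct from the cycle-edge, violating the in-degree-$1$ count; similarly no $F$-edge can end at $v_1$. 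Hence all cycle edges lie in the single out-arborescence $M_{S_1}$, which is acyclic---contradicting the existence of $C$.

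For (c), let $u$ be a non-root vertex of a component $T$ with unique incoming $M^+$-edge $e$; by (a) and \autoref{lem:same-head}, $u$ has in-degree exactly $1$ in $M$ (namely $e$) and $0$ in $M^c$. Suppose for contradiction that $u\in V(M^c)$: then $u$ has positive out-degree in $M^c$, and inspecting which kind of path of $M^c$ the outgoing edge belongs to, $u$ must be the starting vertex of some $P_{st}\subseteq M_I$ (so $u=s$) or some added $v_j\to\ell$ path (so $u=v_j$). In the main case $u=s$ for some $st\in E(I)$: since $e\notin M_I$, the $I$-demands are satisfied by $M_I\subseteq M$ without $e$, so $e$ must be $i$-necessary for some $i$. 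Its head $u$ is the first vertex of $P_{st}$ and $e\notin E(P_{st})$, so $e\in W_i$ and is the ``first'' edge of $W_i$ along $P_{st}$. By \autoref{lem:witness}, the construction of $M^c$ adds to $M^c$ a $v_i\to\ell$ path of $M$ containing $e$, where $\ell$ is a leaf for which $e$ is $\ell$-necessary. Since $e$ remains $\ell$-necessary in any subgraph $M^c\subseteq M$ that still supports a $v_i\to\ell$ path, it is not removed as superfluous, yielding $e\in E(M^c)$---contradicting $e\in M^+$. The remaining degenerate sub-case $u=v_j$ not a source of any $I$-demand forces $\lambda_0=1$ with $v_1$ untouched by $F$; then $v_1\notin V(M_I)$, so any demand-witnessing simple path in $M$ avoids $v_1$, and minimality of $M$ rules out every edge into $v_1$, so $e$ cannot exist.

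I expect the main obstacle to be the case analysis in (c), in particular the verification that the construction of $M^c$ catches $e$ via the ``first edge of $W_i$'' mechanism of \autoref{lem:witness} and that the subsequent superfluous-edge removal cannot delete $e$---this hinges on the observation that $\ell$-necessity in $M$ is inherited by every subgraph of $M$ that still supports a $v_i\to\ell$ path, such as $M^c$ after the addition of the new demand $v_i\ell$ to $H^c$.
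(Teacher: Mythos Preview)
Your proposal is correct and follows essentially the same three-part structure as the paper's proof, using \autoref{lem:same-head} and \autoref{lem:witness} in the same way. Two small points where you diverge: (i) in the degenerate sub-case of (c), the paper obtains the contradiction more directly by observing that $e$ (being $1$-necessary) must lie in the out-arborescence $M_{S_1}$, yet has head $v_1$, forcing a cycle in $M_{S_1}$---your minimality argument reaches the same endpoint but the statements ``$v_1\notin V(M_I)$'' and ``any demand-witnessing simple path in $M$ avoids $v_1$'' are not quite right as written (paths for $S_1$-demands start at $v_1$; what you need is that no \emph{simple} such path uses an edge \emph{into} $v_1$); (ii) your assertion ``$M_I\subseteq M^c$'' (hence $M^+\cap M_I=\emptyset$) in (b) is not justified after the superfluous-edge removal step---the correct reason the cycle edge into $v_i$ is distinct from some other incoming edge is that every edge of $M^+$ is \emph{not needed for $I$}, so there is an alternate $v_{i-1}\to v_i$ path in $M$ avoiding it; the paper's own phrasing (``$M_I$ contains no edges of $O$'') shares this imprecision.
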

\begin{proof}%
If $M^+$ is not a forest of out-arborescences, then there must be two edges in 
$M^+$ with the same head or there must be a directed cycle in $M^+$. The former 
is excluded by \autoref{lem:same-head}. For the latter, first note that if an 
edge $e\in E(M)$ is not $i$-necessary for any $i\in\{1,\ldots,\lambda_0\}$, then 
it is needed for $I$, since $M$ is a minimal solution. Hence $e$ was added to 
$M^c$ as a part of $M_I$, and remained in $M^c$ even after removing superfluous 
edges, as $E(I)\subseteq E(H^c)$. In particular, this means that every edge of 
$M^+$ is part of some~$M_{S_i}$. Furthermore, any directed cycle $O$ in $M^+$ 
must contain edges from at least two out-arborescences $M_{S_i}$ and $M_{S_j}$ 
with~$i<j$. If one of the roots $v_i$ or $v_j$ of $M_{S_i}$ and~$M_{S_j}$, 
respectively, is not part of $O$, then there is a path from $v_i$ or $v_j$ 
leading to $O$. In case both $v_i$ and $v_j$ are part of $O$, we also get such a 
path, since $M_I$ contains a path from $v_i$ to $v_j$, but $M_I$ contains no 
edges of $O$. Hence there must be a vertex $u$ on $O$ that is the head of two 
edges of which one belongs to $O\subseteq M^+$. However this is again excluded 
by \autoref{lem:same-head}, and so $M^+$ contains no cycle.

For the second part of the claim, assume that an out-arborescence of $M^+$ 
intersects $M^c$ at a vertex $u$ that is not its root. As noted above, any edge 
that is not $i$-necessary for any $i\in\{1,\ldots,\lambda_0\}$ is part of the 
core $M^c$. Hence there is an edge $e\in E(M^+)$ that has $u$ as its head and is 
$i$-necessary for some $i\in\{1,\ldots,\lambda_0\}$. There must be at least one 
edge of $M^c$ incident to~$u$, since $u\in V(M^c)$ and we removed all isolated 
vertices from $M^c$. The in-degree of $u$ is 0 in~$M^c$, since 
\autoref{lem:same-head} and $e\notin E(M^c)$ implies that the in-degree of $u$ 
is~exactly $1$ in $M$. Because $M^c$ is a minimal solution to~$H^c$ and $u$ has 
in-degree $0$ in $M^c$, there is at least one edge of $H^c$ whose tail is $u$: 
the (at least 1) edges going out from $u$ can be used only by paths starting 
at~$u$. Suppose first that there is an edge $uw\in E(H^c)$ and that it is from 
$E(I)$. Consider the step of the construction of $M^c$ and $H^c$ when we 
considered the edge $uw$ and the integer $i$. The path $P_{uw}$ is starting at 
$u$, and edge $e$ is an $i$-necessary edge with $e\not\in E(P_{uw})$ whose head 
is on $P_{uw}$. Thus we have identified a leaf $\ell$ of $S_i$ such that $e$ is 
$\ell$-necessary, introduced $v_i\ell$ into~$H^c$, and added a $v_i\to \ell$ 
path to $H^c$, which had to contain $e$. As $e$ is $\ell$-necessary, it would 
have remained in $M^c$ even after removing superfluous edges, contradicting 
$e\not\in E(M^c)$. Thus we can conclude that there is no edge of $I$ with $u$ as 
its tail.  This means that if $uw\in E(H^c)$, then it is only possible that $u$ 
is the root $v_{\lambda_0}$ of the last star $S_{\lambda_0}$, as every other 
root $v_j$ with $j<\lambda_0$ is incident to the edge $v_jv_{j+1}$ of~$I$. 
Moreover, if $\lambda_0>1$, then $v_{\lambda_0-1}v_{\lambda_0}\in E(I)$, which 
leads to a contradiction, since then $M^c$ would contain a path from 
$v_{\lambda_0-1}$ to $u=v_{\lambda_0}$, but the only edge entering $u$ is $e$ 
and we have $e\notin E(M^c)$. Thus $i=\lambda_0=1$ is the only possibility. This 
however would mean that the arborescence $M_{S_i}$ contains a cycle, as $e\in 
E(M_{S_i})$ and the head of $e$ is the root $v_i$ of $M_{S_i}$. This leads to a 
contradiction, and so we can conclude that no out-arborescence of $M^+$ 
intersects $M^c$ at a vertex different from its root.  \cqed
\end{proof}

Since we have already established that $M^c$ is a minimal solution to 
$H^c$ with
$|E(H^c)|\leq (1+\lambda)(\lambda+\delta)$, \autoref{lem:M-M^c} completes the proof  of \autoref{thm:structure}.
\end{proof}

\section{An algorithm to find optimal solutions of bounded treewidth: proof of 
 \autoref{crl:alg}}
\label{sec:alg}

This section is devoted to proving \autoref{crl:alg}.
That is, we present an algorithm based on dynamic programming that computes the
optimum solution to a given pattern $H$, given that the treewidth of
the optimum is bounded by~$\omega$, and given that the vertex cover
number of~$H$ is $\tau$.  Roughly speaking, we will exploit the first
property by guessing the bags of the tree decomposition of the optimum
solution, which can be done in $n^{O(\omega)}$ time as the size of a
bag is at most~$\omega+1$. Since each bag forms a separator of the
optimum, we are able to precompute partial solutions connecting a
subset of the terminals to a separator. We need to also guess the
subset of the terminals for which there are $2^k$ choices. These
partial solutions are then put together at the separators to form
larger partial solutions containing more vertices. The algorithm
presented in this section is not the most obvious one: it was
optimized to exploit that the vertex cover number of $H$ is
$\tau$. While this optimization requires the implementation of
additional ideas and makes the algorithm more complicated, it allows
us to replace a factor of $2^{O(k\omega)}$ in the running time with the
potentially much smaller~$2^{O(k+\tau\omega\log\omega)}$ 
(i.e.,~if~$\tau=o(k/\log\omega)$).

\paragraph{Defining the dynamic programming table.}
Our algorithm maintains a table~$T$, where in each entry we aim at storing a 
partial solution of minimum cost that provides partial connectivity of certain 
type between the terminals contained in the network and a separator $U$ of the 
solution. The entries are computed by recursively putting together partial 
solutions. In order to do this, we also need to keep track of how vertices of a 
separator~$U$ of a partial solution are connected to each other. For this we 
need the following formal definitions encoding the internal connectivity of $U$ 
and the types of connectivity between terminals and $U$.

A minimal solution $M\subseteq G$ (and therefore also any optimum solution) to a 
pattern $H$ is the union of $d=|E(H)|$ paths~$P_{st}$, one for each edge $st\in 
E(H)$. Throughout this section, given a minimal solution~$M$ we fix such a path 
$P_{st}$ for each demand $st$, and we let $\mc{P}$ denote the set of all these 
paths. Let now $N\subseteq M$ be a partial solution of a minimal solution $M$. 
For a separator $U\subseteq V(G)$ the \emph{$U$-projection} of $N$ encodes the 
connectivity that $N$ provides between the vertices of $U$ by short-cutting each 
path $P_{st}\in\mc{P}$ to its restriction on $U$. Formally, it is a set of edges 
containing $uv\in U^2$ if and only if there is an edge $st\in E(H)$ and a $u\to 
v$ subpath $P$ of some $P_{st}\in\mc{P}$ such that $P$ is contained in $N$ and 
the internal vertices of $P$ do not belong to~$U$. Note that the path $P$ can 
also be an edge $uv\in E(N[U])$ induced by $U$ in~$N$, and the $U$-projection 
will in fact contain any such edge, since the edge must be part of some path 
$P_{st}\in\mc{P}$ of the minimal solution~$M$. On the other hand, observe that 
even if $N$ contains a $u\to v$ path with internal vertices not in~$U$, it is 
not necessarily true that $uv$ is in the $U$-projection: we put $uv$ into the 
$U$-projection only if there is a path in $\mc{P}$ that has a $u\to v$ subpath. 
Thus this definition is more restrictive than just expressing the connectivity 
provided by $N$, as it encodes only the connectivity essential for the paths in 
$\mc{P}$.
The exact significance of this subtle difference will be apparent later: for 
example, this more restrictive definition implies fewer edges in the 
$U$-projection, which makes the total number of possibilities smaller.

The property that $H$ has vertex cover number $\tau$ implies that $H$ is the 
union of $c\leq 2\tau$ in- and out-stars $S_1,\ldots,S_c$. We denote the root of 
$S_j$ by $r_j$ and its leaf set by~$L_j$. Let also $R_{in}$ and $R_{out}$ 
contain the roots $r_j$ of all in- and out-stars, respectively. By 
\autoref{lem:arb}, a minimal solution $M$ to $H$ is the union of $c$ 
arborescences, each with at most $|V(M)|-1$ edges. Note that any path 
$P_{st}\in\mc{P}$ implies the existence of a set of edges in a $U$-projection of 
$M$ forming a path in the $U$-projection. It is not difficult to see that if we take any arborescence that is 
the union of paths in $\mc{P}$, then its $U$-projection is a forest of arborescences with 
at most $|U|-1$ edges in the $U$-projection: for example, in case of an out-arborescence, it is not possible that two distinct edges enter the same vertex of $U$ in the projection. Therefore, if we have
$|U|\leq\omega+1$, then the fact that $\mc{P}$ is the union of $c$ arborescences implies 
that the $U$-projection of every partial solution $N\subseteq M$ contains at 
most $c\omega$ edges. 
Note that here it becomes essential how we defined the $U$-projection: even if 
$N$ consists of only a single path $P_{st}$ going through every vertex of $U$, 
it is possible that there are $\binom{|U|}{2}$ pairs $uv\in U^2$ such that 
$N$ has a $u\to v$ path; however, with our definition, only $|U|-1$ edges would 
appear in the $U$-projection. 

We now describe the {\em type} of connectivity provided by a partial solution 
$N$ by a tuple $(Q,I,B,\mc{A})$, which is defined in the following way.  First, 
$Q$ is the set of terminals appearing in $N$, and $I$ is the subgraph of the 
partial solution induced by $U$, i.e., $I=N[U]$. The set $B\subseteq (U\times 
R_{in})\cup(R_{out}\times U)\cup(U\times U)$ describes how $N$ provides 
connectivity between the vertices of the separator~$U$, and between the 
separator $U$ and the roots, as follows. First, an edge $uv\in U\times U$ 
appears in~$B$ if $uv$ is in the $U$-projection of $N$. Moreover, an edge $uv\in 
(U\times R_{in})\cup(R_{out}\times U)$ is in $B$ if there is a path $P_{st}\in 
\mc{P}$ that has a $u\to v$ subpath in~$N$ (regardless of what internal vertices 
this subpath has).

The last item $\mc{A}$ requires more explanation. Consider an out-star $S_j$ 
rooted at $r_j\in R_{out}$ outside of $N$ and let $\ell\in L_j$ be one of its 
leaves for which $\ell\in V(N)$. Intuitively, to classify the type of 
connectivity provided by $N$ to the leaf $\ell$, we should describe the subset 
$U_\ell\subseteq U$ of vertices from which $\ell$ is reachable in $N$. Then we 
know that in order to extend $N$ into a full solution where $\ell$ is reachable 
from $r_j$, we need to ensure that an $r_j\to v$ path exists for some~$v\in 
U_\ell$. However, describing these sets $U_\ell$ for every leaf $\ell\in L_j$ 
may result in an unacceptably high number of different types (of 
order~$2^{O(k\omega)}$), which we cannot afford to handle in the claimed running 
time. Thus we handle the leaves in a different way. For every root $r_j\in 
R_{out}$, we define a set $A_{j}\subseteq U$ as follows. Initially we set 
$A_j=\emptyset$ and then we consider every leaf $\ell\in L_j\cap V(N)$ one by 
one. Let $P$ be the $r_j\to \ell$ path in $\mc{P}$. If $P\subseteq N$, then 
there is nothing to be done for this leaf $\ell$ and we can proceed with the 
next leaf. Otherwise, suppose that the {\em maximal suffix} of $P$ in $N$ starts 
at vertex~$w$, that is, $w$ is the first vertex of $P$ such that the $w\to \ell$ 
subpath of $P$ is a subgraph of $N$. If $w\not\in U$, then we declare the type 
of $N$ as {\em invalid.} Otherwise, we extend $A_j$ with $w$ and proceed with 
the next leaf in $L_j\cap V(N)$. This way, we define a set $A_j$ for every root 
$r_j\in R_{out}$ and in a similar manner, we can define a set $A_j$ for every 
root $r_j\in R_{in}$ as well (then $w$ is defined to be the last vertex of the 
$\ell\to r_j$ path $P$ such that the $\ell\to w$ subpath is in $N$). The family 
$\mc{A}=(A_1,\dots,A_c)$ in the tuple $(Q,I,B,\mc{A})$ is the collection of 
these sets $A_j$.

The table $T$ used in the dynamic programming algorithm has entries 
$T[i,Q,U,I,B,\mc{A}]$, where $i\le |V(G)|$ is an integer, $Q\subseteq R$ is a 
subset of terminals, $U$ is a subset of at most $\omega+1$ vertices, $I$ is a 
subgraph of $G[U]$ with at most $c\omega$ edges, $B$ is a subset of $ (U\times 
R_{in})\cup(R_{out}\times U)\cup(U\times U)$ with $|B\cap(U\times U)|\le 
c\omega$, and $\mc{A}=(A_1,\dots,A_c)$ with every $A_j$ being a subset of 
$U$. 
We say that a network $N\subseteq G$ \emph{satisfies} such an entry if the 
following properties hold:
\begin{enumerate}[(P1)]
 \item $N$ has at most $i$ vertices, which include $U$ and has $V(N)\cap 
R=Q$,\label{it:vert}
 \item $I$ is the graph induced by $U$ in $N$, i.e., $I=N[U]$,\label{it:I}
 \item for every edge $uv\in B$ there is a $u\to v$ path in $N$, 
and\label{it:B}
 \item for any out-star (resp., in-star) $S_j$ with root $r_j$ and any $\ell\in 
L_j\cap V(N)$, there is a $w\to \ell$ path (resp., $\ell\to w$ path) in $N$ for 
some $w\in A_j\cup\{r_j\}$. \label{it:R}
\end{enumerate}

Let $N\subseteq M$ be an induced subgraph of the minimal solution $M$. We say 
that $N$ is {\em $U$-attached} in $M$ if $U\subseteq V(N)$ and the neighbourhood 
of each vertex in $V(N)\setminus U$ is fully contained in $V(N)$. The following 
statement is straightforward from the definition.
\begin{lem}\label{prop:satself}
  If $N\subseteq M$ is a $U$-attached induced subgraph of $M$ with $i$
  vertices, then it has a valid type $(Q,I,B,\mc{A})$ and $N$
  satisfies the entry $T[i,Q,U,I,B,\mc{A}]$.
\end{lem}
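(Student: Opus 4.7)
The plan is to exhibit the tuple $(Q,I,B,\mc{A})$ explicitly in terms of $N$ and then verify in turn that (a) it is a valid type for the table and (b) $N$ satisfies the four properties (i)--(iv). Concretely, set $Q:=V(N)\cap R$ and $I:=N[U]$; let the intra-separator part $B\cap(U\times U)$ be the $U$-projection of $N$; place an edge $uv\in(U\times R_{in})\cup(R_{out}\times U)$ into $B$ exactly when some $P_{st}\in\mc{P}$ has a $u\to v$ subpath contained in $N$; and construct $\mc{A}=\{A_1,\dots,A_c\}$ by running on $N$ the procedure used to define $\mc{A}$ in the main text.

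The size part of validity is handled by the decomposition reviewed just before the statement: because $H$ has vertex cover number $\tau$, the minimal solution $M$ is the union of $c\le 2\tau$ in- or out-arborescences, so $M[U]$ is the union of $c$ arborescence-forests on at most $|U|\le\omega+1$ vertices and hence has at most $c\omega$ edges, giving $|E(I)|\le c\omega$. The same accounting for $U$-projections (each arborescence contributes at most $|U|-1\le\omega$ edges to the projection) shows that the $U$-projection of $M$, and therefore of $N$, has at most $c\omega$ edges, so $|B\cap(U\times U)|\le c\omega$.

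The main obstacle is showing that the procedure constructing $\mc{A}$ never labels $N$ invalid. Fix an index $j$ and a leaf $\ell\in L_j\cap V(N)$, and let $P$ be the associated path in $\mc{P}$. If $P\subseteq N$ nothing is added to $A_j$ for $\ell$ and we are done; otherwise, let $w$ be the endpoint of the maximal suffix (for an out-star) or prefix (for an in-star) of $P$ that lies in $N$, so the unique edge $e$ of $P$ incident to $w$ just outside this maximal portion belongs to $E(M)\setminus E(N)$. Assume for contradiction that $w\in V(N)\setminus U$. By the $U$-attachedness of $N$ every $M$-neighbour of $w$ lies in $V(N)$, and since $N$ is an induced subgraph of $M$ this forces every $M$-edge incident to $w$, in particular $e$, to belong to $E(N)$, contradicting the maximality of the suffix/prefix. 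Therefore $w\in U$, and the procedure legitimately records $w$ in $A_j$.

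The four conditions then fall out of the construction: (i) holds because $|V(N)|=i$, $U\subseteq V(N)$ follows from $U$-attachedness, and $V(N)\cap R=Q$ by definition; (ii) is immediate from $I=N[U]$; (iii) holds because every edge placed in $B$ is witnessed by a $u\to v$ subpath of some $P_{st}\in\mc{P}$ lying entirely inside $N$; and (iv) holds because for each $\ell\in L_j\cap V(N)$ either $P\subseteq N$, in which case $r_j\in A_j\cup\{r_j\}$ witnesses the required path, or the vertex $w\in U$ added to $A_j$ is by construction the endpoint of a $w\to\ell$ (respectively $\ell\to w$) subpath of $P$ contained in $N$.
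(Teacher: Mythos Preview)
Your proof is correct and is exactly the argument the paper has in mind; the paper itself just declares the lemma ``straightforward from the definition'' and gives no proof, so you have simply unpacked the definitions in the natural way. The key nontrivial step---that the vertex $w$ produced by the $\mc{A}$-procedure must lie in $U$ because $U$-attachedness forces every $M$-edge at a vertex of $V(N)\setminus U$ to be present in the induced subgraph $N$---is the right observation, and your bounds $|E(I)|\le c\omega$ and $|B\cap(U\times U)|\le c\omega$ just restate the counting the paper does in the paragraphs preceding the lemma (note that $N[U]=M[U]$ since $N$ is induced in $M$ and $U\subseteq V(N)$, which is what lets you transfer the bound).
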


\paragraph{The algorithm.}
For each entry $T[i,Q,U,I,B,\mc{A}]$ the following simple algorithm computes 
some network satisfying properties~\ref{it:vert} to~\ref{it:R}, for increasing 
values of~$i$. It first computes entries for which $i\leq\omega+1$ by simply 
checking whether the graph $I$ satisfies properties~\ref{it:vert} 
to~\ref{it:R}. If it does then $I$ is stored in the entry, and otherwise the 
entry remains empty.
For values $i>\omega+1$, the entries are computed 
recursively by combining precomputed networks with a smaller number of vertices. 
The algorithm sets the entry $T[i,Q,U,I,B,\mc{A}]$ to the minimum cost network 
$N$ that has properties~\ref{it:vert} to~\ref{it:R} and for which 
$N=T[i_1,Q_1,U_1,I_1,B_1,\mc{A}_1]\cup T[i_2,Q_2,U_2,I_2,B_2,\mc{A}_2]$ for some 
$i_1,i_2<i$. Again, if no such network exists, we leave the entry empty. 

\paragraph{Correctness of the algorithm.}
According to this algorithm any non-empty entry of the table stores some 
network that has properties~\ref{it:vert} to~\ref{it:R}.
The following lemma shows that for certain entries of the table our algorithm 
computes an optimum partial solution. 
Recall from \autoref{sec:prelim} that we may assume that a given tree 
decomposition is smooth, i.e., if the treewidth is $\omega$ then 
$|b_w|=\omega+1$ and $|b_w\cap b_{w'}|=\omega$ for any two adjacent nodes 
$w,w'$ of the decomposition tree.
If $D'$ is a subtree of a tree decomposition $D$ and $b_w$ a bag of~$D'$, we say 
that \emph{$D'$ is attached via $b_w$ in $D$} if $w$ is the only node of $D'$ 
adjacent to nodes of $D$ not in $D'$.

\begin{lem}\label{lem:alg-tw}
Let $D_M$ be a smooth tree decomposition of $M$, where the treewidth of $M$ 
is~$\omega$. Let~$D$ be a subtree of $D_M$ attached via a bag $U$ in $D_M$ and 
let $N\subseteq M$ be the sub-network of~$M$ induced by all vertices contained 
in the bags of $D$. Then $N$ has a valid type $(Q,I,B,\mc{A})$ and satisfies the 
entry $T[i,Q,U,I,B,\mc{A}]$ for $i=|V(N)|$.  Moreover, at the end of the 
algorithm the entry $T[i,Q,U,I,B,\mc{A}]$ contains a network satisfying the 
entry and with cost at most that of~$N$.
\end{lem}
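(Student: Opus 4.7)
The first claim reduces immediately to \autoref{prop:satself} once we observe that $N$ is $U$-attached in $M$. This in turn follows from the standard separator property of tree decompositions: since $D$ is attached to $D_M$ via the bag $U$, any vertex $v\in V(N)\setminus U$ has its bag-subtree confined to $D$ (it cannot leave $D$ without crossing $U$, but $v\notin U$), so every neighbour of $v$ in $M$ sits in some bag of $D$ and therefore belongs to $V(N)$.

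For the second claim I plan to induct on $|V(D)|$. The base case $|V(D)|=1$ forces $V(N)=U$ and $i=|V(N)|=\omega+1$ by smoothness, so $N$ coincides with its own $U$-induced subgraph~$I$; the algorithm's base case for $i\le\omega+1$ verifies that $I$ satisfies the entry and stores it at cost $\mathrm{cost}(N)$.

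In the inductive step I would root $D$ at the attachment node $w_0$, choose a child $c$ of $w_0$ in $D$, and split $D$ along the edge $w_0 c$ into $D_1:=D-D_c$ (still attached to $D_M$ via $U$) and $D_2:=D_c$ (attached via $b_c$). Writing $V_j:=\bigcup_{w\in V(D_j)} b_w$ and $N_j:=M[V_j]$, smoothness yields $V_0:=V_1\cap V_2=b_{w_0}\cap b_c$ of size $\omega$ and therefore $|V(N_j)|<|V(N)|$ for both~$j$. Since each $N_j$ is attached in $D_M$ via its bag, the inductive hypothesis supplies a non-empty entry $T[i_j,Q_j,U_j,I_j,B_j,\mc{A}_j]$ matching the type of $N_j$ (with $U_1=U$ and $U_2=b_c$) whose stored network $N_j^*$ satisfies $\mathrm{cost}(N_j^*)\le\mathrm{cost}(N_j)$.

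The main obstacle is then verifying that the combination $N_1^*\cup N_2^*$ considered by the algorithm satisfies the target entry $(i,Q,U,I,B,\mc{A})$ with $\mathrm{cost}(N_1^*\cup N_2^*)\le\mathrm{cost}(N)$. The cost bound comes from inclusion--exclusion: since $V_0\subseteq U_1\cap U_2$, every edge of $M$ with both endpoints in $V_0$ lies in both $I_1$ and $I_2$ and is thus shared by $N_1^*$ and $N_2^*$, so $\mathrm{cost}(N_1^*\cup N_2^*)\le\mathrm{cost}(N_1)+\mathrm{cost}(N_2)-\mathrm{cost}(M[V_0])=\mathrm{cost}(N)$, using $N_1\cap N_2=M[V_0]$; the vertex count $|V(N_1^*\cup N_2^*)|\le i_1+i_2-|V_0|=i$ follows from $V(N_1^*)\cap V(N_2^*)\supseteq V_0$. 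The technical core is checking properties (ii)--(iv) for the union: the separator property of $V_0$ in $M$ combined with what $(I_j,B_j,\mc{A}_j)$ records about the boundary behaviour of $N_j$ at its separator ensures that paths in $N_1^*$ and $N_2^*$ can be concatenated across $V_0$ to realize every demanded $s\to t$ path or star-leaf path required of $N$, while one also has to argue that no spurious induced edges or extra connectivity are introduced inside $U$ beyond those already in $I$.
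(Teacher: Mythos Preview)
Your proof follows the same strategy as the paper's: the first statement via \autoref{prop:satself}, and the second by induction on the number of nodes of $D$, splitting $D$ along an edge at the attachment node into subtrees $D_1$ (still attached via $U$) and $D_2$ (attached via the neighbouring bag), applying the inductive hypothesis to both pieces, and then arguing that the union $N'_1\cup N'_2$ of the stored networks satisfies the target entry at cost at most $\mathrm{cost}(N)$ via edge inclusion--exclusion over the shared separator. The paper fills in your ``technical core'' with a separate claim verifying properties~\ref{it:vert}--\ref{it:R} for the union; the device for~\ref{it:B} and~\ref{it:R} is exactly what you sketch, namely partitioning each required path of $N$ into maximal subpaths lying in $N_1$ or $N_2$ with endpoints in the separator (or at a root), and replacing each subpath by a path guaranteed by property~\ref{it:B} of the corresponding stored network; for~\ref{it:R} one additionally traces back through the definition of the sets $A^h_j$ to locate the appropriate prefix.

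One small correction to your last sentence: properties~\ref{it:B} and~\ref{it:R} only require the \emph{existence} of certain paths in the union, so ``extra connectivity'' is harmless and need not be argued away; only property~\ref{it:I} demands the exact equality $N'[U]=I$, and this is the one place where you genuinely have to rule out spurious edges.
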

\begin{proof}
The first statement follows from \autoref{prop:satself}.
The proof of the second statement is by induction on the number of nodes in tree 
decomposition $D$ of $N$. If $D$ contains only one node, which is associated 
with the bag $U$, then the statement is trivial, since in this case 
$i=|V(N)|=|U|=\omega+1$ and $N=N[U]=I$ by \autoref{prop:satself}, so that the 
algorithm stores $N$ in the entry. If $D$ contains at least two nodes, let $w_1$ 
be the node corresponding to $U$, and let~$w_2$ be an adjacent node to~$w_1$ in 
$D$. The edge $w_1w_2$ separates the tree $D$ into two subtrees. 
For~$h\in\{1,2\}$, let $D_h$ be the corresponding subtree of $D$ 
containing~$w_h$, i.e., their disjoint union is $D$ minus the edge~$w_1w_2$. If 
$N_1$ and $N_2$ are the sub-networks of $N$ induced by the bags of $D_1$ and 
$D_2$, respectively, then $N=N_1\cup N_2$. Let $U_h$ be the set of vertices in 
the bag corresponding to the node $w_h$ (in particular $U_1=U$) and let 
$i_h=|V(N_h)|$. It is easy to see that $N_h$ is a $U_h$-attached induced 
subgraph of $M$ for $h=1,2$. Hence \autoref{prop:satself} implies that $N_h$ has 
a valid type $(Q_h,I_h,B_h,\mc{A}_h)$ with $\mc{A}_h=\{A_{1}^h,\ldots,A_{c}^h\}$ 
and satisfies the entry $T[i_h,Q_h,U_h,I_h,B_h,\mc{A}_h]$ for~$h\in\{1,2\}$. As 
$D$ is attached via $U$ in $D_M$, clearly also $D_h$ is attached via $U_h$ in 
$D_M$. Moreover, since $D$ is smooth we have $U_1\setminus U_2\neq\emptyset$ and 
$U_2\setminus U_1\neq\emptyset$. In particular, for each~$h\in\{1,2\}$, there is 
some vertex $v$ of $N$, which is not contained in~$N_h$, as the bags containing 
$v$ must form a connected subtree of $D$. Thus $i_h<i$ and $i>|U|=\omega+1$. 
Furthermore, by induction we may assume that the entry 
$T[i_h,Q_h,U_h,I_h,B_h,\mc{A}_h]$ contains a network $N'_h$ satisfying the entry 
and with cost at most that of $N_h$. Thus by the following claim, the union of 
the two networks $N'_1$ and $N'_2$ stored in the entries 
$T[i_h,Q_h,U_h,I_h,B_h,\mc{A}_h]$ for $h\in\{1,2\}$, respectively, is considered 
by the algorithm as a candidate to store in the entry $T[i,Q,U,I,B,\mc{A}]$.

\begin{claim}\label{clm:union}
The solutions $N'_1$ and $N'_2$ that are stored in the respective entries 
$T[i_1,Q_1,U_1,I_1,B_1,\mc{A}_1]$ and $T[i_2,Q_2,U_2,I_1,B_1,\mc{A}_2]$ can be 
combined to a solution $N'=N'_1\cup N'_2$ satisfying $T[i,Q,U,I,B,\mc{A}]$.
\end{claim}
\begin{proof}
By induction, $N'_1$ and $N'_2$ have property~\ref{it:vert}, and so the 
solutions $N'_1$ and $N'_2$ have at most $i_1$ and $i_2$ vertices, respectively, 
and $N'_1$ contains $U_1$ and $Q_1$, while $U_2$ and $Q_2$ are contained in 
$N'_2$. Since $U_1\cap U_2$ separates $N_1$ and~$N_2$, we have that $V(N_1)\cap 
V(N_2)=U_1\cap U_2$, and so~$i=i_1+i_2-|U_1\cap U_2|$, as $i=|V(N)|$ and
$i_h=|V(N_h)|$ for $i\in\{1,2\}$. Hence the union~$N'$ of $N'_1$ and $N'_2$ 
contains $U=U_1$ and $Q=Q_1\cup Q_2$, and has at most $i$ vertices, so that we 
obtain property~\ref{it:vert} for~$N'$. For property~\ref{it:I}, by induction 
$N'_1$ has property~\ref{it:I}, so that $I_1=N'_1[U_1]$. We also have 
$I=N[U]=N_1[U_1]=I_1$, since $N$ has property~\ref{it:I}, $U=U_1$, and by 
definition of $N_1$ and~$I_1$. Hence $I=N'_1[U_1]=N'[U]$, and so we obtain 
property~\ref{it:I} for~$N'$.

For property~\ref{it:B}, consider an edge $uv\in B$, for which by
definition of $B$ there is a $P_{st}\in \mc{P}$ and a  $u\to v$ subpath $P$ of $P_{st}$ fully contained in $N$. The path $P$
may use the edges of both $N_1$ and~$N_2$. By definition of~$B$, at
least one of $u$ and $v$ is in $U=U_1$, while $U_1\cap U_2$ separates
$N_1$ and~$N_2$. This means that we can partition $P$ into subpaths
such that for each subpath $P'$ there is an $h\in\{1,2\}$ for which
$P'$ uses only the edges of $N_h$, has endpoints in $(U_h\times
U_h)\cup(U_h\times R_{in})\cup(R_{out}\times U_h)$, and internal
vertices outside of $U_h$. If both endpoints $u',v'$ of $P'$ are
in~$U_h$, then the $U_h$-projection of $N$ will contain a
corresponding edge $u'v'$, which is also contained in
$B_h$. Similarly, $B_h$ will contain $u'v'$ if one of $u'$ and $v'$
lies in $R_{in}$ or $R_{out}$. Thus in any of these cases,  property~\ref{it:B} for~$N'_h$ implies that $N'_h$ contains a
$u'\to v'$ path. By replacing
each subpath $P'$ of $P$ with a path of $N'_1$ or $N'_2$ having the
same endpoints, we obtain that $N'$ also contains a $u\to v$ path, and
thus $N'$ has property~\ref{it:B}.

Finally, let us verify that $N'$ satisfies
property~\ref{it:R}. Consider an out-star $S_j$ with a leaf $\ell\in
L_j\cap Q$ and let $h\in\{1,2\}$ such that $\ell\in Q_h$. We have to
show that $N'$ contains an $A_j\cup \{r_j\}\to \ell$ path.  As $N'_h$
satisfies the entry $T[i_h,Q_h,U_h,\mc{A}_h,B_h,I_h]$, we know that
$N'_h$ has an $A^h_j\cup \{r_j\}\to \ell$ path $P'_h$. If $P'_h$ starts
in $r_j$, then we are done: then the path $P'_h$ shows that the
supergraph $N'$ of $N'_h$ contains a path from $A_j\cup\{r_j\}$ to
$\ell$. Suppose therefore that $P'_h$ starts in a vertex $v\in
A^h_j$. When defining the type of $N_h$, vertex $v$ was added to the
set $A^h_j$ because there is a leaf $\ell^*\in L_j\cap Q_h$ such that the
maximal suffix of the path $P_{r_j\ell^*}\in \mc{P}$ starts in
$v$. Suppose that the maximal suffix of $P_{r_j\ell^*}$ in $N$ starts
in some vertex $w\in A_j\cup \{r_j\}$, and let $Q$ be the $w\to v$
subpath of $P_{r_j\ell^*}$. We claim that, with an argument similar to
the previous paragraph, the $w\to v$ subpath of $Q$ can be turned into
a path $Q'$ of $N'$. Indeed, $Q$ can be partitioned into subpaths such
that for each subpath there is an $h^*\in\{1,2\}$ for which the
subpath uses only the edges of $N_{h^*}$, has endpoints in $(U_{h^*}\times
U_{h^*})\cup (R_{out}\times U_{h^*})$, and internal vertices outside of $U_h$.
Property~\ref{it:B} for~$N'_{h^*}$ implies that each such subpath can
be replaced by a path of $N'_{h^*}$, which proves the existence of the
required $w\to v$ subpath $Q'$ of $N'$. Then the concatenation of $Q'$
and $P'_h$ gives an $A_j\cup \{r_j\}\to \ell$ walk in $N'$, what we
had to show.  The case when $S_j$ is an in-star is symmetric.

 In conclusion, $N'$ has properties~\ref{it:vert} 
to~\ref{it:R} and satisfies $T[i,Q,U,I,B,\mc{A}]$.
\cqed\end{proof}

To conclude the proof, we need to show that the algorithm stores a network with 
cost at most that of $N$ in the entry $T[i,Q,U,I,B,\mc{A}]$.  Let 
$\gamma(E(\widetilde N))$ denote the total cost of all edges of a 
network~$\widetilde N$. As \autoref{clm:union} shows, the algorithm considers at some point $N'=N'_1\cup N'_2$ as a potential candidate for the entry $T[i,Q,U,I,B,\mc{A}]$, hence in the end the algorithm stores in this entry a partial solution with cost not more than $\gamma(E(N'))$. Thus the only thing we need to show is that $\gamma(E(N'))\le \gamma(E(N))$. As $U_1\cap U_2$ separates $N_1$ and $N_2$, the only edges that $N_1$ and $N_2$ can share are the edges in $U_1\cap U_2$, that is, $\gamma(E(N))=\gamma(E(N_1))+\gamma(E(N_2))-\gamma(E(N[U_1\cap U_2]))$.
By property~\ref{it:I}, every edge of $N[U_1\cap U_2]$ appears in both $N'_1$ 
and $N'_2$. This means that $N'_1$ and $N'_2$ share at least this set of edges 
(they can potentially share more edges outside of~$U_1\cap U_2$). Therefore, we 
have $\gamma(E(N'))\leq \gamma(E(N'_1))+\gamma(E(N'_2))-\gamma(E(N[U_1\cap 
U_2]))=\gamma(E(N))$, what we had to show.
\end{proof}

Since an optimum solution is minimal, we may set $M$ to an optimum solution to 
$H$ in \autoref{lem:alg-tw}. If we also set $D=D_M$ and $Q=R$ in the lemma we 
get that $A_j=\emptyset$ for each $j\in\{1,\ldots,c\}$. Any entry of the table 
for which $Q=R$ and $A_j=\emptyset$ for each $A_j\in\mc{A}$ contains a feasible 
solution to pattern $H$ or is empty, due to property~\ref{it:R}. Hence if $M$ 
has treewidth $\omega$ and $H$ is the union of $c$ in- and out-stars, by 
\autoref{lem:alg-tw} there is an $i$ such that entry $T[i,Q,U,I,B,\mc{A}]$ 
will contain a feasible network with cost at most that of $M$, i.e., an optimum 
solution to $H$. By searching all entries of the table for which $Q=R$ and 
$A_j=\emptyset$ for each $A_j\in\mc{A}$ we can thus find the optimum solution to 
$H$. 

\paragraph{Bounding the runtime.} 
The number of entries of the table $T$ is bounded by the number of possible 
values for $i$, sets $Q$, $U$, $A_1$, $\ldots$, $A_c$, $B$, and graphs $I$. For 
$i$ there are at most $n$ possible values. As $Q\subseteq R$ and $|R|=k$, there 
are $2^k$ possible sets $Q$, and since $U\subseteq V(G)$ with $|U|\leq\omega+1$ 
and $|V(G)|=n$, there are $n^{O(\omega)}$ subsets $U$. For each $A_j\in\mc{A}$ 
we choose a subset of~$U$ and thus there are $2^{\omega+1}$ such sets. The total 
number of sets $\mc{A}$ is thus $2^{c(\omega+1)}$, given a set $U$. The set $B$ 
contains  at most $\omega+1$ edges for each of the $c$ star centers in 
$R_{in}\cup R_{out}$ (i.e.,~a total of~$2^{c(w+1)}$ possibilities) and at most 
$c\omega$ edges induced by~$U$. As there are at most $2{\omega+1 \choose 2}< 
(\omega+1)^2$ possible edges induced by $U$, the number of possibilities for $B$ 
to contain at most $c\omega$ such edges is at most 
$\sum_{l=0}^{c\omega}{(\omega+1)^2 \choose l} \leq (\omega+1)^{2c\omega}$. Thus 
the total number of possible sets~$B$, given a fixed~$U$, is 
$2^{O(c\omega\log\omega)}$. The graph $I$ has at most $c\omega$ edges incident 
to the vertices of~$U$, and thus as before there are at most 
$2^{O(c\omega\log\omega)}$ possible such graphs. Therefore the number of entries 
in the table $T$ is $2^{O(k+c\omega\log\omega)}n^{O(\omega)}$.

In case $i\leq\omega+1$, the algorithm just checks whether $I$ has 
properties~\ref{it:vert} to~\ref{it:R}, and each of these checks can be done in 
time polynomial in $\omega$. In case $i>\omega+1$, every pair of entries with 
$i_1,i_2<i$ needs to be considered in order to form the union of the stored 
partial solutions. For the union, properties~\ref{it:vert} to~\ref{it:R} can be 
checked in polynomial time. Thus the time to compute an entry is 
$2^{O(k+c\omega\log\omega)}n^{O(\omega)}$, from which the total running time 
follows as $c\leq 2\tau$. This completes the proof of \autoref{crl:alg}.

\section{Characterizing the hard cases}
\label{sec:hard}

We now turn to proving the second part of \autoref{thm:main}, i.e., that
$\mc{H}$-DSN is W[1]-hard for every class~$\mc{H}$ where the patterns are not 
transitively equivalent to almost-caterpillars.
As we will see later, we need the minor technical requirement that the class 
$\mc{H}$ is recursively enumerable, in order to prove the following hardness 
result via reductions.

\begin{thm}\label{thm:main-hard}
Let $\mc{H}$ be a recursively enumerable class of patterns for which there are 
no constants $\lambda$ and $\delta$  such that 
$\mc{H}\subseteq\mc{C}^*_{\lambda,\delta}$. Then the problem $\mc{H}$-DSN is 
\textup{W[1]}-hard for parameter~$k$.
\end{thm}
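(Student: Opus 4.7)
The plan is to establish \autoref{thm:main-hard} in two stages that mirror the outline sketched in \autoref{sec:hard}. First, I would prove a combinatorial characterization lemma (the one labelled \autoref{lem:only-if}) stating that whenever $\mc{H}$ is not contained in any $\mc{C}^*_{\lambda,\delta}$, the class $\mc{H}$ must, possibly after identification of vertices, contain arbitrarily large copies of at least one of three obstacle families: strongly connected graphs, pure diamonds, or flawed diamonds. Once this dichotomy is established, I would present a separate parameterized reduction for each of the three obstacle families and conclude by case analysis, since at least one of them must apply under the hypothesis.

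For the combinatorial lemma I would argue by contrapositive. Suppose the patterns of $\mc{H}$ avoid the three obstacles beyond some constant size. Having no large strongly connected subgraph (even modulo vertex identification) restricts each pattern's transitive closure to look like a DAG with only small SCCs, each of which can be absorbed at the cost of a bounded number of ``extra'' edges, contributing to~$\delta$. Absence of large pure or flawed diamonds in this DAG skeleton then prevents the branching structure from spreading out, so that the underlying DAG, up to transitive closure and vertex identifications, collapses onto a bounded-length directed path with side stars attached, which is exactly a $\lambda_0$-caterpillar with $\lambda_0\leq\lambda$ for some constant $\lambda$. The arguments in each step are Ramsey-style: if any of the three quantities (SCC size, number of branching edges, longest DAG path) were unbounded across $\mc{H}$, one could extract arbitrarily large substructures of the corresponding obstacle type. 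The hypothesis that $\mc{H}$ is recursively enumerable lets us find such obstacles effectively, which is needed to make the reduction computable.

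For the hardness reductions, each obstacle yields a parameterized reduction from a canonical W[1]-hard problem such as \pname{Multicolored Clique} or \pname{Grid Tiling}. The strongly connected case builds on Guo, Niedermeier and Such{\'y}~\cite{DBLP:journals/siamdm/GuoNS11}: bidirected connectivity demands between terminals permit an encoding of rows and columns of a clique instance, with terminal count bounded by the clique parameter. Pure and flawed diamond obstacles correspond to patterns resembling the vertex-disjoint union of two stars (as in the non-monotonicity example from the introduction), and they admit reductions where the two ``sides'' of the diamond are used to select the row and column indices of a clique, with internal edges of the diamond enforcing consistency of the choices; the flawed version handles the asymmetric case where the two stars point in incompatible directions. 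In every case the input graph $G$ and weights are crafted so that the minimum-cost solution must route demands through gadgets encoding a clique-witness.

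The main obstacle will be proving the combinatorial classification \autoref{lem:only-if}: one has to verify that the three listed obstacle families are truly exhaustive, i.e.\ that any pattern sufficiently far from $\mc{C}^*_{\lambda,\delta}$ produces at least one of them, and this must be done up to the weak notions of transitive equivalence and vertex identification. The hardness reductions themselves, once the right obstacle is isolated, should follow via gadget constructions in the spirit of existing W[1]-hardness proofs for directed Steiner-type problems, though care is required to keep the number of terminals controlled by the clique parameter so that the reduction is indeed FPT.
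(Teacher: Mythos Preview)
Your overall strategy---first proving a structural characterization showing that one of a small list of obstacle families must occur, then giving a separate W[1]-hardness reduction for each---is exactly the paper's approach, and you correctly identify the transfer via identification of vertices (the paper formalizes this as passing to the closure of $\mc{H}$ under identification and transitive equivalence, \autoref{lem:closure}).

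There is, however, a misconception and a gap.

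First, your description of flawed diamonds is incorrect. A flawed $\alpha$-diamond is \emph{not} two stars ``pointing in incompatible directions''; both stars still point the same way (both in or both out), and the flaw is a single extra vertex $x$ with edges to (or from) both roots. The distinction between pure and flawed is needed because in the characterization proof one sometimes cannot avoid a leftover vertex that reaches both roots, and this leftover survives the identifications. Your reduction sketch for the flawed case would therefore need to be rethought; the paper's reduction is essentially the pure-diamond reduction with two extra unit-cost edges from $x$ to the roots.

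Second, your proposed argument for the combinatorial lemma has a gap that the paper avoids by taking a different route. You suggest: no large SCC $\Rightarrow$ every pattern has only small SCCs, which can be ``absorbed'' into~$\delta$. But $\delta$ must be a \emph{single} constant working for all of $\mc{H}$, and a pattern can have unboundedly many small SCCs, so absorbing them does not obviously yield a bounded $\delta$. The paper instead observes that if $\mc{H}$ (closed under identification) contains no long cycle, then no pattern of $\mc{H}$ has a large matching (\autoref{lem:match}), hence every pattern has a vertex cover $X$ of bounded size. From there one analyses the stars centred at vertices of $X$: two large stars of the same orientation with no directed path between their centres yield a diamond (\autoref{lem:find-diamond}), so the absence of large diamonds forces the large-star centres to lie on a bounded-length directed path, which is exactly the caterpillar spine. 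This vertex-cover argument is the key idea you are missing.
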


A major technical simplification is to assume that the class $\mc{H}$ is closed 
under identifying terminals and transitive equivalence. As we show in 
\autoref{sec:reductions}, this assumption is not really restrictive: it is 
sufficient to prove hardness for the closure of $\mc{H}$ under identification 
and transitive equivalence, since any W[1]-hardness result for the closure can 
be transferred to $\mc{H}$. 
For classes closed under these operations, it is possible to give an elegant 
characterization of the classes that are not almost-caterpillars. There are only 
a few very specific reasons why a class $\mc{H}$ is not in 
$\mc{C}^*_{\lambda,\delta}$ for any $\lambda$ and $\delta$: either $\mc{H}$ 
contains every directed cycle, or $\mc{H}$ contains every ``pure diamond,'' or 
$\mc{H}$ contains every ``flawed diamond'' (see \autoref{sec:obstr-sccs-diam} 
for the precise definitions). Then in \autoref{sec:reductions-1}, we provide 
a W[1]-hardness proof for each of these cases, completing the hardness part of 
\autoref{thm:main}.

\subsection{Closed classes}\label{sec:reductions}

\begin{figure}[t]
\centering
\includegraphics[scale=1]{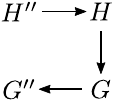}
\caption{A schematic representation of the reduction in \autoref{lem:closure}.}
\label{fig:red-closure}
\end{figure}

We define the  operation of \emph{identifying terminals} in the following way: 
given a partition $\mc{V}$ of the vertex set $V(H)$ of a pattern graph $H$, each 
set $W\in\mc{V}$ is identified with a single vertex of $W$, after which any 
resulting isolated vertices and self-loops are removed, while parallel edges 
having the same head and tail are replaced by only one of these copies. A 
class of patterns is \emph{closed} under this operation if for any pattern $H$ 
in the class, all patterns that can be obtained by identifying terminals are 
also in the class. Similarly, we say that a class $\mc{H}$ is closed under 
transitive equivalence if whenever $H$ and $H'$ are two transitively equivalent 
patterns such that $H\in\mc{H}$, then $H'$ is also in $\mc{H}$. The closure of 
the class $\mc{H}$ under identifying terminals and transitive equivalence is the 
smallest closed class $\mc{H'}\supseteq \mc{H}$. It is not difficult to see that 
any member of the closure can be obtained by a single application of identifying 
terminals and a subsequent replacement with a transitively equivalent pattern.

The following lemma shows that if we want to prove W[1]-hardness for a class, 
then it is sufficient to prove hardness for its closure. More precisely, due to 
a slight technicality, the actual statement we prove is that it is sufficient 
to prove W[1]-hardness for a decidable subclass of the closure.

\begin{lem}\label{lem:closure}
  Let $\mc{H}$ be a recursively enumerable class of patterns, let
  $\mc{H}'$ be the closure of~$\mc{H}$ under identifying
  terminals and transitive equivalence, and let $\mc{H}''$ be a decidable 
subclass of~$\mc{H}'$. There is a parameterized reduction from $\mc{H}''$-DSN to
  $\mc{H}$-DSN with parameter $k$.
\end{lem}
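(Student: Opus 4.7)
Given an input instance $(G,R,H'')$ of $\mc{H}''$-DSN with parameter $k=|R|$, the plan is to produce an equivalent instance $(G^*,R^*,H)$ of $\mc{H}$-DSN, where $H\in\mc{H}$ is located by enumeration and $G^*$ is obtained from $G$ by attaching zero-cost gadgets that emulate the identification of terminals promised by \autoref{prop:closure}.

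First, I would locate a suitable $H\in\mc{H}$: by \autoref{prop:closure}, $H''$ is transitively equivalent to some graph $H_I$ obtained from a member of $\mc{H}$ by identifying terminals. Since $\mc{H}$ is recursively enumerable, I enumerate its members, and for each candidate $\tilde H$ I iterate over all partitions $\mc{V}$ of $V(\tilde H)$, form the identified graph, and test whether its transitive closure is isomorphic to that of $H''$; this check is decidable for each $\tilde H$, and the enumeration must halt because \autoref{prop:closure} guarantees a witness. Let $\pi:V(H)\to V(H_I)$ denote the (partial) identification map, let $\sigma:V(H_I)\to V(H'')$ be a bijection witnessing transitive equivalence, and write $\phi=\sigma\circ\pi$.

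Next, I would construct $G^*$ by taking $G$ and adding, for every $u\in V(H)$, a new terminal $w_u$ together with zero-cost bidirected gadget edges: if $\phi(u)$ is defined, connect $w_u$ to $\phi(u)\in R$; for each class $W$ of $\pi$ that is isolated in $H_I$, make $\{w_{u'}:u'\in W\}$ strongly connected via a zero-cost bidirected cycle. Define $R^*=\{w_u:u\in V(H)\}$ and use $H$ as the pattern (via the bijection $u\leftrightarrow w_u$), guaranteeing $H^*\in\mc{H}$. For the forward direction, given a feasible $N\subseteq G$ for $H''$, I would set $N^*$ to be $N$ together with all gadget edges: for each edge $uv\in E(H)$ with endpoints in non-isolated classes, $\pi(u)\pi(v)$ is an edge of $H_I$, hence via $\sigma$ lies in the transitive closure of $H''$, hence corresponds to a directed $\phi(u)\to\phi(v)$ path in $H''$ whose individual edges are realized by $N$; concatenating with the gadget edges at $w_u$ and $w_v$ yields the required $w_u\to w_v$ walk in $N^*$, while edges lying within an isolated class are handled by the intra-class cycle. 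Conversely, given $N^*$ I would set $N:=N^*\cap G$: for any $s't'\in E(H'')$, transitive equivalence provides a path $y_0,\dots,y_m$ in $H_I$ from $\sigma^{-1}(s')$ to $\sigma^{-1}(t')$, each edge $y_iy_{i+1}$ pulls back to some $u_iv_i\in E(H)$, and the $w_{u_i}\to w_{v_i}$ path in $N^*$ is forced to exit $w_{u_i}$ at its unique $V(G)$-gadget neighbour $\sigma(y_i)$ and to enter $w_{v_i}$ at $\sigma(y_{i+1})$. Splicing the $V(G)$-portions of these paths produces an $s'\to t'$ walk in $N$, and since all gadget edges have cost zero, costs are preserved both ways.

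The hard part will be establishing that this is a genuine parameterized reduction, i.e., bounding $k^*=|R^*|=|V(H)|$ and the enumeration time by a \emph{computable} function of $k$. Since there are only $2^{O(k^2)}$ directed graphs on at most $k$ vertices, and decidability of $\mc{H}''$ lets me filter those that actually belong to $\mc{H}''$, for each such $H''$ the enumeration of $\mc{H}$ halts in finite time; taking a maximum over this finite family yields a computable bound $f(k)$ on both $|V(H)|$ and the search time. Once $H$ is in hand, constructing $G^*$ takes only polynomial time in $n$ and $f(k)$, which together with the correctness argument finishes the FPT reduction.
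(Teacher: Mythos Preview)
Your proposal is correct and follows essentially the same approach as the paper: enumerate $\mc{H}$ to find a witness $H$ (using decidability of $\mc{H}''$ to bound the search by a computable function of $k$), then attach zero-cost strongly connected gadgets so that the vertices of $H$ that get identified become interchangeable in the host graph. The only cosmetic difference is that the paper keeps each original terminal $t$ and adds the remaining vertices of its class $W_t$ on a zero-cost cycle through $t$, whereas you introduce a fresh copy $w_u$ for every $u\in V(H)$ and link it to $\phi(u)$ by bidirected zero-cost edges; this extra layer of indirection forces you to track isolated classes separately, but the correctness and parameter-bound arguments are the same.
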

\begin{proof}%
Let us fix an enumeration of the graphs in $\mc{H}$, and consider the function 
$g:\mc{H'}\to\mathbb{N}$ that maps any graph $H'\in\mc{H'}$ to the number of 
vertices of the first graph $H\in\mc{H}$ in the enumeration such that $H'$ can 
be obtained from $H$ by identifying terminals and transitive equivalence. We 
define $f(k)=\max\{g(H'')\mid \textup{$H''\in\mc{H''}$ and $|V(H'')|=k$}\}$ to 
be the largest size of such an $H\in\mc{H}$ for any graph of 
$\mc{H''}\subseteq\mc{H}'$ with $k$ vertices. Note that $f$ only depends on the 
parameter $k$ and the classes $\mc{H}$ and $\mc{H''}$. Furthermore, $f$ is a 
computable function: as $\mc{H''}$ is decidable, there is an algorithm that 
first computes every $H''\in\mc{H''}$ with $k$ vertices, and then starts 
enumerating $\mc{H}$ to determine $g(H'')$ for each such $H''$.

For the reduction (see \autoref{fig:red-closure}), let an instance of 
$\mc{H''}$-DSN be given by an edge-weighted directed graph~$G''$ and a pattern 
$H''\in\mc{H''}$. We first enumerate patterns $H\in\mc{H}$ until finding one 
from which $H''$ can be obtained by identifying terminals and transitive 
equivalence. The size of $H$ is at most $f(k)$ if $k=|V(H'')|$, and checking 
whether a given pattern of $\mc{H''}$ can be obtained from $H$ by identifying 
terminals can be done by brute force. Thus the time needed to compute $H$ 
depends only on the parameter $k$.

Let $W_t\subseteq V(H)$ denote the set of vertices that are identified with 
$t\in V(H')$ to obtain~$H''$. In $G''$ we add a strongly connected graph on 
$W_t$ with edge weights~$0$ for every $t\in V(H'')$, by first adding the 
vertices $W_t\setminus\{t\}$ to $G''$ and then forming a cycle of the vertices 
of $W_t$. It is easy to see that we obtain a graph $G$ for which any solution 
$N\subseteq G$ to~$H$ corresponds to a solution $N''\subseteq G''$ to $H''$ of 
the same cost, and vice versa. Since the new parameter $|V(H)|$ is at most 
$f(k)$ and the size of $G$ is larger than the size of $G''$ by a factor bounded 
in terms of~$f(k)$, this is a proper parametrized reduction from  $\mc{H''}$-DSN 
to  $\mc{H}$-DSN.
\end{proof}

\subsection{Obstructions: SCCs and diamonds}
\label{sec:obstr-sccs-diam}

\begin{figure}[t]
{a)\includegraphics[scale=1]{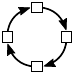}}\hfill
{b)\includegraphics[scale=1]{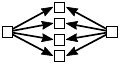}}\hfill
{c)\includegraphics[scale=1]{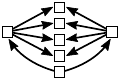}}\hfill
{d)\includegraphics[scale=1]{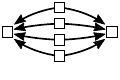}}\hfill
{e)\includegraphics[scale=1]{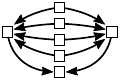}}
\caption{The obstruction appearing in \autoref{thm:char}: a) a directed cycle 
of length $4$, b) a pure $4$-out-diamond, c)~a~flawed $4$-out-diamond, d) a 
pure $4$-in-diamond, e) a flawed $4$-in-diamond.}
\label{fig:diamond}
\end{figure}

 To show the hardness 
 for a closed class that is not the subset of 
 $\mc{C}^*_{\lambda,\delta}$ for any $\lambda$ and $\delta$, we will 
characterize such a class in terms 
 of the occurrence of arbitrarily large cycles, and another class of patterns 
called ``diamonds'' 
 (cf.~\autoref{fig:diamond}). 

\begin{dfn}\label{dfn:diamond}
A \emph{pure $\alpha$-diamond} graph is constructed as follows. Take a vertex 
set $L$ of size $\alpha\geq 1$, and two additional vertices $r_1$ and $r_2$. 
Now add edges such that $L$ is the leaf set of either two in-stars or two 
out-stars $S_1$ and $S_2$ with roots $r_1$ and $r_2$, respectively. If we add 
an additional vertex $x$ with edges $r_1x$ and $r_2x$ if $S_1$ and $S_2$ are 
in-stars, and edges $xr_1$ and $xr_2$ otherwise, the resulting graph is a 
\emph{flawed $\alpha$-diamond}. We refer to both pure $\alpha$-diamonds and 
flawed $\alpha$-diamonds as \emph{$\alpha$-diamonds}. If $S_1$ and $S_2$ are 
in-stars we also refer to the resulting $\alpha$-diamonds as 
\emph{in-diamonds}, and otherwise as \emph{out-diamonds}.
\end{dfn}

The goal of this section is to prove the following useful characterization 
precisely describing classes that are not almost-caterpillars.

\begin{lem}\label{thm:char}
Let $\mc{H}$ be a class of pattern graphs that is closed under identifying 
terminals and transitive closure. Exactly one of the following statements is 
true:
\begin{itemize}
\item $\mc{H}\subseteq \mc{C}^*_{\lambda,\delta}$ for some constants $\lambda$ 
and $\delta$.
\item $\mc{H}$ contains every directed cycle, or every pure in-diamond, or
every pure out-diamond, or every flawed in-diamond, or every flawed out-diamond.
\end{itemize}
\end{lem}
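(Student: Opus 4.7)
The plan is to prove the contrapositive: assuming $\mc{H}$ does not contain every cycle, nor every pure in-, every pure out-, every flawed in-, nor every flawed out-diamond, I will exhibit constants $\lambda, \delta$ with $\mc{H} \subseteq \mc{C}^*_{\lambda,\delta}$. The reverse exclusivity follows from the introduction's observation that graphs in $\mc{C}^*_{\lambda,\delta}$ have at most $\lambda+2\delta$ vertices of positive in- and out-degree, together with an analogous edge-count bound ruling out arbitrarily large diamonds. My first step exploits closure under identification to make each obstruction family downward-closed in size: identifying two consecutive vertices of a cycle yields a shorter cycle, and identifying two leaves of an $\alpha$-diamond yields an $(\alpha-1)$-diamond of the same type. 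Under the hypothesis, a uniform integer $N$ then exists such that no pattern in $\mc{H}$ yields, via identification and transitive equivalence, any cycle or diamond of size $\geq N$. A companion reduction handles SCCs: an SCC $U$ of size $s$ in some $H \in \mc{H}$, after identifying all of $V(H)\setminus U$ into a single vertex, yields a strongly connected quotient whose transitive closure is a complete bidirected graph on at most $s+1$ vertices, transitively equivalent to a cycle of length $\leq s+1$; hence $s < N$, and condensing SCCs reduces everything to the case of DAG patterns.

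The Ramsey-type heart of the proof bounds the middle vertices of a DAG pattern $H$ using the reachability partial order. For an antichain $A = \{a_1, \ldots, a_m\}$ of middle vertices, each $a_i$ must have its predecessor and successor outside $A$ (otherwise antichain-ness fails), so identifying $V(H) \setminus A$ into a single vertex $x$ gives a graph with bidirected edges $x \leftrightarrow a_i$ for each $i$ and no edges inside $A$; this is strongly connected and transitively equivalent to an $(m+1)$-cycle, so $m < N$. Dually, for a chain $m_1 \to \cdots \to m_L$ of middle vertices, first replace $H$ by a transitively equivalent pattern containing the direct edges $m_i \to m_{i+1}$, then identify all non-chain vertices into a single vertex $v$; the result contains the cycle $v \to m_1 \to \cdots \to m_L \to v$, yielding $L < N-1$. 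By Dilworth's theorem, the total middle-vertex count is then $\lambda = O(N^2)$, and they serve as the caterpillar spine.

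The final step bounds the number of stray edges $\delta$ via the diamond hypotheses. With middle vertices bounded, non-middle vertices are sources or sinks, attached to the spine via stars. Two sources sharing a common out-neighborhood of size $\geq N$, after identifying all other structure away, form a pure $N$-out-diamond; sinks symmetrically give pure in-diamonds. Two off-spine star centers with shared leaves attached to a common spine predecessor (or successor) yield a flawed diamond of the same size, as does mixing in- and out-star attachments along the spine. The absence of large instances of each of the four diamond families therefore caps the corresponding stray configuration, giving a uniform $\delta = O(N^c)$ and enforcing a consistent star orientation, which together complete $\mc{C}^*_{\lambda,\delta}$-membership. The main obstacle is precisely this last assembly: matching each of the four diamond families with a distinct caterpillar-breaking configuration and pigeon-holing every potential stray structure into those cases requires delicate case analysis, and the Ramsey-style bounds must produce constants $\lambda, \delta$ uniform across all $H \in \mc{H}$.
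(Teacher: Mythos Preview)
Your approach is genuinely different from the paper's, and the difference is where the gap lies. You bound the \emph{middle vertices} (positive in- and out-degree) by a Dilworth argument and propose them as the caterpillar spine. The paper instead observes that a matching of size $\alpha$ identifies to an $\alpha$-cycle (\autoref{lem:match}), which bounds the matching and hence yields a \emph{vertex cover} $X$ of size $<2\alpha$; this $X$ is the pool of spine candidates.

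The crucial advantage of a vertex cover over your middle-vertex set is that every vertex outside $X$ has all its neighbours in $X$, so high-degree vertices are automatically in $X$. Your middle-vertex set has no such property: a source or sink can have arbitrarily high degree while sitting off your spine. Concretely, take $H$ to be the disjoint union of two out-stars of size $m$ centred at $s_1,s_2$. There are no middle vertices, so your spine is empty; and since the leaf sets are disjoint, your stated diamond criterion (``two sources sharing a common out-neighbourhood of size $\ge N$'') does not fire. Yet $H\in\mc{C}_{1,m}$ and no better, so without a bound on $m$ the argument collapses. The closure $\mc{H}$ \emph{does} contain a pure $m$-out-diamond here, but only after identifying the two leaf sets pairwise --- a step your plan never performs. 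The same issue arises for a disjoint out-star plus in-star (which yields large cycles only after identifying leaves), and more generally whenever high-degree sources or sinks have disjoint neighbourhoods. Your case list in the final paragraph assumes the obstructing configurations are already present in $H$ rather than in an identification of $H$.

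The paper packages exactly this missing identification step as \autoref{lem:find-diamond}: two large out-stars with transitively non-redundant edges and no directed path between their roots force a diamond in the closure. With the vertex cover $X$ in hand, the paper shows (\autoref{cl:starpath}) that the centres of the large stars in $X$ induce a semi-complete digraph --- otherwise \autoref{lem:find-diamond} applies --- and a Hamiltonian path in that semi-complete graph supplies the spine. Your proposal has no analogue of either step: you neither force the spine vertices onto a path, nor provide an identification lemma strong enough to cap the off-spine degrees.
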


For the proof of \autoref{thm:main-hard}, we only need the fact that at least 
one of these two statements hold: if the class $\mc{H}$ is not in 
$\mc{C}^*_{\lambda,\delta}$, then we can prove hardness by observing that 
$\mc{H}$ contains one of the hard classes. For the sake of completeness, we 
give a simple proof that the two statements cannot hold simultaneously (note 
that it is sufficient to require closure under transitive equivalence for this 
statement to hold). 

\begin{lem}\label{lem:if}
Let $\mc{H}$ be a class of pattern graphs that is closed under transitive equivalence. If there are constants $\lambda$ and 
$\delta$ such that $\mc{H}\subseteq\mc{C}^*_{\lambda,\delta}$, then $\mc{H}$ 
cannot contain a pure or flawed $\alpha$-diamond or a cycle of length $\alpha$ 
for any $\alpha>2\delta+\lambda$.
\end{lem}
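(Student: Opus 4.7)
The plan is to handle the three types of obstructions---$\alpha$-cycles, pure $\alpha$-diamonds, and flawed $\alpha$-diamonds (each in two orientations)---separately, in each case showing that if such an $H^*$ lies in $\mc{C}^*_{\lambda,\delta}$ then $\alpha \leq 2\delta + \lambda$. Two preserved invariants will be used throughout: (i) since a vertex has in-degree $0$ (resp.\ out-degree $0$) in a graph if and only if the same holds in its transitive closure, the sets of sources, sinks, and ``intermediate'' vertices (those with both positive in- and out-degree) are preserved under transitive equivalence; and (ii) every $H \in \mc{C}_{\lambda,\delta}$ is a $\lambda_0$-caterpillar (a tree) plus at most $\delta$ extra edges, and hence $|E(H)| \leq |V(H)| - 1 + \delta$.

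For an $\alpha$-cycle I would invoke the observation already used earlier in the paper (in the SCSS bullet): every graph in $\mc{C}^*_{\lambda,\delta}$ has at most $\lambda + 2\delta$ vertices with both positive in- and out-degree. Since every vertex of the cycle has this property (and by (i) the same is true in any transitively equivalent graph), this immediately gives $\alpha \leq \lambda + 2\delta$.

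For the diamond cases I would use an edge-counting argument, treating only the out-variants (the in-variants are dual under edge reversal). Let $H \in \mc{C}_{\lambda,\delta}$ be transitively equivalent to an $\alpha$-diamond $H^*$. By (i), $H$ has exactly the same sources, sinks, and intermediates as $H^*$. In a pure $\alpha$-out-diamond there are no intermediates at all, so every reachability pair $r_i \to \ell$ in $H$ must be realised by a direct edge (no vertex is available to route through), giving $|E(H)| \geq 2\alpha$; combined with (ii) and $|V(H)| = \alpha + 2$, this yields $\alpha \leq 1 + \delta$. In a flawed $\alpha$-out-diamond the intermediates $r_1, r_2$ do not reach each other or the source $x$ in the transitive closure of $H^*$, and hence not in $H$ either; therefore the only possible in-edge of $r_i$ in $H$ is $x r_i$ (which must be present since $r_i$ is not a source), and every out-edge of $r_i$ must go directly to some leaf in $L$. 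This produces $|E(H)| \geq 2\alpha + 2$, and together with $|V(H)| = \alpha + 3$ and (ii) gives $\alpha \leq \delta$. In every case the bound is at most $2\delta + \lambda$ (the corner $\lambda = \delta = 0$ is trivial since $\mc{C}_{0,0}$ consists only of the empty graph), contradicting $\alpha > 2\delta + \lambda$.

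The main obstacle I expect is pinning down the set of mandatory edges in the flawed-diamond case. One has to leverage the fact that transitive equivalence preserves the \emph{entire} reachability relation of $H^*$ to argue that the intermediates $r_1, r_2$ in $H$ cannot reach each other or reach back to $x$; this is precisely what forces $x r_i$ to be the unique in-edge of $r_i$ in $H$ and what forces each $r_i \to \ell$ to be a direct edge. Once the mandatory edge sets are correctly identified, the contradiction follows routinely by comparing with the upper bound $|V(H)| - 1 + \delta$ supplied by invariant (ii).
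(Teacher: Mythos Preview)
Your proposal is correct. The cycle case is essentially identical to the paper's argument: both exploit that every vertex of a (graph transitively equivalent to a) directed cycle has positive in- and out-degree, while in a $\lambda_0$-caterpillar only the $\lambda_0$ spine vertices can have this property, forcing every other vertex to be incident to an edge of~$F$.

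For the diamond cases you take a genuinely different route. The paper argues \emph{locally}: since there is no directed path between $r_1$ and $r_2$, at most one of them can lie on the spine of the caterpillar; the other has degree at most~$1$ in $E(H')\setminus F$ and hence degree at most $\delta+1$ in~$H'$, while both roots must have degree at least~$\alpha$. Your argument is \emph{global}: you pin down (via the preserved reachability relation) that every $r_i\to\ell_j$ path in~$H$ must be a direct edge, count the resulting $2\alpha$ (or $2\alpha+2$) mandatory edges, and compare against the tree-plus-$\delta$ upper bound $|V(H)|-1+\delta$. Both yield $\alpha\le\delta+1$ (you even get $\alpha\le\delta$ in the flawed case). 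Your approach is perhaps slightly more bookkeeping in the flawed case---you must argue carefully that $r_1,r_2$ cannot route through one another or through~$x$---but it avoids reasoning about where the roots sit in the caterpillar structure and treats pure and flawed diamonds uniformly via the same edge-count template. One small point worth making explicit in a write-up: the caterpillar need not span all of $V(H)$, but since it is a tree on at most $|V(H)|$ vertices the bound $|E(H)|\le|V(H)|-1+\delta$ still follows.
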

\begin{proof}%
  Suppose first that there is a pattern
  $H\in\mc{C}^*_{\lambda,\delta}$ that is a cycle of length
  $\alpha$. There is a pattern $H'\in\mc{C}_{\lambda,\delta}$ that is
  transitively equivalent to $H$. Clearly, any graph that is
  transitively equivalent to a directed cycle is strongly connected,
  which then also applies to $H'$. Recall that according to
  \autoref{dfn:caterpillar} there is a set of edges $F\subseteq E(H')$
  of size at most $\delta$ for which the remaining edges
  $E(H')\setminus F$ span a $\lambda_0$-caterpillar $C$ for some
  $\lambda_0\leq\lambda$. That is, $C$ consists of $\lambda_0$
  vertex-disjoint stars for which their roots are joined by a
  path. Since every vertex of a strongly connected graph must have in-
  and out-degree at least $1$, any leaf of a star of $C$ can only be
  part of an SCC if it is incident to some edge of~$F$. Hence if $H$
  was strongly connected, then for every leaf of $C$ there would be an
  additional edge in $F$. This however would mean that $H$ contained
  at most $2\delta+\lambda$ vertices: for each edge of $F$ the two
  incident vertices, which include the leaves of the caterpillar, and
  $\lambda_0\leq\lambda$ roots of stars. Hence $\alpha\leq
  2\delta+\lambda$.

  Suppose now that there is a pattern $H\in\mc{C}^*_{\lambda,\delta}$
  that is an $\alpha$-diamond, and a pattern
  $H'\in\mc{C}_{\lambda,\delta}$, which is transitively equivalent to
  $H$. Let $r_1$ and $r_2$ be the two roots of the diamond $H$, and
  let us denote by $r_1$ and $r_2$ the corresponding two vertices in
  $H'$ as well.  It is easy to see from \autoref{dfn:diamond} that $H'$
  contains an $\alpha$-diamond as a subgraph, possibly in addition to
  some edges that connect the vertex $x$ with some of the leaves in
  $L$, in case of a flawed $\alpha$-diamond. This means that $r_1$ and
  $r_2$ have degree at least $\alpha$ in $H'$ as well.  Let $F$ be a set of
  at most $\delta$ edges such that $E(H')\setminus F$ span a $\lambda_0$-caterpillar
  $C$ for some $\lambda_0\leq\lambda$. It is not possible that both $r_1$ and $r_2$ are on the spine of the caterpillar: then there would be a directed path from one to the other, which is not the case in the diamond $H$. Assume without loss of generality that $r_1$ is not on the spine of the caterpillar. Then $r_1$ has degree at most 1 in $E(H')\setminus F$ and hence degree at most $|F|+1\le \delta+1$ in $H'$. As we observed, $r_1$ has degree at least $\alpha$ in $H'$, it follows that $\alpha\le \delta+1$. 
\end{proof}

Showing that at least one of the two statements of \autoref{thm:char} hold is 
not as easy to prove.  First, the following two lemmas show how a large cycle or 
a large diamond can be identified if certain structures appear in a pattern. The 
main part of the proof is to show that if $\mc{H}$ contains patterns that are 
arbitrarily far from being a caterpillar, then one of these two lemmas can be 
invoked (see \autoref{lem:only-if}). For the next lemma we define a 
\emph{matching} of a graph as a subset $M$ of its edges such that no two edges 
of $M$ share a vertex. 

\begin{lem}\label{lem:match}
Let $\mc{H}$ be a class of pattern graphs that is closed under identifying 
terminals and transitive closure. If some $H\in\mc{H}$ contains a matching of size $\alpha$, then $\mc{H}$ contains a directed cycle of length $\alpha$.
\end{lem}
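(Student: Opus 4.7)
My plan is to combine the two closure operations in the order given by \autoref{prop:closure}: first use identification of terminals to fold the matching edges of $H$ into a directed $\alpha$-cycle, and then use transitive equivalence to discard the spurious edges produced along the way.

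Concretely, write the matching as $e_j = u_j v_j$ for $j = 1, \ldots, \alpha$, and consider the partition $\mc{V} = \{W_1, \ldots, W_\alpha\}$ of $V(H)$ defined by $W_i = \{u_i, v_{i-1 \bmod \alpha}\}$ for $i = 2, \ldots, \alpha$, together with $W_1 = \{u_1, v_\alpha\} \cup \bigl(V(H) \setminus \{u_1, v_1, \ldots, u_\alpha, v_\alpha\}\bigr)$, i.e., $W_1$ absorbs every non-matching vertex. Because a matching is vertex-disjoint, these sets are pairwise disjoint, so $\mc{V}$ is a valid partition. Identifying each $W_i$ to a single vertex $w_i$ produces a graph $H'$; the matching edge $e_j$ then becomes the directed edge from $w_j$ to $w_{j+1 \bmod \alpha}$, so $H'$ contains the directed $\alpha$-cycle $w_1 \to w_2 \to \cdots \to w_\alpha \to w_1$ as a subgraph. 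The remaining edges of $H$ may contribute additional edges between the $w_i$, with parallel copies merged and self-loops discarded as prescribed by the identification operation.

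Now the graph $H'$ is strongly connected because it contains the $\alpha$-cycle, so its transitive closure is the complete bidirected graph on $\alpha$ vertices. The directed $\alpha$-cycle itself has exactly the same transitive closure, hence the two patterns are transitively equivalent. Closure of $\mc{H}$ under identifying terminals thus gives $H' \in \mc{H}$, and closure under transitive equivalence then yields the directed $\alpha$-cycle in $\mc{H}$, as required. The only point needing a moment of care is ensuring that no $w_i$ vanishes as an isolated vertex after identification; this is automatic because each matching edge $e_j$ survives (its endpoints $u_j$ and $v_j$ lie in distinct classes $W_j$ and $W_{j+1 \bmod \alpha}$ whenever $\alpha \geq 2$), so every $w_i$ is incident to at least one edge of $H'$.
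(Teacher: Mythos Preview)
Your proof is correct and follows essentially the same approach as the paper: identify the head of each matching edge with the tail of the next one (wrapping around) and absorb the remaining vertices into one of the cycle classes, then use closure under transitive equivalence to replace the resulting strongly connected graph by an $\alpha$-cycle. The only cosmetic difference is that the paper allows the leftover vertices to go to \emph{any} cycle vertex rather than specifically $W_1$, but this changes nothing.
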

\begin{proof}
A matching $e_1$, $\dots$, $e_\alpha$ of $\alpha$ edges can be transformed  
into a cycle of length $\alpha$ by identifying the head of $e_i$ and tail of 
$e_{i+1}$ (and the head of $e_\alpha$ with the tail of~$e_1$). All remaining 
vertices of $H$ that do not belong to the cycle can then be identified with any 
vertex of the cycle, so that the resulting graph consists of the cycle and some 
additional edges. Since $\mc{H}$ is closed under identifying terminals, this 
graph is contained in $\mc{H}$ if $H$ is. As this graph is strongly connected 
and $\mc{H}$ is closed also under transitive equivalence, we can conclude that 
$\mc{H}$ contains a cycle of length $\alpha$.
\end{proof}

Next we give a sufficient condition for the existence of large diamonds. We say 
that an edge $uv$ of a graph $H$ is {\em transitively non-redundant} if there is 
no $u\to v$ path in~$H\setminus uv$.

\begin{lem}\label{lem:find-diamond}
Let $\mc{H}$ be a class of pattern graphs that is closed under identifying 
terminals and transitive equivalence. Let $H\in\mc{H}$ be a pattern graph that 
contains two out-stars (or two in-stars) $S_1$ and $S_2$ as induced subgraphs, 
with at least $\alpha$ edges each and roots $r_1$ and $r_2$, respectively, 
such that $r_1\neq r_2$. If
\begin{enumerate}
\item $H$ contains neither a path from $r_1$ to 
$r_2$, nor from $r_2$ to $r_1$,
\item the leaves of $S_1$ and $S_2$ have out-degree 0 (if $S_1$ and $S_2$ are 
out-stars) or in-degree 0 (if $S_1$ and $S_2$ are in-stars), and 
\item the edges of the stars are transitively non-redundant,
\end{enumerate}
then $\mc{H}$ contains an $\alpha$-diamond.
\end{lem}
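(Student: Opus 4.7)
The plan is to apply a sequence of vertex identifications to $H$, possibly followed by a replacement with a transitively equivalent pattern, in order to land on an $\alpha$-diamond inside $\mc{H}$. I treat only the out-star case; the in-star case is symmetric. First I fix a bijection $\phi$ between an $\alpha$-element subset of $L_1$ and an $\alpha$-element subset of $L_2$ that is the identity on the common leaves $L_1\cap L_2$, and identify each $\ell$ with $\phi(\ell)$. Because condition~2 makes every element of $L_1\cup L_2$ a sink in $H$, this leaf-merging step collapses the two paired-up leaf sets into a single set $L$ of $\alpha$ vertices without creating any new directed paths in $H$ (in particular neither an $r_1\to r_2$ nor an $r_2\to r_1$ path), and the only new edges it can introduce are parallel copies of the existing star edges $r_1\to\ell$ and $r_2\to\ell$.

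Next I handle the remaining set $V'$ of vertices not in $\{r_1,r_2\}\cup L$ by a short case distinction. If no vertex of $V'$ is a neighbour of $r_2$ (i.e., there is no edge between $V'$ and $r_2$ in either direction), then identifying all of $V'$ with $r_1$ turns every edge incident to $V'$ into either a self-loop or a parallel copy of an already-present star edge, and the outcome is a pure $\alpha$-out-diamond; the symmetric situation, in which $V'$ has no neighbour of $r_1$, absorbs $V'$ into $r_2$ analogously. In the remaining case, where both $r_1$ and $r_2$ have neighbours in $V'$, I identify $V'$ to a single fresh vertex $x$ and then pass to the transitive closure of the resulting graph, which is legal because $\mc{H}$ is closed under transitive equivalence; the aim is for this transitive closure to coincide with the edge set $\{r_1\to\ell, r_2\to\ell, x\to r_1, x\to r_2, x\to\ell : \ell\in L\}$, which is exactly the transitive closure of a flawed $\alpha$-out-diamond.

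The main obstacle is justifying this last step: naively collapsing all of $V'$ into a single $x$ could manufacture a spurious $r_1\to x\to r_2$ path by gluing together a ``forward'' witness $v\in V'$ with $r_1\to v$ and a ``backward'' witness $v'\in V'$ with $v'\to r_2$, even though no $r_1\to r_2$ path exists in $H$. Here condition~1 is the crucial lever: it forces the forward reachability set of $r_1$ to be disjoint from the backward reachability set of $r_2$ (and symmetrically), so $V'$ splits naturally into an ``$r_1$-side'' and an ``$r_2$-side'' which, with a bit of care, can be absorbed separately into $r_1$, $r_2$, or $x$ without ever linking the two roots. Condition~3 is then used to rule out any extra $r_i\to\ell$ path beyond the direct star edge, which guarantees that the transitive closure of the identified graph is isomorphic to that of an $\alpha$-diamond, rather than to a larger and denser pattern.
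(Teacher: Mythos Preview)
Your high-level plan (identify leaves in pairs, then absorb the remaining vertices into $r_1$, $r_2$, or a new vertex $x$, using conditions~1--3 to control which edges survive) is the same as the paper's, and your first two cases are fine. The genuine gap is in the third case, and it is not just a matter of ``a bit of care''.

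Consider a vertex $v\in V'$ that is reachable in $H$ from \emph{both} $r_1$ and $r_2$; condition~1 does not preclude this (it only forbids a path from one root to the other). Your proposal lists three possible destinations for $v$: $r_1$, $r_2$, or $x$. Each fails. Absorbing $v$ into $r_1$ creates an $r_2\to r_1$ path; absorbing into $r_2$ creates an $r_1\to r_2$ path; absorbing into $x$ creates edges $r_1\to x$ and $r_2\to x$, and since a flawed out-diamond has $x\to r_1$ and $x\to r_2$, the transitive closure now contains $r_1\to x\to r_2$, again linking the roots. Your suggested two-way split of $V'$ (forward reach of $r_1$ versus backward reach of $r_2$) is disjoint by condition~1, but it neither covers $V'$ nor tells you where such a ``doubly reachable'' vertex should go.

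The fix, which is the main technical point of the paper's proof, is a four-way partition of $V'$ by \emph{forward} reachability from $r_1$ and from $r_2$: the set $W_1$ reachable only from $r_1$ is merged into $r_1$, the set $W_2$ reachable only from $r_2$ into $r_2$, the set $W$ reachable from both is merged into a \emph{leaf} $\ell_1$ (a sink already reachable from both roots, so this creates no new reachabilities between $r_1$, $r_2$, $x$), and the set $U$ reachable from neither is handled by a further case split on whether $U$ can reach the $r_1$-side, the $r_2$-side, or both --- this last case is where $x$ comes from. Condition~3 is then used precisely to show that identifying $W$ with $\ell_1$ does not create any edge $\ell_1\to\ell_i$ (such an edge would witness an $r_1\to\ell_i$ path avoiding the star edge, contradicting non-redundancy). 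Once you add the leaf as a fourth absorption target and switch to this reachability-based partition, your outline becomes the paper's proof.
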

\begin{proof}%
We only consider the case when $S_1$ and $S_2$ are out-stars, as the other case 
is symmetric. Let $T_1\subseteq S_1$ and $T_2\subseteq S_2$ be two out-stars 
with exactly $\alpha$ edges and roots $r_1$ and~$r_2$, respectively. We 
construct an $\alpha$-diamond starting from $T_1$ and $T_2$, and using the 
following partition of $V(H)$. Let $\{s_1,\ldots,s_\alpha\}$ and 
$\{t_1,\ldots,t_\alpha\}$ denote the leaf sets of $T_1$ and~$T_2$.
These sets 
may intersect, but we may order them in a way that $i=j$ holds whenever 
$s_i=t_j$. Define $Y_1\subseteq V(H)\setminus V(T_1\cup T_2)$ and $Y_2\subseteq 
V(H)\setminus V(T_1\cup T_2)$ to be the reachability sets of $r_1$ and $r_2$, 
i.e., they consist of those vertices $w$ that do not belong to $T_1$ or~$T_2$, 
and for which there is a path in $H$ to $w$ from $r_1$ or~$r_2$, respectively. 
We partition all vertices of $H$ outside of the two stars $T_1$ and $T_2$ into 
the set $W_1=Y_1\setminus Y_2$ reachable from only $r_1$, the set 
$W_2=Y_2\setminus Y_1$ reachable from only $r_2$, the set $W=Y_1\cap Y_2$ 
reachable from both $r_1$ and~$r_2$, and the set $U=V(H)\setminus(Y_1\cup Y_2)$ 
reachable from neither $r_1$ nor $r_2$. 

To obtain an $\alpha$-diamond,  we identify  for each $i\in\{1,\ldots,\alpha\}$ 
the leaves $s_i$ and $t_i$, and call the resulting vertex~$\ell_i$. We also 
identify every vertex of $W_1$ with $r_1$, every vertex of $W_2$ with~$r_2$, and 
all vertices in $W$ with the vertex~$\ell_1$. If there is a vertex $x$ in $U$ 
for which in $H$ there is a path to some vertex in $W_1\cup\{r_1\}$, and there 
is a vertex $x'$ in $U$ (which may be equal to $x$) with a path to a vertex in 
$W_2\cup\{r_2\}$, then we identify each vertex in $U$ with~$x$. If there is no 
path from any vertex of $U$ to a vertex of $W_2\cup\{r_2\}$, but for some vertex 
in $U$ there is a path to~$W_1\cup\{r_1\}$, we identify every vertex of $U$ 
with~$r_1$. Otherwise, all vertices of $U$ are identified with $r_2$. We claim 
that the resulting graph $D$ is a pure $\alpha$-diamond if the pair~$x,x'$ does 
not exist, and transitively equivalent to a flawed $\alpha$-diamond otherwise.

The graph $D$ clearly contains a pure $\alpha$-diamond as a subgraph, due to the 
stars $T_1$ and~$T_2$. If the pair $x,x'\in U$ exists it also contains a flawed 
$\alpha$-diamond, since the two paths from $x$ to $W_1\cup\{r_1\}$ and from $x'$ 
to $W_2\cup\{r_2\}$ result in edges $xr_1$ and $xr_2$ after identifying $W_1$ 
with~$r_1$, $W_2$ with $r_2$, and $U$ with~$x$. There may be edges $x\ell_i$ in 
$D$ for some $i\in\{1,\ldots,\alpha\}$, but these are transitively implied by 
the path consisting of the edges $xr_1$ and $r_1\ell_i$. Hence if no other edges 
exist in $D$, it is transitively equivalent to a (pure or flawed) 
$\alpha$-diamond.

By assumption the out-degree of each leaf of the out-stars $T_1$ and $T_2$ is 
$0$. Hence for $i\geq 2$, none of the above identifications can add an edge with 
a vertex~$\ell_i$ as its tail. For~$\ell_1$ it could possibly happen that an 
edge with $\ell_1$ as its tail was introduced when identifying $W$ with this 
vertex. The head of such an edge in~$D$ would be either some $\ell_i$ with 
$i\geq 2$, $r_1$, $r_2$, or $x$ if it exists. This would mean that in $H$ there 
is an edge $yz$ with $y\in W$ and $z\in\{s_i,t_i,r_1,r_2\}\cup U$. By definition 
of $W$, in $H$ there is both a path from $r_1$ and from $r_2$ to $y$, and 
furthermore none of these paths contains $s_i$ or $t_i$, as these vertices have 
out-degree~$0$. Assume first that $z=s_i$, in which case the $r_1\to y$ path 
together with the edge $ys_i$ form a path not containing the edge $r_1s_i$. 
However this contradicts the assumption that $r_1s_i$ is transitively 
non-redundant. Similarly, it cannot be that $z=t_i$, since otherwise $r_2t_i$ 
would be transitively redundant. If $z=r_1$, then there is a path from $r_2$ 
to~$r_1$ through $y$, which is excluded by our assumption that no such path 
exists. Symmetrically it can also not be that $z=r_2$. The only remaining option 
is that~$z\in U$. However this is also excluded by definition of $U$, as 
otherwise there would be a path from $r_1$ to $U$ through~$y$. Consequently, the 
out-degree of~$\ell_i$ in $D$ is~$0$ for every~$i\in\{1,\ldots,\alpha\}$.

In case the pair $x,x'$ exists in $H$, it is not hard to see that there is no 
edge in~$D$ with $x$ as its head: by definition of $U$ there is no edge $yz$ in 
$H$ with $y\notin U$ and $z\in U$, as in $H$ there are no paths from $r_1$ or 
$r_2$ to any vertex of $U$, while every vertex outside of $U$ is reachable from 
$r_1$ or~$r_2$. Thus it remains to argue that there is no edge between $r_1$ and 
$r_2$ in $D$. If the pair~$x,x'$ does not exist, $U$ is identified with either 
$r_1$ or $r_2$. The former only happens if there is no vertex in $U$ with a path 
to $r_2$, while the latter only happens if no such vertex with a path to $r_1$ 
exists. Hence identifying $U$ with either $r_1$ or $r_2$ does not add an edge 
between $r_1$ and $r_2$. Note that in $H$ there cannot be an edge $yz$ with 
$y\in W_1$ and $z\in W_2$, since otherwise $z\in Y_1$, which contradicts the 
definition of $W_2$. Analogously, no edge $yz$ with $y\in W_2$ and $z\in W_1$ 
exists either. Consequently, identifying $W_2$ with $r_2$ and $W_1$ with $r_1$ 
does not add any edge between $r_1$ and $r_2$ to $D$. This concludes the proof 
since no additional edges exist in $D$.
\end{proof}

To show that at least one of the two statements of \autoref{thm:char}
hold, we prove that if the second statement is false, then the first
statement is true. Observe that if a class closed under
identifications contain an $\alpha$-cycle or $\alpha$-diamond, then it
contains every cycle or diamond of smaller size.  Thus what we need to show is that if $\mc{H}$ does not contain all cycles (i.e.,
there is an $\alpha_1$ such that $\mc{H}$ contains no cycle larger
than $\alpha_1$), $\mc{H}$ does not contain all pure out-diamonds
(i.e., there is an $\alpha_2$ such that $\mc{H}$ contains no pure
out-diamond larger than $\alpha_2$), etc., then $\mc{H}\subseteq
\mc{C}^*_{\lambda,\delta}$ for some constants $\lambda$ and
$\delta$. In other words, if we let $\alpha$ to be the maximum of
$\alpha_1$, $\alpha_2$, etc., then we may assume that $\mc{H}$
contains no pure or flawed $\alpha$-diamond or cycle of length
$\alpha$, and we need to prove $\mc{H}\subseteq
\mc{C}^*_{\lambda,\delta}$ under this assumption. Thus the following
lemma completes the proof of \autoref{thm:char}.
\begin{lem}\label{lem:only-if}
Let $\mc{H}$ be a class of pattern graphs that is closed under identifying 
terminals and transitive equivalence. If for some integer $\alpha$ the class 
$\mc{H}$ contains neither a pure \mbox{$\alpha$-diamond}, flawed 
$\alpha$-diamond, nor a cycle of length $\alpha$, then there exist constants 
$\lambda$ and $\delta$ (depending on~$\alpha$) such that 
$\mc{H}\subseteq\mc{C}^*_{\lambda,\delta}$. 
\end{lem}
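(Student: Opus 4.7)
The plan is to combine the two extraction lemmas (\autoref{lem:match} and \autoref{lem:find-diamond}) with the closure properties of $\mc{H}$ to bound every structural measure of an arbitrary $H\in\mc{H}$ by a function of $\alpha$, and then to explicitly assemble a transitively equivalent almost-caterpillar. The proof naturally splits into three stages: global bounds (vertex cover and SCC sizes), type-by-type analysis of the vertices outside the cover, and final caterpillar assembly.

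For the global bounds, \autoref{lem:match} applied to $H$ forces the undirected matching number of $H$ to be at most $\alpha-1$, so $H$ admits a vertex cover $W\subseteq V(H)$ of size $\tau\le 2(\alpha-1)$; and \autoref{lem:match} applied to $H^+\in\mc{H}$ together with the fact that a bidirected clique on $s$ vertices has matching $\lfloor s/2\rfloor$ bounds every SCC of $H$ to at most $2\alpha-1$ vertices. Now classify each $u\in V(H)\setminus W$ by its type $(A(u),B(u))=(N^-(u)\cap W,N^+(u)\cap W)$, of which there are at most $4^\tau$. For each type one shows $|U_T|=O(\alpha)$ via three sub-arguments. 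A \emph{pure-leaf} type with $|A|\ge 2$ and $B=\emptyset$ (symmetric for $B$) has its leaves form the common leaf set of two out-stars rooted at any two elements of $A$; after moving to a transitive reduction and identifying each root with its SCC (both legal by closure), the hypotheses of \autoref{lem:find-diamond} are met and $|U_T|\ge\alpha$ would produce a pure $\alpha$-out-diamond. A \emph{pass-through} type with $A=\{w_1\}$, $B=\{w_2\}$ and $w_1\ne w_2$ becomes, upon identifying $w_1$ with $w_2$, a bidirected star with $|U_T|$ leaves whose transitive closure is a bidirected clique on $|U_T|+1$ vertices, transitively equivalent to a Hamiltonian cycle of that length; further identifications then yield an $\alpha$-cycle as soon as $|U_T|\ge\alpha-1$. \emph{Mixed} types with both $A,B$ nonempty and $|A|+|B|\ge 3$ are handled by combining a flawed-diamond extraction (using the pass-through edges as the extra vertex $x$ from \autoref{dfn:diamond}) with the cycle argument above. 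This leaves only the single-neighbor pure-leaf types unbounded, and these fit the caterpillar as star leaves.

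To assemble an almost-caterpillar transitively equivalent to $H$, choose one representative per SCC contained in $W$, order the representatives along a topological order of the condensation into a directed spine $v_1\to\cdots\to v_{\lambda_0}$ (adding any transitively implied spine edges, which amounts to passing to a transitively equivalent pattern), attach each regular leaf to its unique $W$-neighbour as an in- or out-star leaf, and dump every remaining edge of $H$ into the extras set $F$. A direct count then bounds $\lambda_0$ and $|F|$ by functions of $\alpha$, completing the inclusion $H\in\mc{C}^*_{\lambda,\delta}$.

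The main obstacle is the direction-consistency requirement in the assembly step: a $\lambda_0$-caterpillar must use either all out-stars or all in-stars, so one must rule out patterns with many large stars of both orientations attached to different spine vertices. This is where all four diamond flavours (pure/flawed, in/out) become indispensable---given two large stars of opposite orientation at distinct spine vertices, one applies the flawed-diamond variant of \autoref{lem:find-diamond} to carefully selected induced subgraphs, with the spine edges supplying the extra vertex $x$; alternatively, identifying the leaves of an in-star at $w_1$ with the leaves of an out-star at $w_2$ creates a pass-through configuration large enough to trigger the cycle extraction. The technical preconditions of \autoref{lem:find-diamond} (mutual non-reachability of the roots and transitive non-redundancy of star edges) are handled throughout by the interplay of SCC-contraction and transitive reduction, both allowed by the closure hypotheses on $\mc{H}$.
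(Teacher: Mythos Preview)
Your overall architecture---vertex cover from \autoref{lem:match}, then analyse edges between cover and independent set, then assemble a caterpillar---matches the paper's, but the assembly step has a genuine gap.

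The problem is the spine. You propose to ``order the representatives along a topological order of the condensation into a directed spine $v_1\to\cdots\to v_{\lambda_0}$ (adding any transitively implied spine edges).'' But a topological order of a DAG is not a path: consecutive representatives may be incomparable under reachability, and in that case no edge between them is transitively implied, so passing to a transitively equivalent pattern does not help. Since a $\lambda_0$-caterpillar requires an actual directed path on the roots, and the unbounded single-neighbour leaf types are attached precisely to those roots, you cannot simply dump the missing spine edges into $F$---the large stars themselves would then fail to sit on any caterpillar. Your later discussion of the ``main obstacle'' treats only large stars of \emph{opposite} orientation at distinct spine vertices; it does not cover two large stars of the \emph{same} orientation rooted at two incomparable cover vertices, and that is exactly the configuration that blocks the spine from being a path.

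The paper closes this gap as follows. After a preliminary normalisation (take the transitive closure on the cover $X$, then delete transitively redundant edges incident to $V(H)\setminus X$), define $X'\subseteq X$ to be those cover vertices whose attached out-star has at least~$\alpha$ leaves. For any two $u,v\in X'$ with no edge between them, the normalisation guarantees there is no $u\to v$ or $v\to u$ path, the star leaves have out-degree~$0$, and the star edges are transitively non-redundant---so \autoref{lem:find-diamond} yields a pure $\alpha$-diamond. Hence $X'$ is semi-complete, and a semi-complete digraph always has a Hamiltonian path; that path is the spine. All small stars ($<\alpha$ leaves) and all edges inside $X$ off the Hamiltonian path go into~$F$, which is polynomial in~$\alpha$. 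Direction consistency is handled globally by a single claim (if both an in-star and an out-star of size~$\alpha$ with roots in $X$ exist, merging their leaves gives an $\alpha$-cycle), which replaces your $4^\tau$-case type analysis.
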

\begin{proof}
  Suppose that there is such an integer $\alpha$. Let $\lambda:=2\alpha$ and 
$\delta:=4\alpha^3+6\alpha^2$. Given any $H'\in\mc{H}$, we
  show how a transitively equivalent pattern
  $H\in \mc{C}_{\lambda,\delta}$ can be constructed, implying that
  $H'$ belongs to $\mc{C}^*_{\lambda,\delta}$.
A \emph{vertex cover} of a graph is a subset $X$ of its vertices such
that every edge is incident to a vertex of $X$. By
\autoref{lem:match}, $H'$ cannot contain a matching of size
$\alpha$. It is well-known that if a graph has no matching of size
$\alpha$, then it has a vertex cover of size at most $2\alpha$ (take
the endpoints of any maximal matching). Let us fix a vertex cover $X$ of $H'$ 
having size at most $2\alpha$.

To obtain $H$ from $H'$, we start with a graph $H$ on $V(H')$ having no edges 
and perform the following three steps. 
\begin{enumerate}
\item Let us take the transitive closure on the vertex set $X$ in~$H'$, i.e., 
let us introduce into $H$ every edge $uv$ with $u,v\in X$ such that there is a 
$u\to v$ path in $H'$.
\item Let us add 
all edges $uv$ of $H'$ to 
$H$ for which $u\notin X$ or $v\notin X$.
\item Fixing an ordering of the edges introduced in step 2, we remove 
transitively redundant edges: 
following this order, we subsequently remove those edges $uv$ for which there is a 
path from $u$ to $v$ in the remaining graph $H$ that is not the edge $uv$ 
itself (we emphasize that the edges with both endpoint in $X$ are not touched in this step). 
\end{enumerate}
It is clear that $H$ is transitively equivalent to $H'$, hence $H\in \mc{H}$.  Note that
$X$ is a vertex cover of $H$ as well, and hence its complement
$I=V(H)\setminus X$ is an \emph{independent set}, i.e., no two vertices of $I$ 
are adjacent. Let $E_I\subseteq E(H)$ be
the set of edges between $X$ and $I$.  In the rest of the proof, we
argue that the resulting pattern $H$ belongs to~$\mc{C}_{\lambda,\delta}$. We 
show that $H$ can be decomposed into a path
$P=(v_1,\dots, v_{\lambda_0})$ in $X$, a star $S_{v_i}$ centered at
each $v_i$ using the edges in $E_I$, and a small set of additional edges. This 
small set of additional edges is constructed in three steps, by considering a 
sequence of larger and larger sets $F_1\subseteq F_2 \subseteq F_3$.

As $E_I$ consists of edges between $X$ and $I$, it can be partitioned into a 
set of stars with roots in $X$. The following claim shows that almost all of 
these edges are directed towards $X$ or almost all of them are directed away 
from $X$.

\begin{claim}\label{lem:in-stars}
Either there are less than $2\alpha^2$ edges $uv$ in $E_I$ with head in $X$, or 
less than $2\alpha^2$ edges $uv$ in $E_I$ with tail in $X$.
\end{claim}
\begin{proof}%
  Assume $H$ contains an in-star $S_{in}$ and an out-star $S_{out}$ as
  subgraphs, each with $\alpha-1$ edges from $E_I$ and roots
  in~$X$. Let $\{s_1,\ldots,s_{\alpha-1}\}$ and
  $\{t_1,\ldots,t_{\alpha-1}\}$ denote the leaf sets of $S_{in}$ and
  $S_{out}$, respectively. These sets may intersect, but we may order
  them in a way that $i=j$ holds whenever $s_i=t_j$.  First
  identifying the roots of $S_{in}$ and~$S_{out}$, and then $s_i$ and
  $t_i$ for each $i\in\{1,\ldots,\alpha-1\}$, we obtain a strongly
  connected subgraph on $\alpha$ vertices. Further identifying any
  other vertex of $H$ with an arbitrary vertex of this subgraph yields
  a strongly connected graph on $\alpha$ vertices. This graph is
  transitively equivalent to a cycle of length $\alpha$, a
  contradiction to our assumption that $\mc{H}$ does not contain any
  such graph. Consequently, either all in-stars spanned by subsets of
  $E_I$ with roots in~$X$ have size less than $\alpha-1$, or all such
  out-stars have size less than~$\alpha-1$. Assume the former is the
  case, which means that every edge $uv\in E_I$ with $v\in X$ is part
  of an in-star of size less than $\alpha-1$.  Since $X$ contains less
  than $2\alpha$ vertices, there are less than $2\alpha^2$ such
  edges. The other case is analogous.  \cqed
\end{proof}

Assume that the former case of \autoref{lem:in-stars} is true, so that
the number of edges in $E_I$ with heads in $X$ is bounded by
$2\alpha^2$; the other case can be handled symmetrically. We will use
the out-stars spanned by $E_I$ for the caterpillar, which means that
we obtain an out-caterpillar.  We use the set $F_1$ to account for the edges in 
$E_I$ with heads in $X$. Additionally, we will also introduce into $F_1$ those 
edges in $E_I$ with tails in $X$ that are adjacent to an
edge of the former type. Formally, for any edge $uv\in E_I$ with
$v\in X$, we introduce into $F_1$ every edge of $E_I$ incident to $u$.
After this step, $F_1$ contains less than $4\alpha^3$ edges, since there
are less than $2\alpha^2$ edges $uv\in E_I$ with $v\in X$ and $u$ can
only be adjacent to vertices in $X$, which has size less than
$2\alpha$.

For any vertex $v\in X$, let $S_v$ denote the out-star formed by the
edges of $E_I\setminus F_1$ incident to $v$. Let $X'\subseteq X$
contain those vertices $v\in X$ for which $S_v$ has at least $\alpha$
leaves.

\begin{claim}\label{cl:starpath}
For any two distinct $u,v\in X'$, at least one of $uv$ and $vu$ is in $H$, and 
the stars $S_u$ and $S_v$ are vertex disjoint.
\end{claim}
\begin{proof}%
  Suppose that there is no edge between $u$ and $v$. In step 1 of the
  construction of~$H$, we introduced any edge between vertices of $X$
  that appears in the transitive closure, so it also follows that
  there is no directed $u\to v$ or $v\to u$ path in $H'$ and hence in $H$. By
definition, the star $S_v$ and $F_1$ are edge disjoint, which implies that the out-degree of any leaf of the
  out-star $S_v$ is $0$ in $H$.  By step 3 of the construction of $H$,
  the edges of $S_u$ and $S_v$ are transitively non-redundant. Thus we
  can invoke \autoref{lem:find-diamond} to conclude that $\mc{H}$
  contains an $\alpha$-diamond, a contradiction.

Assume therefore that, say, edge $uv$ is in $H$. To prove that $S_u$ and 
$S_v$ are disjoint, suppose for a contradiction that they share a leaf $\ell$. 
But then the edges $uv$ and $v\ell$ show that the edge $u\ell$ is transitively 
redundant.  However, in step 3 of the construction of $H$,  we removed all 
transitively redundant edges incident to vertices not in $X$ to obtain~$H$, 
and~$\ell\notin X$, a contradiction.\cqed
\end{proof}

We extend $F_1$ to $F_2$ by adding all edges of stars $S_v$ with $v\in 
X\setminus X'$ to $F_2$. Since $X$ contains less than $2\alpha$ vertices and we 
extend $F_1$ only by stars with less than $\alpha$ edges, this step adds less 
than $2\alpha^2$ edges, i.e., $|F_2|\le |F_1|+2\alpha^2=4\alpha^3+2\alpha^2$.

By \autoref{cl:starpath}, $X'$ induces a \emph{semi-complete} directed
graph in $H$, i.e., at least one of the edges $uv$ and $vu$ exists for every
pair $u,v\in X'$. It is well-known that every semi-complete directed graph 
contains a Hamiltonian path (e.g., \cite[Chapter~10, Exercise~1]{MR2159259}), 
and so there is a path
$P=(v_1,\ldots,v_{\lambda_0})$ with
$\lambda_0=|X'|\leq 2\alpha=\lambda$ in $H$ on the vertices of
$X'$. We extend $F_2$ to $F_3$ by including any edge induced by
vertices of $X$ that is not part of $P$. 
There are less than $4\alpha^2$ such edges, and hence we have
$|F_3|\le |F_2|+4\alpha^2\le 4\alpha^3+6\alpha^2=\delta$.
The edges of $H$ not in $F_3$ span the path $P$ and disjoint out-stars
$S_{v_i}$ with $i\in\{1,\ldots,\lambda_0\}$, i.e., they form a
$\lambda_0$-caterpillar. This proves
that $H\in \mc{C}_{\lambda,\delta}$ and hence
$H'\in \mc{C}^*_{\lambda,\delta}$, what we had to show.
\end{proof}

\subsection{Reductions}
\label{sec:reductions-1}
\autoref{thm:char} implies that in order to prove \autoref{thm:main-hard}, we 
need W[1]-hardness proofs for the class of all directed cycles, the class of 
all pure in-diamonds, the class of all pure out-diamonds, etc. We provide 
these hardness proofs and then formally show that they imply 
\autoref{thm:main-hard}.

Let us first consider the case when $\mc{H}$ is the class of all
directed cycles. Recall that, given an arc-weighted directed graph $G$
and a set $R\subseteq V(G)$ of terminals, the \pname{Strongly
Connected Steiner Subgraph} (SCSS) problem asks for a minimum-cost subgraph 
that is strongly connected and contains every terminal in $R$. This problem is
known to be W[1]-hard parameterized by the number $k:=|R|$ of
terminals \cite{DBLP:journals/siamdm/GuoNS11}. We can reduce SCSS to an 
instance of DSN where the pattern $H$ is a directed cycle on $R$, which 
expresses the requirement that all the terminals are in the same strongly 
connected component of the solution. Thus the W[1]-hardness of SCSS immediately 
implies the W[1]-hardness of $\mc{H}$-DSN if $\mc{H}$ contains all directed 
cycles.

\begin{lem}[follows from \cite{DBLP:journals/siamdm/GuoNS11}]\label{lem:red-SCC}
If $\mc{H}$ is the class of directed cycles, then $\mc{H}$-DSN is 
\textup{W[1]}-hard parameterized by the number of terminals.
\end{lem}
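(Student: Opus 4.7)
The plan is to give a trivial parameterized reduction from \pname{SCSS} to $\mc{H}$-\pname{DSN}, where $\mc{H}$ is the class of all directed cycles, and then invoke the W[1]-hardness of \pname{SCSS} established by Guo et al.~\cite{DBLP:journals/siamdm/GuoNS11}.

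Given an \pname{SCSS} instance consisting of an arc-weighted directed graph $G$ and a terminal set $R=\{t_1,\ldots,t_k\}$, I build an $\mc{H}$-\pname{DSN} instance by keeping $G$ and $R$ unchanged and choosing as pattern the directed cycle $H$ on $R$ with edges $t_1t_2, t_2t_3, \ldots, t_{k-1}t_k, t_kt_1$ (in an arbitrary ordering of $R$). Since $H$ is a directed cycle, $H\in \mc{H}$, and the parameter $k=|R|$ is preserved exactly, so the reduction is polynomial-time and parameter-preserving.

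Correctness reduces to showing that the two instances have the same optimum value. For one direction, any strongly connected subgraph $N\subseteq G$ with $R\subseteq V(N)$ yields, for each edge $t_it_{i+1}$ of $H$, a $t_i\to t_{i+1}$ path in $N$, so $N$ is feasible for the $\mc{H}$-\pname{DSN} instance. For the other direction, a feasible solution $N$ to $\mc{H}$-\pname{DSN} contains, for every $i$ (indices taken modulo $k$), a $t_i\to t_{i+1}$ path in $N$; concatenating these paths around the cycle shows there is a $t_i\to t_j$ path in $N$ for every ordered pair $i,j$, hence all terminals lie in a common strongly connected component $C$ of $N$. The subgraph induced on $V(C)$ is a feasible \pname{SCSS} solution of cost at most that of $N$ (edge weights are nonnegative, or we may simply take the sub-solution consisting of the edges of $C$). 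Taking a minimum-cost choice on each side establishes equality of optima.

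There is no real obstacle here: the argument is essentially a tautological translation, and the whole content of the lemma is the observation that ``every pair of terminals mutually reachable'' is captured exactly by a Hamiltonian directed cycle pattern on~$R$. The only minor point to be careful about is that the pattern must itself be a single directed cycle (as required by $\mc{H}$), rather than, say, a bidirected cycle; the cyclic arrangement $t_1\to t_2\to\cdots\to t_k\to t_1$ suffices precisely because transitivity of path concatenation gives reachability in both directions between any two terminals.
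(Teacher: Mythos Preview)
Your proposal is correct and follows exactly the approach the paper uses: reduce \pname{SCSS} to $\mc{H}$-\pname{DSN} by taking the same graph $G$ and terminal set $R$ and choosing as pattern a directed cycle on $R$, then invoke the W[1]-hardness of \pname{SCSS} from~\cite{DBLP:journals/siamdm/GuoNS11}. The paper states this reduction in a single sentence just before the lemma, whereas you spell out both directions of the equivalence; the content is identical.
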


Next we turn our attention to classes containing all diamonds. 
The following reductions are from the W[1]-hard \pname{Multicoloured Clique} 
problem~\cite{DBLP:journals/tcs/FellowsHRV09}, in which an undirected 
graph together with a partition $\{V_1,\ldots,V_k\}$ of its vertices into $k$ 
sets is given, such that for any $i\in\{1,\ldots,k\}$ no two vertices of $V_i$ 
are adjacent. The aim is to find a clique of size~$k$, i.e., a set of pairwise 
adjacent vertices $\{w_1,\ldots,w_k\}$ with $w_i\in V_i$ for each 
$i\in\{1,\ldots,k\}$.

\begin{lem}\label{lem:red-pdiam}
If $\mc{H}$ is the class of all pure out-diamonds, then $\mc{H}$-DSN is \textup{W[1]}-hard parameterized by the number of terminals. The same holds if $\mc{H}$ is the class of all pure in-diamonds.
\end{lem}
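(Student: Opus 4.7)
The plan is to prove the pure out-diamond case by a parameterized reduction from \pname{Multicoloured Clique}, a standard W[1]-hard problem. Given a graph $G_{mc}$ partitioned into colour classes $V_1,\dots,V_k$, I would use the pure out-$\alpha$-diamond pattern with $\alpha=k+\binom{k}{2}$ leaves: one ``vertex-leaf'' $\ell_i$ for each colour $i\in[k]$ and one ``pair-leaf'' $\ell_{ij}$ for each pair $\{i,j\}$ with $i<j$. The pattern then has $\alpha+2=O(k^2)$ terminals, so the parameter blow-up is polynomial in~$k$ and this is a valid parameterized reduction.

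The host graph $G$ is built from two kinds of weighted gadgets. For each vertex $v$ of $G_{mc}$ I add a unit-cost \emph{vertex node} $x_v$, and for each edge $e=uv$ of $G_{mc}$ with $u\in V_i$, $v\in V_j$, $i<j$, a unit-cost \emph{edge node} $y_e$; all remaining arcs cost zero. The zero-cost arcs are arranged so that (a)~every $r_1\to \ell_i$ and $r_2\to \ell_i$ path must traverse some $x_v$ with $v\in V_i$; and (b)~every $r_1\to \ell_{ij}$ path must enter $y_e$ through an $x_u$ with $u$ the $V_i$-endpoint of~$e$, while every $r_2\to \ell_{ij}$ path must enter $y_e$ through an $x_v$ with $v$ the $V_j$-endpoint. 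The required asymmetry between $r_1$'s and $r_2$'s access to an edge node is obtained by giving each $y_e$ two separate zero-cost entry ports, one of them reachable only from an ``$r_1$-side colour-access'' intermediary and the other only from an ``$r_2$-side colour-access'' intermediary.

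I would then set the cost budget to $B=k+\binom{k}{2}$. A multicoloured clique $w_1,\dots,w_k$ yields a feasible solution of cost exactly $B$: take the vertex nodes $x_{w_1},\dots,x_{w_k}$ and the edge nodes $y_{w_iw_j}$ for $i<j$, route $r_1,r_2\to x_{w_i}\to \ell_i$, and route $r_1\to x_{w_i}\to y_{w_iw_j}\to \ell_{ij}$ and $r_2\to x_{w_j}\to y_{w_iw_j}\to \ell_{ij}$. Conversely, the leaves $\ell_i$ force at least one vertex node per colour and the leaves $\ell_{ij}$ force at least one edge node per pair, so any solution of cost $\le B$ must use exactly one vertex node $x_{w_i}$ per colour (shared between $r_1$ and $r_2$) and exactly one edge node $y_{e_{ij}}$ per pair. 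The asymmetric entry structure then pins the $V_i$-endpoint of $e_{ij}$ to $w_i$ (from the $r_1$-path) and its $V_j$-endpoint to $w_j$ (from the $r_2$-path), hence $e_{ij}=w_iw_j\in E(G_{mc})$ for every pair, and $\{w_1,\dots,w_k\}$ is the desired clique.

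The main obstacle is designing the edge-node gadget so that the two entry ports cannot be bypassed or swapped: both roots must visit every selected $y_e$, and the routing must be rigid enough that the $r_1$-path identifies the $V_i$-endpoint of $e_{ij}$ and the $r_2$-path identifies its $V_j$-endpoint, while simultaneously allowing an optimum to share the $k$ vertex nodes across all $\binom{k}{2}$ pair-leaves. The case where $\mc{H}$ is the class of all pure in-diamonds follows by reversing every arc in both the pattern and~$G$: the pattern becomes a pure in-diamond, sources and sinks swap roles, and the same budget analysis (now with in-arborescences in place of out-arborescences via \autoref{lem:arb}) gives the analogous W[1]-hardness.
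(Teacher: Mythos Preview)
Your sketch has a genuine gap precisely at the point you flag as ``the main obstacle,'' and it is structural rather than a matter of filling in details. For the vertex-leaves $\ell_i$ you need \emph{both} roots to be able to reach every vertex node $x_v$, so that a budget-tight solution can share the single bought node $x_{w_i}$ per colour. But once $r_2$ can reach $x_{w_i}$, it can follow any outgoing arc from $x_{w_i}$---in particular the ``$r_1$-side'' arc into the edge node $y_e$ for any $e$ incident to~$w_i$. Hence for the pair-leaf $\ell_{ij}$, $r_2$ can reach $y_e$ via the $V_i$-endpoint and never touch the $V_j$-endpoint, so the consistency check on the $V_j$-side evaporates. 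Placing the port \emph{before} $x_v$ does not help either: then $r_1$'s path to $y_e$ bypasses $x_v$ entirely, so the edge selection is no longer tied to the vertex selection. In short, reachability from a node does not remember which root the path came from, so shared $x_v$ nodes and root-distinguishing entry ports to $y_e$ are incompatible. Dropping the asymmetry altogether only yields that \emph{one} endpoint of each selected edge equals the chosen vertex of its colour, which does not give a clique.

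The paper sidesteps this by an asymmetric division of labour between the roots rather than asymmetric access to a shared gadget. It uses a pure $k(k-1)$-out-diamond whose leaves are indexed by \emph{ordered} pairs $\ell_{ij}$, $i\neq j$, and all arcs have cost~$1$. Root $r_1$ performs vertex selection: for each colour $i$ it pays for one $w\in V_i$ together with $k-1$ private copies $w_j$ and the arcs $w_j\to\ell_{ij}$. Root $r_2$ performs edge selection: for each unordered pair $\{i,j\}$ it pays for one $e\in E_{ij}$, and $z_e$ has outgoing arcs only to the copies corresponding to the two endpoints of~$e$. Consistency is then checked at the copies: in a budget-tight solution the only arc into $\ell_{ij}$ is $w_j\to\ell_{ij}$ for the $w\in V_i$ chosen by~$r_1$, so $r_2$'s path to $\ell_{ij}$ forces the $V_i$-endpoint of the selected edge to be that~$w$; the leaf $\ell_{ji}$ symmetrically pins the $V_j$-endpoint. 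This separates ``who selects what'' from ``where agreement is verified'' and needs no gadget that can tell the two roots apart.
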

\begin{proof}%
We prove the statement only for out-diamonds, the other case is symmetric by 
reversing all directions of the edges in the description below. 

\textbf{Construction.} Consider an instance of \pname{Multicoloured Clique} with 
partition $\{V_1,\ldots,V_k\}$. For all indices $1\leq i<j\leq k$, we let 
$E_{ij}$ be the set of all edges connecting $V_i$ and $V_j$. We 
construct an instance of DSN where the pattern $H$ is a pure 
\mbox{$k(k-1)$-diamond}. Let $r_1$ and $r_2$ be the roots of the diamond and let 
$L=\{\ell_{ij}\mid 1\leq i,j\leq k\,\land\,i\neq j\}$ be the leaf set (so we 
have $|L|=k(k-1)$). The constructed input graph $G$ is the following (see 
\autoref{fig:red-diamond}). 
\begin{itemize}
 \item The terminals of $G$ are the terminals of $H$, i.e., $r_1$, $r_2$, and 
the vertices in $L$.
 \item For every $i\in\{1,\ldots,k\}$, we introduce into $G$ a
vertex $y_i$ representing $V_i$, and $k$ copies of each vertex $w\in V_i$, 
which we denote by $w_j$ for $j\in\{0,1,\ldots,k\}$ and~$j\neq i$. Also for 
all $1\leq i<j\leq k$, we introduce a vertex $z_{ij}$ representing $E_{ij}$, 
and a vertex $z_e$ for every edge~$e\in E_{ij}$.
 \item For every $i\in\{1,\ldots,k\}$, we add the edge $r_1y_i$, and for all 
$1\leq i<j\leq k$ the edge $r_2z_{ij}$.
 \item For every $i\in\{1,\ldots,k\}$ and $w\in V_i$, we add the edge $y_iw_0$, 
and for all $1\leq i<j\leq k$ and $e\in E_{ij}$, we add the edge $z_{ij}z_e$.
 \item For every $i,j\in\{1,\ldots,k\}$ with $i\neq j$ and $w\in V_i$, we add 
the edge~$w_0w_j$ and the edge~$w_j\ell_{ij}$.
 \item For all $1\leq i<j\leq k$ and $e\in E_{ij}$, for the vertex $w\in V_i$ 
incident to $e$, we add the edge~$z_ew_j$, and for the vertex $w\in V_j$ 
incident to $e$ we add the edge $z_ew_i$.
 \item Every edge of $G$ has cost $1$.
\end{itemize}

\begin{figure}[t!]%
\centering
\includegraphics[width=0.8\textwidth]{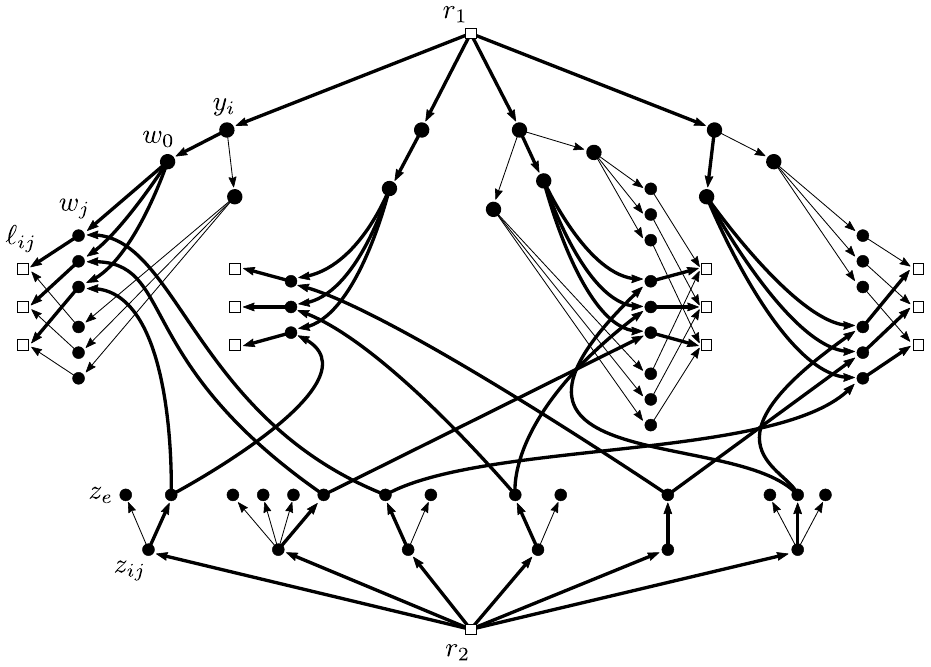}
\caption{The constructed graph in the reduction of \autoref{lem:red-pdiam} for 
an instance with $k=4$, $|V_1|=2$, $|V_2|=1$, $|V_3|=3$, and $|V_4|=2$. Squares 
are terminals, and circles are Steiner vertices. The thick edges indicate a 
solution encoding a clique. For better visibility, only the edges $z_ew_i$ of 
the solution are shown for any edge $e$ and incident vertex $w$.}
\label{fig:red-diamond}
\end{figure}

We prove that the instance to \pname{Multicoloured Clique} has a clique $K$ of 
size $k$, if and only if there is a solution $N$ to the pure $\alpha$-diamond 
$H$ in $G$ with cost at most $4k^2-2k$. Intuitively, such a solution $N$ will 
determine one vertex $w$ of $K$ for each $V_i$, since it can only afford to 
include the $k$ corresponding copies $w_j$ when connecting $r_1$ to $L$ through 
the vertex $y_i$ representing~$V_i$. At the same time $N$ will determine one 
edge $e$ of $K$ for each $E_{ij}$ by connecting $r_2$ to $L$ through one vertex 
$z_e$ for each vertex $z_{ij}$ representing~$E_{ij}$. These vertices $z_e$ are 
connected to the $k-1$ copies $w_j$ with $j>0$ of a vertex $w\in\bigcup_i V_i$ 
in such a way that $e$ must be incident to $w$ in order for the paths from 
$r_2$ in $N$ to reach $L$.

\textbf{Clique $\Rightarrow$ network.} We first show that a solution $N$ in $G$ 
of cost $4k^2-2k$ exists if the clique $K$ exists. For every 
$i\in\{1,\ldots,k\}$ the solution contains the edges $r_1y_i$ and $y_iw_0$, 
where $w$ is the vertex of $K$ in $V_i$. These edges add a cost of $2k$ to $N$. 
We also add all edges $w_0w_j$ for the $k-1$ additional copies $w_j$ with $j>0$ 
of each vertex $w$ of $K$, which adds a cost of $k(k-1)$. For each such copy 
$w_j$ we then connect to the terminal set $L$ by adding the respective edge 
$w_j\ell_{ij}$. Note that this will add an edge incident to each terminal of $L$ 
to $N$ and so $r_1$ is connected to every terminal of~$L$. At the same time the 
last step adds a cost of~$1$ for every terminal of $L$ to $N$, which sums up to 
$k(k-1)$. For all $1\leq i<j\leq k$ we connect $r_2$ to $z_{ij}$ in the solution 
$N$ via the edge $r_2z_{ij}$ at a cost of~$k \choose 2$. The clique $K$ contains 
one edge $e$ from every set $E_{ij}$, and we add the corresponding edges 
$z_{ij}z_e$ to $N$ at an additional cost of~$k \choose 2$. For any such edge $e$ 
the graph $G$ contains an edge $z_ew_j$ for the incident vertex $w\in V_i$ and 
an edge $z_ew_i$ for the other incident vertex $w\in V_j$. We also add these 
respective edges to the solution at a cost of~$2{k\choose 2}$. Since such an 
incident vertex $w\in V_i$ is part of the clique $K$, the respective copy $w_j$ 
is connected to the terminal $\ell_{ij}\in L$ in $N$. Moreover, every copy $w_j$ 
that is part of $N$ can be reached from the vertex $z_e$ in $N$ for the 
corresponding incident edge $e$ to $w$ in $K$. Hence $r_2$ is connected to every 
terminal of $L$ in~$N$, which means that $N$ is a solution to $H$ in $G$ with a 
total cost of $2k+2k(k-1)+4{k\choose 2}=4k^2-2k$.

\textbf{Network $\Rightarrow$ clique.} It remains to prove that any solution $N$ 
to $H$ in $G$ of cost at most $4k^2-2k$ corresponds to a clique $K$ of size $k$ 
in the input instance. If a solution to the pure $\alpha$-diamond $H$ exists 
in~$G$, then all terminals of $L$ are reachable from $r_1$ and from $r_2$ in 
$G$. We define the \emph{reachability set} $Y_v$ of a vertex~$v\in V(G)$ as the 
set of vertices reachable from $v$ by a path in $G$. For each 
$i\in\{1,\ldots,k\}$ the set~$Y_{y_i}$ consists of $y_i$, and, for 
$j\in\{0,\ldots,k\}$ with~$j\neq i$, each $w_j$ with $w\in V_i$ and the 
terminals~$\ell_{ij}\in L$. In particular, the sets $Y_{y_i}$ are disjoint and 
also partition the terminal set $L$. The set $Y_{r_1}$ consists of $r_1$ and the 
union $\bigcup_i Y_{y_i}$. Hence in order for $r_1$ to be connected to every 
terminal of $L$ in~$N$, for each $i\in\{1,\ldots,k\}$ the solution needs to 
include the edge $r_1y_i$ and at least one edge~$y_iw_0$ for some~$w\in V_i$. 
Since a terminal $\ell_{ij}$ is adjacent to the $j$-th copy $w_j$ of every 
vertex~$w\in V_i$, for each $j\neq i$ at least one edge $w_0w_j$ (for various 
$w\in V_i$) and a corresponding edge $w_j\ell_{ij}$ must be included in~$N$. 
These edges contribute a cost of $2k+2k(k-1)$ to $N$. 

Now consider the reachability set $Y_{z_{ij}}$ for some $1\leq i<j\leq k$. It 
consists of $z_{ij}$, all $z_e$ with~$e\in E_{ij}$, the $j$-th copy $w_j$ of 
every vertex $w\in V_i$ incident to edges of $E_{ij}$, the $i$-th copy $w_i$ of 
all vertices $w\in V_j$ incident to edges of $E_{ij}$, and corresponding 
terminals $\ell_{ij}$ and~$\ell_{ji}$. Since all terminals of $L$ are reachable 
from $r_2$ and the sets $Y_{z_{ij}}$ are disjoint, the sets $Y_{z_{ij}}$ 
partition~$L$. The set $Y_{r_2}$ consists of $r_2$ and the union $\bigcup_{i<j} 
Y_{z_{ij}}$, and so for every $1\leq i<j\leq k$ the solution $N$ must contain 
the edge $r_2z_{ij}$ and at least one edge $z_{ij}z_e$ for some $e\in E_{ij}$. 
In order for $r_2$ to connect to $\ell_{ij}$ in~$N$, the solution must also 
contain the edge $z_ew_j$ for some $w\in V_i$ incident to $e\in E_{ij}$. 
Analogously, the solution must also contain the edge $z_ew_i$ for $r_2$ to 
reach $\ell_{ji}$ in $N$ for some $w\in V_j$ incident to some $e\in E_{ij}$. 
These edges contribute a cost of $4{k\choose 2}$ to $N$.

Since all these necessary edges in $N$ sum up to a cost $2k+2k(k-1)+4{k \choose 
2}=4k^2-2k$, they are also the only edges present in~$N$. In particular, for 
each $i\in\{1,\ldots,k\}$ the solution contains exactly one edge $y_iw_0$ for 
some~$w\in V_i$, and therefore also must contain the $2(k-1)$ corresponding 
edges $w_0w_j$ and $w_j\ell_{ij}$ for~$j\neq i$. On the other hand, for every 
$1\leq i<j\leq k$ the solution contains exactly one edge $z_{ij}z_e$ for 
some~$e\in E_{ij}$, and therefore also must contain the corresponding edge 
$z_ew_j$ for the incident vertex $w\in V_i$ to $e$ and the corresponding edge 
$z_ew_i$ for the incident vertex $w\in V_j$ to $e$. Hence the solution $N$ 
corresponds to a subgraph of the instance of \pname{Multicoloured Clique} with 
$k$ pairwise adjacent vertices, i.e., it is a clique $K$ of size $k$.
\end{proof}

The reduction for the case when the pattern is a flawed $\alpha$-diamond is 
essentially the same as the one for pure $\alpha$-diamonds, as we show next.

\begin{lem}\label{lem:red-fdiam}
If $\mc{H}$ is the class of all flawed out-diamonds, then $\mc{H}$-DSN is 
\textup{W[1]}-hard parameterized by the number of terminals. The same holds if 
$\mc{H}$ is the class of all flawed in-diamonds.
\end{lem}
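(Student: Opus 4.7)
The plan is to reuse the reduction from \pname{Multicoloured Clique} given in the proof of \autoref{lem:red-pdiam} with only a minor modification. I will focus on the flawed out-diamond case; the flawed in-diamond case follows by the symmetric construction (reversing all edge directions) exactly as in the previous lemma.

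Starting from the graph $G$ and pattern $H$ built in \autoref{lem:red-pdiam}, where $H$ is a pure $k(k-1)$-out-diamond with roots $r_1,r_2$ and leaves $L=\{\ell_{ij}\}$, I would augment both as follows. Introduce into $G$ a new terminal $x$ together with two new edges $xr_1$ and $xr_2$ of cost $0$; these are the only edges incident to~$x$. Correspondingly, modify the pattern by adding the vertex $x$ and the pattern edges $xr_1,xr_2$, so that $H$ becomes a flawed $k(k-1)$-out-diamond. Keep the cost threshold at $4k^2-2k$.

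For the clique-to-network direction, given a clique $K$, build exactly the same solution $N$ as in the proof of \autoref{lem:red-pdiam} and additionally include the zero-cost edges $xr_1$ and $xr_2$. These supply the two missing $x\to r_1$ and $x\to r_2$ paths required by the extra edges of $H$, while not affecting the cost, so $N$ is a feasible solution to the flawed-diamond pattern of cost $4k^2-2k$. Conversely, for the network-to-clique direction the key observation is that in $G$ the vertex $x$ has no incoming edges and only the two outgoing edges $xr_1,xr_2$; hence any feasible solution $N$ must contain both of these edges to realize $x\to r_1$ and $x\to r_2$ paths, and moreover no path of $G$ that does not start at $x$ can use the vertex~$x$. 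Consequently, removing $x$ and its two incident edges from $N$ yields a subgraph of the original graph from \autoref{lem:red-pdiam} that is a solution to the pure out-diamond pattern with cost still bounded by $4k^2-2k$. The clique-extraction analysis of the previous lemma then produces a clique of size~$k$, completing the reduction.

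The main thing to verify is that the addition of $x$ does not introduce any shortcut that could break the structural arguments of \autoref{lem:red-pdiam}: since $x$ has no incoming edges, it cannot lie on any path between two other terminals, so all reachability sets $Y_{y_i}$ and $Y_{z_{ij}}$ used in the previous analysis remain exactly as before. This is the only nontrivial point, and once it is observed the reduction transfers verbatim. The flawed in-diamond case is handled by the same idea applied to the direction-reversed construction, adding edges $r_1x$ and $r_2x$ of cost $0$ instead.
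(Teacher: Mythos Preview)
Your proposal is correct and follows essentially the same approach as the paper; the only difference is that the paper assigns cost~$1$ to the two new edges (and raises the threshold to $4k^2-2k+2$), whereas you assign cost~$0$ and keep the threshold, which is an immaterial variation. One minor remark: the assertion that any feasible solution must contain \emph{both} edges $xr_1$ and $xr_2$ does not follow from ``$x$ has only two outgoing edges'' alone---it also requires that neither of $r_1,r_2$ can reach the other in $G$ (which is true here and is exactly what the paper invokes)---but with your cost-$0$ choice this claim is not actually load-bearing, since the subsequent removal-of-$x$ step works regardless.
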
 
\begin{proof}%
We only describe the case when $H$ is an out-diamond, as the other case is 
symmetric. The reduction builds on the one given in \autoref{lem:red-pdiam}: we 
simply add the additional terminal $x$ of $H$ to $G$, and connect it to $r_1$ 
and $r_2$ in $G$ by edges $xr_1$ and $xr_2$ with cost $1$ each. Given a clique 
of size $k$ in an instance to \pname{Multicoloured Clique}, consider the network 
$N$ in $G$ of cost $4k^2-2k$ suggested in \autoref{lem:red-pdiam}. We add the 
edges $xr_1$ and $xr_2$ to $N$, which results in a solution of cost $4k^2-2k+2$ 
for the flawed $\alpha$-diamond~$H$. On the other hand, any solution to $H$ must 
contain a path from $x$ to $r_1$ and from $x$ to~$r_2$. Since there is no path 
from $r_1$ to $r_2$, nor from $r_2$ to $r_1$ in the constructed graph~$G$, any 
solution to $H$ must contain both the edge $xr_1$ and the edge $xr_2$. Thus the 
minimal cost solution to $H$ in $G$ has cost $4k^2-2k+2$ and corresponds to a 
clique of size $k$ in the \pname{Multicoloured Clique} instance, as argued in 
the proof of \autoref{lem:red-pdiam}.
\end{proof}

Given the three reductions above, we can now prove \autoref{thm:main-hard}, 
based on the additional reduction given in \autoref{lem:closure}. 

\begin{proof}[Proof (of \autoref{thm:main-hard})]
  Let $\mc{H}'$ be the closure of $\mc{H}$ under identifying vertices
  and transitive equivalence. By assumption, $\mc{H}$ is not in
  $\mc{C}^*_{\lambda,\delta}$ for any $\lambda$ and $\delta$, and this
  is also true for the superset $\mc{H}'$ of $\mc{H}$. Thus
\autoref{thm:char} implies that $\mc{H}'$ fully contains one of five
classes: the class of all directed cycles, pure in-diamonds, pure
out-diamonds, etc. Suppose for example that $\mc{H}'$ contains the
class of all directed cycles, which we will denote by $\mc{H}''$. By
\autoref{lem:red-SCC}, we know that $\mc{H}''$-DSN is W[1]-hard and
$\mc{H}''$ is obviously decidable. Thus we can invoke
\autoref{lem:closure} to obtain that there is a parameterized
reduction from $\mc{H}''$-DSN to $\mc{H}$-DSN, and hence we can conclude
that the latter problem is also W[1]-hard. The proof is similar in the
other cases, when $\mc{H}'$ contains, e.g., every pure in-diamond or
every flawed in-diamond: then we use \autoref{lem:red-pdiam} or
\autoref{lem:red-fdiam} instead of \autoref{lem:red-SCC}.
\end{proof}

\printbibliography

\end{document}